\theoremstyle{definition}
\theoremstyle{plain}
\theoremstyle{definition}
\theoremstyle{definition}
\theoremstyle{plain}
\theoremstyle{definition}
\newtheorem{definition}{Definition}
\theoremstyle{definition}
\theoremstyle{plain}
\newtheorem{theorem}{Theorem}
\theoremstyle{plain}
\theoremstyle{plain}
\newtheorem{lemma}{Lemma}
\theoremstyle{plain}
\newtheorem{corollary}{Corollary}
\theoremstyle{definition}
\theoremstyle{remark}
\theoremstyle{definition}
\theoremstyle{definition}
\theoremstyle{definition}
\theoremstyle{plain}
\newtheorem*{thm}{Theorem}
\theoremstyle{plain}
\theoremstyle{plain}
\newtheorem{claim}{Claim}
\newcommand{\red}[1]{{\color{red}#1}}
\newcommand{\blue}[1]{{\color{blue}#1}}
\newcommand{\green}[1]{{\color{green}#1}}
\DeclareMathOperator*{\argmax}{argmax}
\newcommand{\set}[1]{\left\{#1\right\}}
\newcommand{\strat}[1]{\ensuremath{s_{#1}}}
\newcommand{\stratBold}[1]{\boldsymbol{s}_{#1}}
\newcommand{\stratSpace}{\boldsymbol{S}}
\newcommand{\nA}{n_{A}}
\newcommand{\nB}{n_{B}}
\newcommand{\WelfareProblem}{{\sc Welfare Maximisation}[$\Fcal$]}
\newcommand{\PotentialProblem}{{\sc Potential Maximisation}[$\Fcal$]}
\newcommand{\pay}[1]{\gamma_{#1}}
\newcommand{\NP}{\textsf{NP}}
\newcommand{\one}{a}
\newcommand{\two}{b}
\newcommand{\sbf}[1]{\boldsymbol{s}_{#1}}
\newcommand{\welf}[1]{\mathcal{W}{#1}}
\newcommand{\pot}{\Phi}
\newcommand{\Real}{\mathbb{R}}
\newcommand{\Fcal}{\mathcal{F}}
\newcommand{\Gcal}{\mathcal{G}}
\newcommand{\Ocal}{\mathcal{O}}
\newcommand{\MWOP}{Maximum Weighted Orgraph Partition}
\begin{document}

%\title{ \huge Coordination problems are ``easier'' than anti-coordination problems}
\title{ \huge Some coordination problems are harder than others}
\author{Argyrios Deligkas\footnote{Computer Science Department, Royal Holloway University of London.} \and Eduard Eiben\footnote{Computer Science Department, Royal Holloway University of London.} \and Gregory Gutin\footnote{Computer Science Department, Royal Holloway University of London.} \and Philip R.\ Neary\footnote{Economics Department, Royal Holloway University of London.} \and Anders Yeo\footnote{IMADA, University of Southern Denmark.} $\,^{,}$\footnote{Department of Mathematics, University of Johannesburg.}}

%\thanks{For comments we thanks Vince Crawford, Dan Friedman, David Levine, Joel Sobel, and Peyton Young.}

\date{\today}

\maketitle

%%%% ----------------------------------------------------------------------
%\bigskip
%%%% ----------------------------------------------------------------------

\linespread{1.3}
%\doublespacing

%%% ----------------------------------------------------------------------
%%% ----------------------------------------------------------------------
%%% ----------------------------------------------------------------------
%%% ----------------------------------------------------------------------
%%% ----------------------------------------------------------------------
%%% ----------------------------------------------------------------------

\begin{abstract}
\noindent
% In this paper 
In order to coordinate players in a game must first identify a target pattern of behaviour.
In this paper we investigate the difficulty of identifying prominent outcomes in two kinds of binary action coordination problems in social networks:\ pure coordination games and anti-coordination games.
% In this paper we investigate the difficulty of identifying prominent outcomes in two types of binary action coordination problem in social networks:\ pure-coordination games (wherein actions are strategic complements) and anti-coordination games (those in which actions are strategic substitutes).
%binary-action coordination problems in social networks.
%We focus on two types of coordination problem:\ pure-coordination games (wherein actions are strategic complements) and anti-coordination games (wherein actions are strategic substitutes).
%We consider three objectives:\ strategy profiles that (a) maximise welfare, (b) maximise potential, and (c) are the welfare-maximising Nash equilibrium.
For both environments, we determine the computational complexity of finding a strategy profile that
(i) maximises welfare,
(ii) maximises welfare subject to being an equilibrium,
and (iii) maximises potential.
%We consider three objectives in the context of simple binary-action polymatrix games:\ (a) maximising welfare, (b) maximising potential, and (c) finding the welfare-maximising Nash equilibrium.
We show that the complexity of these objectives can vary with the type of coordination problem.
Objectives (i) and (iii) are tractable problems in pure coordination games,
%but such computations
but for anti-coordination games are {\sf NP}-hard.
Objective (ii), finding the ``best'' Nash equilibrium, is {\sf NP}-hard for both. % types of coordination problem.
Our results support the idea that environments in which actions are strategic complements (e.g., technology adoption) facilitate successful coordination more readily than those in which actions are strategic substitutes (e.g., public good provision).
% more likely to occur when actions are strategic complements rather than strategic substitutes.
% Our results support the idea that successful coordination is more likely to be observed in pure coordination games than in anti-coordination ones.
%Our results support the idea that, at least algorithmically, successful coordination is more likely to be observed in a pure coordination games as opposed to an anti-coordination ones.
% is never more likely than in a pure coordination one.
%%actions are strategic complements.
%These conclusions follow as corollaries to the solution of a new graph partition problem, called NAME, that encompasses many existing graph-theoretic problems and introduces some new ones.

\vspace{.5in}

\paragraph{Keywords:}
coordination problems; polymatrix games; potential and welfare maximisers; complexity.

\end{abstract}

%\begin{abstract}
%\noindent
%This papers investigates the tractability of computing ``optimal'' outcomes (e.g., welfare-maximising outcomes, potential maximising equilibria, welfare maximising equilibria, etc.) in large population binary action network games.
%We prove a dichotomy:\ optimal outcomes in games of strategic complements can be computed efficiently, whereas finding optimal outcomes in games of strategic substitutes is computationally hard.
%This supports the idea that settings in which benefit comes from ``successful anti-coordination'' are the most difficult form of coordination problems.
%%This supports the idea that successfully ``anti-coordination'' problems are harder to solve than coordination problems, where 
%%actions are strategic complements.
%Our result is based on the solution to a novel graph partition problem, termed \MWDP{}, that includes many existing graph-theoretic problems and introduces some new ones.
%\end{abstract}

%%% ----------------------------------------------------------------------
%%% ----------------------------------------------------------------------
%%% ----------------------------------------------------------------------
%%% ----------------------------------------------------------------------
%%% ----------------------------------------------------------------------
%%% ----------------------------------------------------------------------

\newpage

%%% ----------------------------------------------------------------------
%%% ----------------------------------------------------------------------
%%% ----------------------------------------------------------------------
%%% ----------------------------------------------------------------------
%%% ----------------------------------------------------------------------
%%% ----------------------------------------------------------------------

\setstretch{1.3}

\section{Introduction}\label{sec:Intro}

%%% ----------------------------------------------------------------------
%%% ----------------------------------------------------------------------

\subsection{Motivation}

What exactly does it mean to say that a group of economic agents successfully coordinated?
% Open any game theory journal and one is quite likely to encounter a related expression.
% centred on an issue of this sort.
% And yet
Despite the frequency with which issues related to coordination are discussed -{}- just open any game theory journal -{}- finding precise (i.e., mathematical) definitions is not so easy.
Let us spare the reader any suspense.
We do not propose a formal definition of what it means for economic agents ``to coordinate'' in this paper.
% If only.
%and so we won't propose a meaning.
% and so at no point do we propose a definition.
% However, definitions are built from the ground up.
% Putting semantics to one side, 
% We do not know how.
% That being said, we contend that any such definition must touch on, as a key ingredient, the identification of strategy profiles with certain properties.
That being said, we contend that a key component of any such definition will be the identification of a strategy profile (at least one) with certain properties.
% That being said, we contend that any such definition must touch on the ability to identify strategy profiles with certain properties.
% While we will not propose a formal definition for what it means ``to coordinate'' in this paper, we are confident that any such definition must include the identification of strategy profiles with certain properties.
%While we will not propose a formal definition what it means to ``coordinate'' in this paper, we are confident that a key component of any such definition would be the identification of at least one strategy profile with certain properties.
% (and that they are are away that others will identify them as focal, and that others and also deem it focal).
% So while the ability to identify outcomes is far from the whole story, it is enough to raise some important questions. 
% So the ability to identify outcomes is enough to raise some important questions. 
While identification is not the whole story, it raises an obvious question:\ given a coordination problem, is computing an outcome with certain properties always tractable?
If yes, then great.
But if no, is it sensible to suppose that economic agents, even those blessed with other-worldly levels of rationality as is commonly presumed, can identify such an outcome in a reasonable time frame?
%\footnote{Moreover, it is boundedly rational agents who partake in real world coordination problems.}
% If even the most powerful computers struggle to identify certain patterns of behaviour, what hope has homo oeconomicus when not firing on all cylinders?}
%\footnote{The issue of an individual's ability to compute outcomes in advance seems timely given that much of modern behavioural economics is about scaling back, and most definitely not cranking up, the processing power attributed to rational agents. Moreover, it is boundedly rational economic agents who engage in real world coordination problems. If even the most powerful computers struggle to identify certain patterns of behaviour, what hope has homo oeconomicus when not firing on all cylinders?}
And if identifying a particular pattern of behaviour is computationally challenging, ought it be viewed as a plausible prediction?
We focus on the two fundamental types of binary-action coordination problem in social networks:\ pure coordination games, in which it is beneficial to act the same as others (actions are {\it strategic complements}), and anti-coordination games, where it pays to differentiate your behaviour from that of your neighbours (actions are {\it strategic substitutes}).
% We consider the computational complexity of identifying prominent outcomes in  binary-action coordination problems in social networks.
% We focus on two types of coordination problem:\ pure coordination games, in which it is beneficial to act the same as others (actions are {\it strategic complements}), and anti-coordination games, where it pays to differentiate your behaviour from that of your neighbours (actions are {\it strategic substitutes}).
% Since we do not have, nor do we have a proposal for, what it means to ``solve'' a coordination problem, we have decided to focus on the tractability of identifying the following three commonly-studied patterns of behaviour:
%Since we do not have a proposal for what it means to ``solve'' a coordination problem, we have focused on the tractability of identifying the following three commonly-studied patterns of behaviour:
For both kinds of coordination problem, we consider the computational complexity of identifying the following three commonly-studied outcomes:
% patterns of behaviour:
% \textcolor{red}{I would change "the" with "a", since in principle there can exist more than one profiles with this property?}
\begin{enumerate}[label=(\roman*)]
\item
a strategy profile that maximises welfare,
\item
a welfare maximising Nash equilibrium, and
%\footnote{These are not universal, but when every two-player game is a potential game then so is the polymatrix game.}
\item
a strategy profile that maximises the potential \citep{ShapleyMonderer:1996:GEB}.
\end{enumerate}

The profile in (i) is that which a social planner might be expected to favour.
Unfortunately for a planner, such an outcome lacks the glue of equilibrium and so is likely to unravel.
For this reason we also consider the profile in (ii).
This describes a pattern of behaviour that a planner might view as optimal subject to a stability requirement.
The profile in (iii), when it exists, describes the robust outcome in games of incomplete information \citep{Ui:2001:E,OyamaTakahashi:2020:E} and that deemed the most likely long run outcome in models of stochastic evolution \citep{Blume:1993:GEB, Young:2001:}.\footnote{A potential maximising outcome must be a pure strategy Nash equilibrium. It also possesses other nice properties. For example, potential maximisers are uniquely absorbing \citep{HofbauerSorger:1999:JET}, select risk-dominant outcomes in games on fixed networks \citep{Peski:2010:JET} and games on random networks \citep{Peski:2021:arXiv}, and can be useful for equilibrium refinements \citep{Carbonell-NicolauMcLean:2014:TE}. Potential games appear frequently in applied work -- to study the effects of price discrimination policies in oligopolies \citep{ArmstrongVickers:2001:RAND}, the impact of uncertainty on technology adoption \citep{OstrovskySchwarz:2005:RAND}, and issues of collective action \citep{MyattWallace:2009:EJ} -- and the potential maximiser is often invoked as an ad hoc equilibrium selection device.}
% \textcolor{red}{shall we cite something here? It feels like something is missing}
%(We impose mild assumptions such that our models admit a potential.)

Given two kinds of coordination problem and three strategy profiles of interest, our goal is to classify the complexity of six different objects.
%(each of the three outcomes in either kind of coordination problem).
% we have six objects to report on.
%Considering both pure coordination games and their anti-coordination counterparts, coupled with the three objectives above, we have six objects to consider.
The main results, an overview of which is provided in Table~\ref{table:classification} in Section~\ref{ssec:resultsSummary}, may be summarised, in very broad strokes, as saying that the complexity of identifying certain outcomes in pure coordination games is tractable whereas for anti-coordination games it is not.
%That is, it seems that pure coordination games are more easily resolved than anti-coordination games.
% Whether and how these findings carry over to real life economic settings is a question for experimental economists.
% But, at least when viewed through an algorithmic lens, games of anti-coordination appear more difficult to resolve than those of pure-coordination.

% \red{Given the significance of coordination problems in economic life, an understanding of what renders some easier to solve than others is not only of theoretical interest but would be of genuine practical importance.}\footnote{The adoption of technological standards \citep{KatzShapiro:1985:AER,FarrellSaloner:1985:RJE,Arthur:1989:EJ}, the setting of macroeconomic policy \citep{CooperJohn:1988:QJE}, and determining the drivers of bank runs \citep{DiamondDybvig:1983:JPE}, are but a few of many economic systems wherein coordination plays a key role.}

%We believe that our work has philosophical implications that are not unimportant.

We are not the first to consider the complexity of finding outcomes in strategic games.\footnote{A quick review of the literature is provided in Section~\ref{ssec:literatureComplexity}.}
We focus on this issue in coordination problems for two reasons.
The first is to highlight a potential discrepancy in how difficult a game is for players to understand as compared with how difficult it is to ``solve'' conditional on understanding.
That is, just because a game is uncomplicated does not guarantee that playing it, let alone solving it, must be.
While this distinction is understood for some environments (e.g., board games), at least in our opinion, the lines become blurred when it comes to coordination problems.
% In the board game chess, for example, the rules can be learned quickly but optimal play the game remains unattainable.
% However, at least in our opinion, the lines become blurry when it comes to coordination problems where it is often implicitly assumed that the set of candidate outcomes can be identified by one and all.
% First is to highlight an issue that is not often discussed:\ there may exist a discrepancy in how difficult a game is for players to understand as compared with how difficult it is to ``solve'' conditional on understanding.
% The possibility for misalignment of these two features seems particularly stark for coordination problems.
% It is often implicitly assumed that the set of candidate outcomes in a coordination problem can be identified by one and all.
% that typically they are viewed as so easy to solve that the standard question of interest is which equilibrium the population is most likely to end up at.
% But it is taken as given that the set of candidate outcomes can be identified by one and all.
% But implicit in this is that the set of candidate outcomes can be identified by one and all.
 % where it is often implicitly assumed that the set of candidate outcomes can be identified by one and all
Perhaps nowhere is this more apparent than in the classic work of Thomas Schelling \citep{Schelling:1960:}, who goes so far as to describe certain equilibria as {\it focal}.\footnote{Not only can the set of candidate outcomes be identified by one and all, but each individual notes that there is something inherently salient about candidate outcome, correctly conjectures that others have also noted this salience, correctly conjectures that others have correctly conjectured that others have also noted this salience, and so on. With everyone on the same page, each player then plays their part and the focal equilibrium emerges.}
A discussion of which equilibrium the population is most likely to end up at only holds water if the set of candidate outcomes is in some sense obvious.
And as we will show, this need not always be the case.
% So much so that discussion often moves quickly immediately to the issue of which equilibrium the population is most likely to end up at.
% Such a resolution only works if all players can identify the set of candidate outcomes.
% This only holds because is standard to take as given that all players can identify the set of candidate outcomes.
% -- thereby assuming that all agents collectively agree on the set of relevant outcomes and somehow .
% (recall the reference to Schelling's notion of a focal point in the opening paragraph).
%\marginpar{\tiny{\PN{Maybe move philosophical remarks section to here.}}}
% But to reiterate, just because a game is uncomplicated does not guarantee that solving it must be.
The two kinds of coordination problem that we consider in this paper (see Definitions~\ref{def:pure} and \ref{def:anti}) are uncomplicated games, and yet computing certain outcomes in a reasonable time frame is not guaranteed.\footnote{This stands in contrast to many existing intractability results on equilibrium computation since often these rely on constructing games that are difficult to ascribe any real world meaning to. One example that features prominently in this paper is the language game of \cite{Neary:2011:MGG,Neary:2012:GEB}. This simple coordination game has been tested experimentally \citep{LimNeary:2016:GEB} and has even been used to introduce game theory to high school students at university open days.
% It is simple. 
And yet, despite how undemanding it is both to explain and to understand, it turns out that identifying the welfare optimal Nash equilibrium is computationally intractable.}

% results is in contrast to many of the negative results on the complexity of determining outcomes in general games as they often rely on constructing games that are difficult to ascribe any real world meaning to.}

% So, focal? Maybe not.

% We have chosen to focus on coordination problems for two reasons.
% The first is practical.

The second reason for our focus on coordination problems is their importance.
% on the complexity of 
Put simply, it is difficult to overstate the role that coordination problems play in everyday life. 
The adoption of technological standards \citep{KatzShapiro:1985:AER,FarrellSaloner:1985:RJE,Arthur:1989:EJ}, the setting of macroeconomic policy \citep{CooperJohn:1988:QJE}, determining the drivers of bank runs \citep{DiamondDybvig:1983:JPE}, and the provision of public goods \citep{Hellwig:2003:RES}, are but a few of the many economic systems wherein coordination is key.
There are immediate benefits to one and all when the right collections of mutually consistent actions appear, but because population behaviour is decentralised the emergence of these ``good'' outcomes is far from guaranteed.
% Given their pervasiveness, an understanding of what renders some coordination problems some easier to solve than others is not only of theoretical interest but would be of genuine practical importance.
Given this, an understanding of what renders some coordination problems easier to solve than others would be of genuine practical importance.
% For example, it might allow an external third party, say a social planner, to amend the setting at minimum cost so as to facilitate the emergence of a ``superior'' outcome.
% Harnessing t
The ability to facilitate the emergence of a ``good'' outcomes could lead to a dramatic reductions in the cost of coordination failure and in turn lead to potentially enormous gains in welfare.
Our results indicate that, at least when viewed through an algorithmic lens, games of anti-coordination appear more difficult to resolve than pure coordination games.
% So expecting the efficient level of public good provision may be less likely than the optimal configuration of operating systems to emerge.
Whether and how these findings carry over to real life economic settings is a question for experimental economists, but it is our hope that our theoretical results will prove complementary to the large experimental literature devoted to understanding coordination.\footnote{While the potential discrepancy between pure coordination games and anti-coordination games has not been considered, certain features of coordination problems have been documented as relevant. Clearly the relative attractiveness of the outcomes will matter. Consequently, many of the experimental studies have considered environments that have the flavour of Rousseau's stag hunt wherein a tension exists between a risky equilibrium that is optimal and another equilibrium that is suboptimal but safe. \cite{Van-HuyckBattalio:1990:AER} observe that size of population is an important factor in such settings:\ a robust finding is that the smaller the group the higher average payoffs achieved. (\cite{Crawford:1991:GEB} provides an evolutionary explanation for these findings.) Perhaps closer in spirit to our work, \cite{KearnsSuri:2006:S} and \cite{McCubbinsPaturi:2009:} consider colouring problems, i.e., anti-coordination games, on a variety of different network structures and conclude that certain topologies promote successful colouring more readily than others.}

The remainder of the introduction presents an example of the two kinds of coordination problem that we consider followed by a brief overview of our results.
%In Section~\ref{sec:literature} we discuss how our work fits in the literature.
Section~\ref{sec:environment} formally defines the environments and the strategy profiles of interest.
% we are interested in identifying.
Section~\ref{sec:results} states the main results and sketches how one might, for example, compute the potential maximiser of a pure coordination game.
To our knowledge, this is the first example where the potential maximiser is non trivial and yet can by computed efficiently.
% To the best of our knowledge, this result is the first that allows an analyst to identify the potential maximising equilibrium when simple eyeballing would not yield the answer.
All proofs are in the Appendix.

\subsection{Pure coordination games and anti-coordination games}\label{ssec:examples}

%The coordination problems that we consider are standard.
%Specifically,

We consider binary action polymatrix games (see Definition~\ref{def:poly} for the formal statement), a subclass of network games first introduced in \cite{Janovskaya:1968:}.
Games in this class are easy to describe:\
a population of players live on a graph where vertices represent players and edges represent pairwise connections;
along every edge is a binary action two-player game;
every player's utility is the sum of payoffs earned with each neighbour where the same action must be used with each.\footnote{Without being specifically referred to as such, polymatrix games appear throughout economics. For example, the stage game of many classic models in stochastic evolutionary game theory is a binary action polymatrix game \citep{KandoriMailath:1993:E,Ellison:1993:E,Morris:2000:RES,Peski:2010:JET,KreindlerYoung:2013:GEB,NewtonSercombe:2020:GEB}.}

We will consider pure coordination polymatrix games (Definition~\ref{def:pure}), those where every two player game is a game of pure coordination (actions are strategic complements), and anti-coordination polymatrix games (Definition~\ref{def:anti}), those in which every two player game is one of anti-coordination (actions are strategic substitutes).

While this paper and its results pertain to {\it all} binary action pure/~anti-coordination polymatrix games, let us now consider an example of each kind.
Suppose there are two societies that are identical both in their size and in their network structure.
That is, for every player in one society, there is a ``mirror image'' player in the second society who has exactly the same neighbourhood composition.
The societies are differentiated only by the manner in which neighbouring players interact.
%While the societies are identical in terms of who interacts with who, the manner of interaction is different for each.
%Specifically, in the first society, any pair of connected players $i$ and $j$ interact through $G_{ij}^c$ below in Figure~\ref{fig:binaryGames}, while in the second society, any pair of connected players $i$ and $j$ interact via $G_{ij}^{s}$ 
Specifically, in the first society, each player $i$ interacts with every neighbour $j$ through $G_{ij}^c$ below in Figure~\ref{fig:binaryGames},
% (meaning the value $\pay{j}$ need not be the same for every neighbour of $i$), 
while in the second society, player $i$ interacts with each neighbour $j$ via $G_{ij}^{s}$ in Figure~\ref{fig:binaryGames}.
%while in the second society, any pair of connected players $i$ and $j$ interact via $G_{ij}^{s}$ 
%below in Figure~\ref{fig:binaryGames} 
We assume that $\pay{i}, \pay{j} \in [0,1]$ in both $G_{ij}^c$ and $G_{ij}^s$.
(The superscript ``c'' in $G_{ij}^{c}$ stands for complement while the superscript ``s'' in $G_{ij}^{s}$ stands for substitute.)

\begin{figure}[htp!]%\hspace*{\fill}%
\hspace*{\fill}\begin{game}{2}{2}[${i}$][${j}$]
& $a$ & $b$\\
$a$ &$\pay{i},\pay{j}$ & $0,0$\\
$b$ &$0,0$ &1$-\pay{i}$, 1$-\pay{j}$
\end{game}\hspace*{\fill}
\begin{game}{2}{2}[${i}$][${j}$]
& $a$ & $b$\\
$a$ &0,0 & $\pay{i}$, $1-\pay{j}$\\
$b$ & $1-\pay{i}$, $\pay{j}$ &$0,0$
\end{game}\hspace*{\fill}
\caption[]{$G_{ij}^c$ and $G_{ij}^{s}$.}
\label{fig:binaryGames}
\end{figure}

% Given the payoff structure of each pairwise interaction,
Games of the form $G_{ij}^c$ have two strict equilibria, both profiles along the main diagonal, while those of the form $G_{ij}^s$ have as strict equilibria the two profiles located on the off diagonal.\footnote{We normalise payoffs at all non-equilibrium outcomes to be zero. We do this for expositional purposes only; our main results do not require it.}

%\marginpar{\tiny{\PN{Argyris: this is not quite right. I think these games are more accurate described as ``locally anonymous''. We can say this or we can just delete this reference}}}
%\marginpar{\tiny{\PN{perhaps no need for this term since we only use it once again.}}}

% Observe that the two environments above are extremely close in spirit.
% Both can be described as {\it locally anonymous games}, in that while players may have different utility functions, 

Observe that the two environments above are extremely close in spirit.
In each, a player's utility depends only on her own choice of strategy and the number of neighbours playing each strategy (but, importantly, {\it not} on the identity of the neighbouring players playing each strategy).\footnote{\cite{Blonski:1999:GEB,Blonski:2000:JME} considers {\it anonymous games}. Both environments we consider are ``locally'' anonymous.}
% \footnote{Given the network, both envio have  games are {\it local} variants of {\it anonymous games} \cite{Blonski:1999:GEB,Blonski:2000:JME} considers {\it anonymous games}. Given the }
Building on this, let us zoom in on two players located at the corresponding vertex of either network and let us consider the strategic environment faced by each.
For this pair of players, optimising behaviour is described by an almost identical {\it threshold rule}.\footnote{The pure coordination polymatrix game where each pairwise interaction has the structure of $G_{ij}^c$ is the {\it threshold model} of \cite{NearyNewton:2017:JMID}. This moniker was chosen as it is the polymatrix game variant of Granovetter's classic threshold model from sociology \citep{Granovetter:1978:AJS}.}
In the pure coordination environment, player $i$'s best response is:\ choose action $b$ whenever at least fraction $\pay{i}$ of neighbours choose action $b$, and choose action $a$ otherwise. 
In the anti-coordination environment, player $i$'s best response is:\ choose action $b$ whenever at least fraction $\pay{i}$ of neighbours choose action $a$, and choose action $b$ otherwise. 

%(``c'' is for complements and ``s'' is for substitutes.)

Yet despite how similar optimal behaviour is in both environments, it turns out that one environment is more ``complex'' than the other.
%it is not immediately obvious that one of two is more ``complex'' than the other.
%And yet it turns out that, in a formal sense, there is a discrepancy.
Here is an example of one such discrepancy:\ both $G_{ij}^c$ and $G_{ij}^{s}$ are potential games, and when every two player game of a polymatrix game is a potential game, then the polymatrix game is too.
%inherit this property.\
A well-studied object in potential games is the potential maximising outcome.
% \footnote{The potential maximising outcome must be a pure strategy Nash equilibrium and possesses many other nice properties. For example, potential-maximizing  equilibria are stochastically stable under logit dynamic \citep{Blume:1993:GEB}, can be uniquely absorbing \citep{HofbauerSorger:1999:JET}, select risk-dominant outcomes in games on random networks \citep{Peski:2021:arXiv}, are robust in games with incomplete information \citep{Ui:2001:E}, and can be useful for equilibrium refinements \citep{Carbonell-NicolauMcLean:2014:TE}. Potential games appear frequently in applied work -- to study the effects of price discrimination policies in oligopolies \citep{ArmstrongVickers:2001:RAND}, the impact of uncertainty on technology adoption \citep{OstrovskySchwarz:2005:RAND}, and issues of collective action \citep{MyattWallace:2009:EJ} -- and the potential maximiser is often invoked as an ad hoc equilibrium selection device.}
It turns out that computing the potential maximising outcome in the pure coordination polymatrix game is a tractable problem, whereas computing the analogous outcome in the anti-coordination environment is not (formally, it is {\sf NP}-hard).
% in the latter.

\subsection{Summary of main results}\label{ssec:resultsSummary}

A classification of the tractability of each of the six objectives (three kinds of strategy profile for each of the two kinds of coordination problem) is given in Table~\ref{table:classification}.

% \marginpar{\tiny{\PN{I will find out how to insert a number into each of the cells for easy reference.}}}
\begin{table}[htp]
\caption{Complexity of each objective.}
\begin{center}
\begin{tabular}{|c||c|c|}
\hline
& pure coordination & anti-coordination \\
\hline
\hline
\multirow{3}{*}{maximising welfare} & tractable &  {\sf NP}-hard \\ 
& Corollary~\ref{cor:pure-coord} & Corollary~\ref{cor:anti-coord} \\ 
& & \cite{Bramoulle:2007:GEB} and \cite{CaiDaskalakis:2011:} \\
\hline 
\multirow{3}{*}{optimal Nash equilibrium} & {\sf NP}-hard &  {\sf NP}-hard \\ 
& Corollary~\ref{cor:max_NE-hard} &  \\ 
&  & \cite{CaiDaskalakis:2011:} \\
\hline 
\multirow{3}{*}{maximising potential} & tractable &  {\sf NP}-hard \\ 
& Corollary~\ref{cor:pot-pure-coord} & Corollary~\ref{cor:pot-anti-coord} \\ 
&  & \cite{Bramoulle:2007:GEB} and \cite{CaiDaskalakis:2011:} \\
\hline 
%maximising welfare & polynomial & {\sf NP}-hard \\
%maximising potential & polynomial & {\sf NP}-hard \\
%optimal Nash equilibrium & {\sf NP}-hard & {\sf NP}-hard \\
%\hline
\end{tabular}
\end{center}
\label{table:classification}
\end{table}%

% The results for anti-coordination polymatrix games exist already.
% \cite{Bramoulle:2007:GEB} considers a binary action anti-coordination polymatrix game where every pairwise interaction is of the form $G_{ij}^s$ in Figure~\ref{fig:binaryGames}, with the additional constraint that symmetry is imposed, i.e., $\pay{i} = \pay{} = \frac{1}{2}$, for every player $i$.
% He shows that identifying the potential maximising and welfare maximising outcomes is equivalent to finding the maximum cut in a graph ({\sc Maximum Cut} is a well-known computationally hard problem).
% Section 4 of \cite{CaiDaskalakis:2011:} considers the identical environment model and showed the same.

%showed that identifying the potential maximising and welfare maximising outcomes in a binary action anti-coordination polymatrix game where every two-player game is symmetric, i.e., where $\pay{i} = \pay{} = \frac{1}{2}$, for every player $i$, is equivalent to finding the maximum cut in a graph ({\sc Maximum Cut} is a well-known computationally hard problem).
%\cite{CaiDaskalakis:2011:} considered the same model and showed the same.
%They also confirmed that finding {\it any} pure Nash equilibrium is {\sf NP}-hard in the same class of game.

%The results in cells 1, 3, and 5 concern pure coordination polymatrix games. 
Observe that positive results exist for both welfare maximisation and potential maximisation in pure coordination polymatrix games.
While the complexity of computing the corresponding profiles for anti-coordination polymatrix games have already been resolved,\footnote{\cite{Bramoulle:2007:GEB} considers a binary action anti-coordination polymatrix game where preferences are homogeneous:\ every pairwise interaction is of the form $G_{ij}^s$ from Figure~\ref{fig:binaryGames}, with $\pay{i} = \pay{} = \frac{1}{2}$ for every player $i$.
\citeauthor{Bramoulle:2007:GEB} shows that identifying the potential maximiser and the welfare maximiser for this simple set up is equivalent to finding the maximum cut in a graph ({\sc Maximum Cut} is a well-known computationally hard problem).
Section 4 of \cite{CaiDaskalakis:2011:} shows that finding {\it any} Nash equilibrium (pure or mixed) is {\sf NP}-hard in the same environment explaining the result on optimal equilibria for anti-coordination games.} we will show that all four results can be obtained by invoking a result on a novel graph-theoretic problem on oriented graphs, hereafter orgraphs (these are directed graphs where if $ij$ is a directed edge then $ji$ is not).

% \PN{MAKE SURE NO REFERENCE TO exogenous weight on arc, $C$, IN DEF OF MWOP}

The problem, introduced in \cite{DeligkasEiben:2023:arXiv}, is termed {\sc Maximum  Weighted Orgraph Partition} (MWOP).
{Informally, MWOP is described as follows (for the formal statement see Definition~\ref{def:MWOP}).}
Begin with an orgraph and assign a $2\times2$ matrix to every directed edge.\footnote{The convention we employ is that the vertex at the tail of a directed edge references the row of the matrix while the vertex at its head refers to the column of the matrix.}
% the vertex at the tail of a directed edge refers to the row of the matrix while the vertex at its head refers to the column of the matrix.
% Consider an undirected graph and orient every edge in some direction (either direction will do).
% Every directed edge is assigned a $2\times2$ matrix, where the vertex that is the tail of the directed edge refers to the row of the matrix while the vertex that is the head refers to the column of the matrix.
%For every matrix, both the row and column have indices $a$ and $b$.
% Every vertex can be assigned either $a$ or $b$ and the assignment of adjacent vertices will determine the weight on the edge between them given by the corresponding cell of the matrix.
Every vertex is assigned either $a$ or $b$, and the assignment of adjacent vertices determines the cell-entry of the matrix on the edge connecting them.
% The total weight of an edge is the product of the exogenously specified real number and the matrix entry that is determined by assignment of the vertices incident on it.
The objective of the MWOP problem is to partition the vertices into those assigned $a$ and those assigned $b$ such that the sum of the weights on each edge is maximised.

\cite{DeligkasEiben:2023:arXiv} completely characterise when the MWOP problem is tractable.
Surprisingly, the tractability depends only on the properties of the assigned matrices.
%\footnote{That is, the tractability does not vary with the structure of the directed graph. The characterisation is the same for orgraphs as it is for arbitrary directed graphs and the class of symmetric directed graphs (those where $ij$ is an edge if an only if $ji$ is an edge). Oriented graphs are sufficient for our purposes.}
There are three families of matrices for which the MWOP problem is tractable.
The first two cases are immediate:\ when the maximum entry in every matrix is in the top left cell or in the bottom right cell, i.e., all vertices are assigned $a$ or all are assigned $b$.
% The second is the mirror image -- the maximum entry in every matrix is in the bottom right cell.
% Both these cases are trivial.
The third tractable case is when, for every $2\times2$ matrix, the sum of the entries along the main diagonal is no smaller than the sum of the other two terms.
The proof of this case works by constructing a new auxiliary weighted graph and showing that the assignment that optimises the value is found by the minimum cut in the new graph; {\sc Minimum Cut} is a well-studied problem for which there are many efficient algorithms (see, e.g., \cite{korte2011combinatorial}).

To see how MWOP can be used to compute welfare-optimal and potential maximising outcomes in the coordination problems that we consider, the third tractable case of MWOP provides the key.
% To see how well-suited {\sc Maximum  Weighted Digraph Partition} is to cpolymatrix games, we invoke the MWDP result pertaining to oriented directed graphs.
For any two player pure coordination game, both the welfare matrix and the potential matrix satisfy the property that the sum of terms along the main diagonal is at least as great as the sum of the other two terms.
For every two player anti-coordination game, the inequality goes the other way.
% First of all, observe that for every binary action two-player game, both the associated welfare matrix and the potential matrix (when it exists) are described by a $2\times2$ matrix.
As an example, recall the games $G_{ij}^c$ and $G_{ij}^{s}$ from Figure~\ref{fig:binaryGames} and let us consider the issue of welfare.
With $i$'s strategy choice indexing the row and $j$'s indexing the column, the welfare matrices of both games are as given in Figure~\ref{fig:welfareMatrices}.
\begin{figure}[htp!]%\hspace*{\fill}%
\[
\left[\begin{array}{cc}
\pay{i}+\pay{j} & 0\\
0 & 2-\pay{i}-\pay{j}
\end{array}\right]
\hspace{.8in}
\left[\begin{array}{cc}
0 & \pay{i} + 1-\pay{j}\\
1-\pay{i} +\pay{j} & 0
\end{array}\right]
\]

%*{\fill}\begin{game}{2}{2}[][]
%& $a$ & $b$\\
%$a$ &$\pay{i}+\pay{j}$ & $0$\\
%$b$ &$0$ &$2-\pay{i}-\pay{j}$
%\end{game}\hspace*{\fill}
%\begin{game}{2}{2}[][]
%& $a$ & $b$\\
%$a$ &0 & $\pay{i} + 1-\pay{j}$\\
%$b$ & $1-\pay{i} +\pay{j}$ &$0$
%\end{game}\hspace*{\fill}
\caption[]{Welfare matrices for $2\times2$ games $G_{ij}^c$ and $G_{ij}^{s}$ from Figure~\ref{fig:binaryGames}.}
\label{fig:welfareMatrices}
\end{figure}

% Next, given a binary action polymatrix game, when it comes to computing potential and welfare, the binary action two player game on every edge is replaced by a $2\times2$ matrix.
% All one must do is fix an orientation for the undirected graph on which the polymatrix game is played, so that it is clear which player's action choice correspond to the row and column, and assign to every directed edge a $2\times2$ welfare matrix or a $2\times2$ potential matrix depending on what is the objective of interest.

The result follows almost immediately:\ welfare matrices for pure coordination polymatrix games are of the form of the left matrix in Figure~\ref{fig:welfareMatrices} and so maximising welfare is tractable.

% associated with a single entry in each cell.
% It should now be straightforward to see how the MWDP applies.

% Cell 3 \argy{again, the number of the cells is not clear} in 
Table~\ref{table:classification} indicates that it is not all good news for identifying outcomes in pure coordination polymatrix games.
In particular, computing the welfare optimal Nash equilibrium is intractable. 
We show this using the language game of \cite{Neary:2011:MGG,Neary:2012:GEB}.
The language game is a two-type threshold model.
Each player is a member of one of two groups, Group $A$ or Group $B$, and all players in the same group have the same threshold, either $\pay{A}$ or $\pay{B}$.
For this simple binary action environment, we will show show that efficient outcomes can be identified but that computing the welfare optimal Nash equilibrium is intractable.
\subsection{Computation of Nash Equilibria}\label{ssec:literatureComplexity}

% The field of computational complexity provides formal notions of when a problem is tractable and when it is not.
While the theorem of \cite{Nash:1950:PNASUSA,Nash:1951:AM} ensures the existence of an equilibrium in every finite game, the proof is non-constructive and so provides no guide on how one might find one.
\cite{CDT} and \cite{DGP} showed that computing a Nash equilibrium is a problem that lies in the complexity class {\sf PPAD}.
Both papers explicitly use polymatrix games in their main constructions.\footnote{Some classical results concerning computing equilibria in polymatrix games existed already \citep{eaves1973polymatrix,howson1972equilibria,howson1974bayesian,miller1991copositive}.}

% \marginpar{\tiny{\PN{some cite styles read a bit funny. please rewrite.}}}

With the problem of computing of a Nash equilibrium resolved, attention has turned to the difficulty of finding outcomes in specific classes of games (as is our focus in this paper), to computing equilibria with certain properties (see for example both \cite{GilboaZemel:1989:GEB} and \cite{ConitzerSandholm:2008:GEB}), and also to that of computing approximate equilibria.
Recently, \cite{DFHM22-focs} proved a dichotomy for the complexity of finding {\it any} approximate well-supported Nash equilibrium in binary-action polymatrix games; in an approximate well-supported Nash equilibrium every player assigns positive probability only on actions that approximately maximize their payoff. 
\cite{DBLP:conf/innovations/BoodaghiansKM20}~studied the smoothed complexity on coordination polymatrix games, while~\cite{aloisio2021distance}~studied an extension of polymatrix games.
The complexity of constrained Nash equilibria for general polymatrix games were studied by~\cite{DeligkasFS17-constrained}, while~\citep{barman2015approximating,DeligkasFS20-tree-polymatrix,EGG-elkind2006nash,ortiz2017tractable} study games with tree-underlying structure.
Finally, \cite{DeligkasFIS16-experimental} provides an experimental comparison of various algorithms for polymatrix games.

% Computational aspects of (approximate) Nash equilibria in polymatrix games have received a lot of attention, starting from classical results of~\cite{eaves1973polymatrix,howson1972equilibria,howson1974bayesian,miller1991copositive}, to more recent results~\citep{CDT,DGP,Rubinstein18-polymatrix}.
% Recently \citep{DFHM22-focs} very recent dichotomy for the complexity of finding {\it any} approximate well-supported Nash equilibrium in general binary-action polymatrix games~\citep{DFHM22-focs}. 
% \cite{DBLP:conf/innovations/BoodaghiansKM20}~studied the smoothed complexity on coordination polymatrix games, while~\cite{aloisio2021distance}~studied an extension of polymatrix games.
% The complexity of constrained Nash equilibria for general polymatrix games were studied in~\citep{DeligkasFS17-constrained}, while~\citep{barman2015approximating,DeligkasFS20-tree-polymatrix,EGG-elkind2006nash,ortiz2017tractable} study games with tree-underlying structure. Finally,~\citep{DeligkasFIS16-experimental} provides an experimental comparison of various algorithms for polymatrix games.

\section{Coordination problems in networks}\label{sec:environment}

Section~\ref{ssec:polymatrix} begins with the definition of a binary action polymatrix game, and then introduces the two kinds of coordination problem that we consider:\ {\it pure coordination polymatrix games} and {\it anti-coordination polymatrix games}.
In Section~\ref{ssec:objects} we formally define the three kinds of strategies profiles that we are interested in.
% finding in coordination problems.

% \marginpar{\tiny{\PN{not sure if we need this. Can easily be added later.}}}
% \PN{Section~\ref{ssec:complexity} is a crash course in computational complexity that can be skipped by any reader familiar with the subject matter.}

\subsection{Binary action polymatrix games}\label{ssec:polymatrix}

A polymatrix game is the simplest kind of network game:\ a population of players reside on a graph where vertices represent players and edges represent pairwise connections; along every edge is a two player game;
every player's utility is the sum of payoffs earned with each neighbour where the same action must be used with each (so by definition each player has the same available strategies in all pairwise interactions in which they partake).

% Formally:

\begin{definition}\label{def:poly}
% \marginpar{\tiny{\PN{Is $\Gcal$ needed? I think only used one more time.}}}
A binary action polymatrix game, $\Gcal$, consists of
\begin{itemize}

\item
an undirected graph $G = (V, E)$, where $V$ is a finite set $\set{1, . . . , n}$ of players, sometimes called vertices, and a finite set $E$ of edges, which are taken to be unordered pairs $\set{i,j}$ of players, $i \neq j$ (to economise on notation sometimes we will write $ij$ instead of $\set{i,j}$);

\item
for each player $i \in V$, a (common) set of strategies $S_i = S = \set{a, b}$;

\item
for each $\set{i,j}\in E$, a two-player game $(u^{ij}, u^{ji})$, where the players are $i$ and $j$, the strategy set for both player $i$ and player $j$ is $S$, and payoffs $u^{ij} : S_i \times S_j \to \Real$ and $u^{ji} : S_j\times S_i \to \Real$;

\item
for each player $i \in V$ and strategy profile $\stratBold{} = (\strat{1}, \dots, \strat{n}) \in \stratSpace = S^n$, the utility to player $i$ from $\stratBold{}$ is $U_{i}(\stratBold{}) = \sum_{\set{i,j} \in E} u^{ij}(\strat{i}, \strat{j})$.

\end{itemize}

\end{definition}

We are interested in two kinds of binary action polymatrix games that represent the two kinds of coordination problems. 
The first are pure coordination games and the second are anti-coordination games:\ in the former each player wants their neighbours to choose the same action, while in the latter every player wants the opposite.
Given that two player interactions are the building blocks of a polymatrix game, the formal definition of each kind of coordination problem that we employ is based on the structure of the equilibrium set of the two player games that make up the polymatrix game.
% types of pure strategy equilibria that the two player games possess.

% Now a definition.
A strategy profile $\sbf{}^*$ is a pure strategy Nash equilibrium if no player can increase their utility by unilaterally changing their strategy.
Before defining this formally, note that we will occasionally write strategy profile $\sbf{}$ as $(\strat{i}, \sbf{-i})$ as this allows us to describe a strategy profile ``from the perspective'' of player $i$.
% using strategy $\strat{i}$, strategy profile $\sbf{}$ can be expressed as $(\strat{i}, \sbf{-i})$.
%To define this formally it is useful to denote by $\sbf{-i}$ the {\it partial strategy profile} consisting of the strategies of the $n-1$ players other than $i$.
%That is, from the perspective of player $i$ using strategy $\strat{i}$, the strategy profile $\sbf{}$ can be viewed as $(\strat{i}, \sbf{-i})$.

%\paragraph{\bf Nash equilibrium.}
%A strategy profile $\sbf{}^*$ is a {\em pure Nash equilibrium}, or simply Nash equilibrium (NE), if no player can increase their payoff by unilaterally changing their action. Formally, $\sbf{}^*$ is a NE if
%for every player $i\in V$ and every $s_i \neq s^*_i$ it holds that $u_i(s^*_i,\sbf{-i}^*) \geq u_i(s_i,\sbf{-i}^*)$. 

\begin{definition}
Strategy profile $\sbf{}^*$ is a {\it pure strategy Nash equilibrium}, if, for every player $i\in V$ and every $\strat{i} \in S$, it holds that $U_i(\strat{i}^{*},\sbf{-i}^*) \geq U_i(\strat{i},\sbf{-i}^*)$. 
A pure strategy Nash equilibrium, $\sbf{}^{*}$, is a {\it strict equilibrium} if the inequalities are strict.
\end{definition}

Given the above, we can now define the two kinds of coordination problem that we consider.

\begin{definition}\label{def:pure}
A binary-action polymatrix game is a {\it pure coordination polymatrix game} if, for every two player game, the action profiles $(a,a)$ and $(b,b)$ are strict equilibria.
\end{definition}

\begin{definition}\label{def:anti}
A binary-action polymatrix game is an {\it anti-coordination polymatrix game} if, for every two player game, the action profiles $(a,b)$ and $(b,a)$ are strict equilibria.
\end{definition}

Our definition of a pure coordination game, Definition \ref{def:pure}, is adapted from \cite{FosterYoung:1998:GEB}, where a two-person coordination game is defined as one in which both players have $k$ strategies and there exists a enumeration of each player's strategy set such that it is always a strict equilibrium for both players to play the strategy with the same index. 
We note that a two player anti-coordination game is a pure coordination game according to this definition (all one must do is swap the labelling of one of the player's strategy set).
However, this is not the case for anti-coordination polymatrix games in general (although it can be for specific network structures, e.g., a bipartite graph).

\subsection{Strategy profiles of interest}\label{ssec:objects}

We are interested in determining the following patterns of behaviour:\ a strategy profile that maximises welfare, a welfare optimal Nash equilibrium, and a profile that maximises potential.
% \argy{not sure if the sentence reads well}

\begin{definition}
Given a  binary-action polymatrix game, $\Gcal$, we write $\welf{}(\stratBold{})$ for the utilitarian social welfare evaluated at profile $\stratBold{}$. That is,
\begin{equation*}
\welf{}(\stratBold{}) := \sum_{i=1}^{n} U_{i}(\stratBold{}).
\end{equation*}
\end{definition}
% \marginpar{\tiny{\PN{Do we want specific notation for each of the three objects? For example $s^W$, $s^N$ and $s^{\phi}$.}}}
The first profile that we are interested in is that which maximises social welfare, i.e., 
\begin{equation}\label{eq:welfMax}
\argmax_{\sbf{} \in \stratSpace} \welf{}(\sbf{}).
\end{equation}

We also care about pure strategy equilibria.
Let $\stratSpace^{*} \subseteq \stratSpace$ denote the set of pure strategy Nash equilibria.
The second object of interest is a pure strategy Nash equilibrium that maximises welfare, i.e.,
\begin{equation}\label{eq:bestNash}
\argmax_{\sbf{} \in \stratSpace^*} \welf{}(\sbf{}).
\end{equation}

%\PN{
%A well-studied class of game are {\it potential games} \citep{ShapleyMonderer:1996:GEB}.
%In a potential game, the incentive of all players to change their strategy can be expressed using a single function called the potential function.
%%Potential games possess many desirable properties.
%%One such property is that pure strategy equilibria correspond to local optima of the potential function so the existence of a pure strategy equilibrium is assured.
%\begin{definition}
%A game is a {\it potential game} if there exists a function $\pot: \stratSpace \to \Real$ such that for every player $i$, for all $\sbf{-i}$, and all pairs of actions $\strat{i}', \strat{i}'' \in S_i$
%\[
%\pot(\strat{i}', \sbf{-i}) - \pot(\strat{i}'', \sbf{-i}) = U_i(\strat{i}', \sbf{-i}) - U_i(\strat{i}'', \sbf{-i}).
%\]
%\end{definition}
%
%We emphasise that the same function $\pot$ captures the change in utility associated with a deviation for {\it every} player.
%
%
%When every pairwise interaction of a polymatrix game is a potential game then the polymatrix game inherits this property, with the potential at any strategy profile equal to the sum of the potentials along every edge.
%In other words, if $\pot^{ij}$ denotes the pairwise-potential function for the two-player game played between $i$ and $j$, then the function $\pot(\sbf{})$ is a potential function for the polymatrix game as a whole, where
%\begin{equation*}
%\pot(\sbf{}) := \sum_{ij \in E}\pot^{ij}(\sbf{})
%\end{equation*}
%}

While pure-coordination polymatrix games always possess two ``trivial'' pure Nash equilibria, it is easy to come up with an example where an anti-coordination polymatrix game does not possess a pure strategy equilibrium.
However, a well-studied class of game is {\it potential games} \citep{ShapleyMonderer:1996:GEB}.
% In a potential game, the incentive of each player to change their strategy can be expressed using a single function called the potential function.
Potential games possess many desirable properties.
One such property is that pure strategy equilibria correspond to local optima of the potential function so the existence of a pure strategy equilibrium is assured.

% environments are potential games, a pure strategy Nash equilibrium is guaranteed to exist \citep{ShapleyMonderer:1996:GEB}.
% A well-studied class of game are {\it potential games} \citep{ShapleyMonderer:1996:GEB}.
% In a potential game, the incentive of each player to change their strategy can be expressed using a single function called the potential function.
% Before defining this formally, we note that from the perspective of player $i$ using strategy $\strat{i}$, strategy profile $\sbf{}$ can be expressed as $(\strat{i}, \sbf{-i})$.

%  it is useful to denote by $\sbf{-i}$ the {\it partial strategy profile} consisting of the strategies of the $n-1$ players other than $i$.
%That is, from the perspective of player $i$ using strategy $\strat{i}$, the strategy profile $\sbf{}$ can be viewed as $(\strat{i}, \sbf{-i})$.
%Potential games possess many desirable properties.
%One such property is that pure strategy equilibria correspond to local optima of the potential function so the existence of a pure strategy equilibrium is assured.
In a potential game, the incentive of each player to change their strategy can be expressed using a single function called the potential function.
\begin{definition}
A game is a {\it potential game} if there exists a function $\pot: \stratSpace \to \Real$ such that for every player $i$, for all $\sbf{-i}$, and all pairs of actions $\strat{i}', \strat{i}'' \in S_i$
\[
\pot(\strat{i}', \sbf{-i}) - \pot(\strat{i}'', \sbf{-i}) = U_i(\strat{i}', \sbf{-i}) - U_i(\strat{i}'', \sbf{-i}).
\]
\end{definition}

% We emphasise that the function $\pot$ captures the change in utility associated with a deviation for {\it every} player.

When every pairwise interaction of a polymatrix game is a potential game then the polymatrix game inherits this property, with the potential at any strategy profile equal to the sum of the potentials along every edge.
More formally, if $\pot^{ij} \colon S \times S \to \Real$ denotes the pairwise-potential function for the two-player game played between players $i$ and $j$, then the function $\pot \colon S^n \to \Real$ is a potential function for the polymatrix game, where
\begin{equation*}
\pot(\sbf{}) := \sum_{\set{i,j} \in E}\pot^{ij}(\strat{i}, \strat{j})
\end{equation*}

%\marginpar{\tiny{\PN{a bit of a mouthful. Maybe we say that we will just use ``polymatrix game'' going forward?}}}
Whenever we consider potential maximisers, we will assume that every $2\times2$ game in a binary action polymatrix game is a potential game, so that we limit attention to {\it pairwise-potential binary action polymatrix games}.
% To economise on space, going forward we refer to such games as binary action polymatrix games.
%This assumption assures that a pure

The third and final pure strategy profile that we focus on is that which maximises the potential of a pairwise-potential polymatrix game, i.e.,
\begin{equation}\label{eq:potMax}
\argmax_{\sbf{} \in \stratSpace} \pot{}(\sbf{})
\end{equation}
where we note that since the profile that maximises the potential is guaranteed to be a pure strategy Nash equilibrium \citep{ShapleyMonderer:1996:GEB} and so optimising only over the set of pure strategies $\stratSpace$ is without loss of generality.

To sum up, given a coordination problem, the three patterns of behaviour that we are interested are those defined in \eqref{eq:welfMax}, \eqref{eq:bestNash}, and \eqref{eq:potMax}.

% \subsection{Complexity definitions}\label{ssec:complexity}

% \PN{If needed.}

%\PN{That is what was done in the papers \cite{GilboaZemel:1989:GEB} and \cite{ConitzerSandholm:2008:GEB}, both of which were published in GEB.}
%This may not be strictly necessary if we are submitting to GEB since those two papers are reasonably old and nowadays enough algorithmic game theorists publish there. But if we submit to one of the general economics journals, a crash course is probably warranted.}

\section{Results}\label{sec:results}

%Referring back to Table~\ref{table:classification}, recall that there are six objects of interest.
%:\ three kinds of strategy profiles in the two different classes of polymatrix game.
In Section~\ref{ssec:welfare+potential} we show that the problem of maximising welfare and maximising the potential of a pure coordination polymatrix games are both tractable problems.
However, computing the two corresponding outcomes in anti-coordination polymatrix games is intractable.
While these two negative results were known already \citep{Bramoulle:2007:GEB, CaiDaskalakis:2011:}, we show how all four results can be recovered using a novel graph theoretic problem, the Maximum Weighted Orgraph Partition (MWOP) problem, introduced in \cite{DeligkasEiben:2023:arXiv}.
%The two results for anti-coordination polymatrix games were known already \citep{Bramoulle:2007:GEB}, but come as natural corollaries to our main main reduction.
%The proofs of both these facts use the same tools and so are presented together.
In Section~\ref{ssec:bestNash}, we prove that finding the welfare optimal Nash equilibrium in a pure coordination polymatrix game is intractable.
Computing the corresponding outcome for anti-coordination polymatrix games is known to be intractable since \cite{CaiDaskalakis:2011:} showed that computing any pure strategy Nash equilibrium in such games is \NP-hard.
%Such an outcome was already known to be computationally intractable for anti-coordination polymatrix games \citep{CaiDaskalakis:2011:}.

\subsection{Maximising welfare and maximising potential}\label{ssec:welfare+potential}

Determining when the welfare maximising outcome and the potential maximising outcome is tractable follows by appealing to a novel graph-theoretic problem known as the Maximum Weighted Orgraph Partition (MWOP) problem introduced in \cite{DeligkasEiben:2023:arXiv}.

% \marginpar{\tiny{\PN{Need to do a thorough check of what graph theoretic terminology is used but not defined. List here seems incomplete.}}}

Throughout the paper we use standard terminology for both undirected graphs and directed graphs, see for example the textbook of \cite{Bang-JansenGutinBook}.
We define the main terms and any omitted can be found therein.

A {\it directed graph} (or just {\it digraph}) $D$ consists of a non-empty finite set $V(D)$ of elements called {\it vertices} and a finite set $A(D)$ of ordered pairs of distinct vertices called {\it arcs}. We call $V(D)$ the {\it vertex set} and $A(D)$ the {\it arc set} of $D$. 
An {\it oriented graph} is a directed graph without $2$-cycles (that is, if $ij \in A(D)$, then $ji \notin A(D)$).

A partition of the vertices in digraph $D$ is a mapping $P \colon V(D) \to \set{\one, \two}$.
Given a partition $P$, to keep notation simple, we write $P = (X_\one,X_\two)$ where $X_\one$ is the set of vertices assigned $\one$ and $X_\two$ those assigned $\two$.

\subsubsection{Maximum weighted orgraph partition}

In this section we state formally the MWOP problem and we characterise the conditions on matrices that render the problem tractable.
We begin by defining a family of matrices.

%\marginpar{\tiny{\PN{Perhaps we consider the full set of $2\times2$ matrices, $\Real^2 \times \Real^2$. We can then say what linear subspaces the problem is tractable for. I think this is more standard way to discuss games}}}
\begin{definition}
Let ${\cal F}$ denote a {\it family} of $2 \times 2$ matrices.
%For each matrix $M \in {\cal F}$ let $m_{11}$, $m_{12}$, $m_{21}$ and $m_{22}$ denote the entries of $M$, such that 
For each matrix $M \in {\cal F}$ let $m_{\one\one}$, $m_{\one\two}$, $m_{\two\one}$ and $m_{\two\two}$ denote the entries of $M$, so that 
\[
M = \left[\begin{array}{cc}
m_{\one\one} & m_{\one\two} \\
m_{\two\one} & m_{\two\two} \\
\end{array}\right].
\]
\end{definition}
 We say that a family $\Fcal$ of $2\times 2$ matrices is \emph{closed under multiplication by a positive scalar} if for all $2\times 2$ matrices $M$ and all $c\in \mathbb{R}^+$ it holds that $M\in \Fcal$ if and only if $c\cdot M\in \Fcal$.
We are now ready to define the MWOP problem.
% \footnote{In fact, the following is less general than the definition of MWDP given in \cite{DeligkasEiben:2023:arXiv} since we allow only for oriented graphs. The general definition allows for arbitrary digraphs, i.e., those containing $2$-cycles, multiple parallel arcs between vertices, etc. We recall that \cite{DeligkasEiben:2023:arXiv}  show that the complexity of the problem for oriented graphs coincides with that for arbitary digraphs. Oriented graphs are sufficient for our purposes of studying polymatrix games.}

%%%% OLD DEFININITION

\begin{definition}[\MWOP{}]\label{def:MWOP}
 Given a family of real $2 \times 2$ matrices $\Fcal$ that is closed under multiplication by a positive scalar, an \emph{instance of MWOP[$\Fcal$]} is given by the pair $(D, f)$, where
\begin{itemize}
    \item $D$ is an oriented graph on $n$ vertices;
%    \item $c: A(D) \rightarrow \Real^+$ assigns a {nonnegative} real number to each arc; 
%    \item $\fcal$ is a family of $2\times 2$ matrices; \argy{do we need it?}
    \item $f: A(D) \rightarrow \Fcal$, is an assignment of a matrix from the family of matrices $\Fcal$ to each arc in $D$.
\end{itemize}
Given a partition, $P=(X_\one,X_\two)$, of $V(D)$, the weight of an arc $ij \in A(D)$, denoted $\xi^P(ij)$, is defined as follows, where $M=\left[\begin{array}{cc}
m_{\one\one} & m_{\one\two} \\
m_{\two\one} & m_{\two\two} \\
\end{array} \right]= f(ij)$ is the matrix assigned to the arc $ij$.

%\marginpar{\tiny{\PN{$w$ is a bad symbol for weight given $W$ is used for welfare. I changed to $\xi$. Feel free to change to something else.}}}
\begin{equation}\label{eq:weights}
\xi^P(ij) = \left\{
\begin{array}{rcl}
m_{\one\one} & \mbox{if} & i,j \in X_\one \\
m_{\two\two} & \mbox{if} & i,j \in X_\two \\
m_{\one\two} & \mbox{if} & i \in X_\one \mbox { and } j \in X_\two \\
m_{\two\one} & \mbox{if} & i \in X_\two \mbox { and } i \in X_\one \\
\end{array}
\right.
\end{equation}
Given an orgraph $D$ and a partition $P$ of $V(D)$, the \emph{weight of $D$}, denoted by $\xi^P(D)$, is defined as the sum of the weights on every arc.
%That is the weight of each arc will depends on the elements of the partition $P$ to which each of $u$ and $v$ are assigned.
That is, $\xi^P(D) = \sum_{ij \in A(D)} \xi^P(ij)$. 
Finally, given an instance $(D,f)$ of MWOP[$\Fcal$],
the \emph{objective} of MWOP[$\Fcal$] is to find a partition $P$ of $V(D)$ that maximizes $\xi^P(D)$.
\end{definition}

{Note that \cite{DeligkasEiben:2023:arXiv} defined an instance of MWOP[$\Fcal$] in a way that they do not require $\Fcal$ to be closed under multiplication by a positive scalar but instead impose a scalar $c_{ij}\in\mathbb{R}^+$ on a each arc $ij$. However, these two definitions are equivalent, as we can always take a ``closure'' of the family by including all multiples of each matrix in $\Fcal$ and then for each arc in the digraph replaces the constant $c_{ij}$ and matrix $M_{ij}$ by the matrix $c_{ij}\cdot M_{ij}$.}

%\bigskip

Before proceeding, let us pause and remark on the need for oriented graphs.
While polymatrix games do not specify directed connections, when assigning a matrix to the edge connecting a pair of players, imposing an orientation on that edge is more precise as it removes any perceived ambiguity.
Without doing so, statements about which vertex refers to the row of the matrix and which vertex refers to the column may cause confusion.

The main result in \cite{DeligkasEiben:2023:arXiv} is that the tractability of solving MWOP[$\Fcal$] depends on properties of the matrices in the family of matrices ${\cal F}$.
We now introduce three properties of matrices that will prove useful.

\begin{enumerate}[leftmargin=\parindent,align=left,label=\textbf{Property (\roman*):},ref=(\roman*)]
\item\label{property_i} $m_{\one\one} = \max\{ m_{\one\one}, m_{\two\two}, m_{\one\two}, m_{\two\one} \}$.
\item\label{property_ii} $m_{\two\two} = \max\{ m_{\one\one}, m_{\two\two}, m_{\one\two}, m_{\two\one} \}$.
\item\label{property_iii} $m_{\one\one} + m_{\two\two} \geq m_{\one\two} + m_{\two\one}$.
\end{enumerate}

% \begin{description}
% \item[Property (i):] $m_{\one\one} = \max\{ m_{\one\one}, m_{\two\two}, m_{\one\two}, m_{\two\one} \}$.
% \item[Property (ii):] $m_{\two\two} = \max\{ m_{\one\one}, m_{\two\two}, m_{\one\two}, m_{\two\one} \}$.
% \item[Property (iii):] $m_{\one\one} + m_{\two\two} \geq m_{\one\two} + m_{\two\one}$.
% \end{description}

The following is the main result in \cite{DeligkasEiben:2023:arXiv}.

\begin{thm}[\cite{DeligkasEiben:2023:arXiv}]\label{thm:main}
Given a family of $2\times 2$ matrices $\Fcal$ that is closed under multiplication by a positive scalar, MWOP[$\Fcal$] can be solved in polynomial time if either all matrices in ${\cal F}$ satisfy Property~\ref{property_i}, or all matrices in ${\cal F}$ satisfy Property~\ref{property_ii}, or all matrices in ${\cal F}$ satisfy Property~\ref{property_iii}.
Otherwise, MWOP[$\Fcal$] is {\sf NP}-hard. 
\end{thm}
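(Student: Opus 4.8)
The plan is to dispatch the three positive cases quickly and then to prove {\sf NP}-hardness for every family that escapes all three. For Property~\ref{property_i}, observe that for any instance $(D,f)$ and any partition $P$ every arc $ij$ contributes at most $(f(ij))_{aa}$, so $\xi^P(D)\le\xi^{P_a}(D)$ for the all-$a$ partition $P_a=(V(D),\emptyset)$; hence $P_a$ is optimal and is output in linear time, and Property~\ref{property_ii} is its mirror image. For Property~\ref{property_iii} I would reduce to {\sc Minimum Cut}. Identify a partition with the Boolean vector $x$ given by $x_i=0$ iff $i\in X_a$; for an arc $ij$ with matrix $M$ the local cost $-\xi^P(ij)$, read as a function $g_{ij}(x_i,x_j)$, satisfies $g_{ij}(0,0)+g_{ij}(1,1)=-(m_{aa}+m_{bb})\le-(m_{ab}+m_{ba})=g_{ij}(0,1)+g_{ij}(1,0)$, i.e.\ $g_{ij}$ is submodular exactly because of Property~\ref{property_iii}. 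Then $-\xi^P(D)=\sum_{ij\in A(D)}g_{ij}$ is a sum of submodular pairwise Boolean functions, minimised in polynomial time by the classical reduction of submodular pseudo-Boolean minimisation to $s$--$t$ minimum cut: split each $g_{ij}$ into unary terms plus $\lambda_{ij}\,[x_i=0,\,x_j=1]$ with $\lambda_{ij}=g_{ij}(0,1)+g_{ij}(1,0)-g_{ij}(0,0)-g_{ij}(1,1)\ge0$, push the unary terms into source/sink capacities, add an arc of capacity $\lambda_{ij}$ from $i$ to $j$, and return the source side of a minimum cut; maximum flow being polynomial, so is this case.

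For hardness, suppose $\Fcal$ violates all three properties, so it contains (not necessarily distinct) matrices $M^{(1)}$ with $m^{(1)}_{aa}$ strictly below some other entry, $M^{(2)}$ with $m^{(2)}_{bb}$ strictly below some other entry, and $M^{(3)}$ with $m^{(3)}_{aa}+m^{(3)}_{bb}<m^{(3)}_{ab}+m^{(3)}_{ba}$. The source of hardness is $M^{(3)}$: on an arc it strictly rewards separating its endpoints, and the natural target is {\sc Maximum Cut} (or, when a single matrix has a strictly dominant off-diagonal entry, {\sc Maximum Directed Cut}). Two obstacles have to be designed around, and only positive rescalings of matrices in $\Fcal$ are permitted (shifts are not, although a shift of a single arc's matrix would not change the optimisation). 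First, $M^{(3)}$ is asymmetric and so distinguishes the two oriented cuts of an edge; one route is to attach, for a pair $(u,v)$, an auxiliary vertex $w$ forced to copy $v$ and arcs $uv$ and $wu$ both carrying $M^{(3)}$, making the pair $\{u,v\}$ see the symmetric function $h$ with $h(a,a)=2m^{(3)}_{aa}$, $h(b,b)=2m^{(3)}_{bb}$, $h(a,b)=h(b,a)=m^{(3)}_{ab}+m^{(3)}_{ba}$, which violates Property~\ref{property_iii} iff $M^{(3)}$ does and is therefore a genuine cut reward up to an (irrelevant) additive constant. Second, the copy/pinning gadgets needed for this --- and the gadgets ruling out unintended states of the auxiliary vertices --- have to be built from positive multiples of matrices available in $\Fcal$, possibly via further auxiliary vertices that are maximised out and via $M^{(1)},M^{(2)}$ (which exist precisely because $M^{(1)}$ strictly prefers a non-all-$a$ configuration and $M^{(2)}$ the mirror one). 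Taking all the control gadgets with large enough scalars forces every optimal MWOP partition to fix the auxiliary vertices as intended, and the objective then reduces, up to an explicit affine function, to the maximum cut of the input graph.

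The main obstacle is this gadget machinery and the finite case analysis behind it: according to which entry of $M^{(1)}$ and of $M^{(2)}$ is largest and to the sign pattern of $M^{(3)}$, the copy, pinning and symmetrisation gadgets must be instantiated differently, and in each case one must check that (a) every weight used is a positive multiple of a matrix actually in $\Fcal$, (b) the scalars chosen dominate so the auxiliary vertices are pinned in \emph{every} optimum, and (c) the terminals ultimately see a symmetric (or, for {\sc Maximum Directed Cut}, a purely directed) reward rather than a mixture. A structural fact that trims the analysis is that a matrix violating Property~\ref{property_iii} can satisfy at most one of Properties~\ref{property_i} and~\ref{property_ii}, since $m_{aa}$ and $m_{bb}$ both being largest would force $m_{aa}+m_{bb}\ge m_{ab}+m_{ba}$; so the three witnesses occur in only a few qualitatively distinct combinations. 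Completing this, together with the correctness proof of the {\sc Maximum Cut}/{\sc Maximum Directed Cut} reduction, yields the dichotomy.
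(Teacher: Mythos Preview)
Your treatment of the tractable cases matches the paper. Properties~\ref{property_i} and~\ref{property_ii} are dispatched identically, and for Property~\ref{property_iii} the paper gives an explicit $s$--$t$ min-cut construction (two terminals $x,y$, edge weights assembled arc-by-arc from the entries of $f(ij)$, then a global shift to make all weights nonnegative) which is precisely the ``classical reduction of submodular pseudo-Boolean minimisation to min-cut'' you invoke. The two presentations differ only in that the paper spells out the weight bookkeeping and verifies by a four-case check that the cut value equals $-\xi^P(D)$ up to an additive constant, whereas you appeal to the general recipe; both are correct and equivalent.

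For the hardness direction, note first that the paper does \emph{not} prove it: it simply defers to \cite{DeligkasEiben:2023:arXiv}. So there is nothing in the paper to compare your sketch against, and your proposal in fact goes further than the present paper does. That said, your sketch has a real gap you should be aware of. Your symmetrisation gadget requires a copy gadget (force $w$ to equal $v$), i.e.\ a matrix strongly rewarding equality of endpoints. But nothing in the hypothesis guarantees that $\Fcal$ contains such a matrix: in the extreme case \emph{every} matrix in $\Fcal$ violates Property~\ref{property_iii}, so every available arc strictly rewards separation, and a direct copy gadget is impossible. One can try to simulate copy by two NOT gadgets in series, but a NOT gadget needs both off-diagonal entries of some matrix to strictly dominate both diagonal ones, which again need not hold for $M^{(3)}$; when the unique maximum of $M^{(3)}$ is a diagonal entry (which is compatible with $m_{aa}+m_{bb}<m_{ab}+m_{ba}$) neither direct {\sc Max DiCut} nor the NOT-based route is available and one must bring $M^{(1)}$ or $M^{(2)}$ into the gadget in a nontrivial way. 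Your final paragraph correctly flags this case analysis as the crux; just be aware that it is not merely bookkeeping but contains the substantive combinatorial work, and your current outline does not yet show that the needed gadgets can always be synthesised from positive scalings of $M^{(1)},M^{(2)},M^{(3)}$ alone.
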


The result above decomposes MWOP[$\Fcal$] into four subcases. 
Those in which every matrix satisfies one of the three properties and ``everything else''.
%We note that the everything else subcase includes, for example, that where some matrices satisfy (i) while others satisfy (ii).
%We emphasise that all it takes for the problem to become intractable is that two matrices only satisfy different properties.
%So there are many subcases covered by ``everything else''.
We emphasise that there are many subcases covered by ``everything else''.
%For example, if all but one of the matrices satisfy Property (i) but the one remaining matrix satisfies Property (ii), then the problem is intractable.

The proof of cases covered by Property~\ref{property_i} and Property~\ref{property_ii} are almost immediate.
Clearly the optimal partition $(X_\one, X_\two)$ is either $\big(V(D), \emptyset\big)$ or $\big(\emptyset, V(D)\big)$.
% Simply assign every vertex to $\one$ or every vertex to $\two$.

% To see why, suppose that every matrix satisfies Property (i).
% Given that the top left cell of every matrix is the largest value in that matrix, then assigning every vertex to $X_{\one}$ will maximise the weight along every arc, so clearly doing this across the board maximises the sum of the weights on all arcs.
% A similar statement holds when all matrices satisfy Property (ii).

%We note that the ``everything else'' includes many possibilities.
%For example, if $k$ of the $n$ matrices satisfy Property (i) and the remaining $n-k$ matrices satisfy Property (ii), the problem is intractable.

% \PN{BE CLEAR THAT THIS PROOF IS BEING REPEATED FROM OTHER PAPER. IT IS BEING REPEATED BECAUSE OF IMPORTANCE.}

The case where every matrix satisfies Property~\ref{property_iii} is directly applicable to binary action pure coordination polymatrix games since both the welfare matrix and the potential matrix of every two-player game possess this property.
%, as this will be directly applicable
We now sketch the proof from \cite{DeligkasEiben:2023:arXiv} for this case.
(Given its importance, we repeat the proof in \ref{app:proofMain}.)
% when every matrix satisfies Property (iii), as this will be directly applicable.
% \marginpar{\tiny{\PN{out-neighbour is example of term not defined.}}}
Let us suppose there is a vertex $i$ in $D$, with two out-neighbours that for the purpose of this explainer we label as $j_1$ and $j_2$.
The out-neighbourhood of $i$ is depicted below in Figure~\ref{fig:neighbourhood}, with the matrices on arcs $ij_1$ and $ij_2$ denoted $M_{{1}}$ and $M_{{2}}$ respectively.
(Recall that the convention we employ is that the tail vertex of any arc indexes the row of the matrix on that arc.)
Both $j_1$ and $j_2$ may have other neighbours (either in- or out-neighbours) that are omitted from the image.
This possibility is conveyed by the three lines emitting from each.
% Whether or not $j_1$ and $j_2$ have other neighbours it not important.
For the sake of exposition, let us suppose that the off diagonal entries in both matrices $M_{{1}}$ and $M_{{2}}$ are zero.

\begin{figure}[hbt!]
\centering
\tikzstyle{vertexW}=[circle,dotted, draw, top color=gray!1, bottom color=gray!1, minimum size=11pt, scale=0.02, inner sep=0.99pt]
\tikzstyle{vertexX}=[circle,draw, top color=gray!10, bottom color=gray!70, minimum size=10pt, scale=0.9, inner sep=0.5pt]
\tikzstyle{vertexY}=[circle,draw, top color=black!10, bottom color=gray!70, minimum size=15pt, scale=0.8, inner sep=0.4pt]
\tikzset{arc/.style = {->,> = latex'}}
\begin{tikzpicture}[scale=0.5]

%\draw (1.0,1.0) node {{\small $1$}};
%\draw (1.0,6.0) node {{\small $1$}};
%\draw (5.0,4.9) node {{\small $3$}};

%\node (v1) at (5.0,3.0) [vertexY] {$v_{1}$}; 
%\node (v2) at (5.0,9.0) [vertexY] {$v_{2}$};
%\node (u) at (5.0,6) [vertexY] {$u$};

\node (v1) at (1.0,4.0) [vertexY] {$j_{1}$}; 
\node (v2) at (7.0,4.0) [vertexY] {$j_{2}$};
\node (u) at (4.0,0.0) [vertexY] {$i$};

\draw (1.5,1.5) node {{\tiny $ M_{{1}}$}};
\draw (7,1.5) node {{\tiny $ M_{{2}}$}};

\node (c1) at (0.5,1.5) [vertexW] {$$}; 
\node (c11) at (-6.7,2.7) [vertexW] {$$};

\node (c2) at (7.8,1.5) [vertexW] {$$}; 
\node (c22) at (16.8,2.7) [vertexW] {$$};

\draw [arc, thick, dashed] (c1) to [out=180, in=270] (c11);
\draw [arc, thick, dashed] (c2) to [out=0, in=270] (c22);
%\draw [arc, red] (c2) to (c22);
%\draw [dashed] (c1) to [out=180, in=270] (c11); 

\draw (-7.5,4.6) node {{\footnotesize 
% \begin{game}{2}{2}[$i$][$j_1$]
% & $a$ & $b$\\
% $a$ &$m^{j_1}_{\one\one}$ & $m^{j_1}_{\one\two}$\\
% $b$ &$m^{j_1}_{\two\one}$ &$m^{j_1}_{\two\two}$
% \end{game}
\begin{game}{2}{2}[$i$][$j_1$]
& $a$ & $b$\\
$a$ &$m^{1}_{\one\one}$ & $0$\\
$b$ &$0$ &$m^{1}_{\two\two}$
\end{game}
}};

\draw (16.0,4.6) node {{\footnotesize 
\begin{game}{2}{2}[$i$][$j_2$]
& $a$ & $b$\\
$a$ &$m^{2}_{\one\one}$ & $0$\\
$b$ &$0$ &$m^{2}_{\two\two}$
\end{game}
% \begin{game}{2}{2}[$i$][$j_2$]
% & $a$ & $b$\\
% $a$ &$m^{j_2}_{\one\one}$ & $m^{j_2}_{\one\two}$\\
% $b$ &$m^{j_2}_{\two\one}$ &$m^{j_2}_{\two\two}$
% \end{game}
}};

\draw [arc,line width=1.3pt] (u) -> (v1); 
\draw [arc,line width=1.3pt] (u) -> (v2); 

\draw [thick] (v1) -- (0.5,5.5);
\draw [thick] (v1) -- (1.0,5.8);
\draw [thick] (v1) -- (1.5,5.5);

\draw [thick] (v2) -- (6.5,5.5);
\draw [thick] (v2) -- (7.0,5.8);
\draw [thick] (v2) -- (7.5,5.5);

%\draw  (v2) -- (12,15.5);
%\draw  (v2) -- (11.8,15.2);

\end{tikzpicture}
\caption{The neighbourhood of vertex $i$ in $D$.}\label{fig:neighbourhood}
\end{figure}

%The weights on the arcs $uv_1$ and $uv_2$ are given by matrices $M_{uv_{1}}$ and $M_{uv_{2}}$ respectively, where, given that $u$ is the tail vertex of both arcs, $u$ is the row index for both matrices.
%The matrices are as follows:

%\begin{figure}[htp!]%\hspace*{\fill}%
%%\[
%%\hspace*{\fill} G \hspace{3in} G \hspace*{\fill}
%%\]
%\[
%\hspace{.35in} M_{uv_{1}} \hspace{2.15in} M_{uv_{2}}
%\]
%\hspace*{\fill}\begin{game}{2}{2}[$u$][$v_1$]
%& $a$ & $b$\\
%$a$ &$m^{1}_{11}$ & $m^{1}_{12}$\\
%$b$ &$m^{1}_{21}$ &$m^{1}_{22}$
%\end{game}\hspace*{\fill}
%\begin{game}{2}{2}[$u$][$v_2$]
%& $a$ & $b$\\
%$a$ &$m^{2}_{11}$ & $m^{2}_{12}$\\
%$b$ &$m^{2}_{21}$ &$m^{2}_{22}$
%\end{game}\hspace*{\fill}
%%\caption[]{Weighted matrices associated with arc $uv_1$ and arc $uv_2$.}
%\label{fig:weightedMatrices}
%\end{figure}

%For now let us assume that every matrix assigned to an arc of $D$ satisfies Property (a) and let us sketch how maximising the weight generated by an assignment for this class of problem can be solved in polynomial time.
%\marginpar{\tiny{\PN{symbols $s$ and $t$ are not good. Since we use $s$ for strategy in the game. So I used $x,y$.}}}
The proof proceeds by constructing a new weighted undirected graph, $H$, with vertex set $V(H) = V(D) \cup \set{x, y}$ as follows.
Let $G(D)$ denote the undirected graph obtained from $D$ by ignoring orientations.
Let $E(H) = E(G(D)) \cup \set{xi, yi \; | \; i \in V(D)}$ and initially assign all edges zero weight.

% We assign vertex $x$ to partition element $X_\one$ and vertex $y$ to $X_\two$.
% We assign weights to each of the edges of $H$ in a way that captures precisely the loss based on the partition element that each vertex is assigned to. 
% \PN{The precise way in which these weights are chosen is given in equation blabla in Appendix blabla.}

% For now let us assume that every matrix assigned to an arc of $D$ satisfies Property (a) and let us sketch how maximising the weight generated by an assignment for this class of problem can be solved in polynomial time.
In Figure~\ref{fig:ust} below, we display the neighbourhoods of $i$, $j_1$ and $j_2$ in $H$.
We have coloured vertex $j_1$ in \blue{blue} and $j_2$ in \red{red}.
% , i.e., the undirected graph with with vertex set $V(D) \cup \set{s, t}$, and an edge from each of these vertices to $u, v_{1}$ and $v_{2}$ (i.e., edges $su, sv_1, sv_2, tu, tv_1$, and $tv_2$).
There is a weight assigned to each edge that captures precisely the loss that occurs from assigning a vertex to one of the two partitions; the exact way the weight is computed can be found in the proof at the Appendix.

% {There is a weight assigned to each edge.
% The weights are computed using the formulae in the proof in the Appendix, and capture precisely the loss based on that occurs from assigning a vertex to one of the two partitions.}
%
%
%
We emphasise that the weights currently assigned to some of the edges are incomplete as we include only the weight contributed due to vertex $i$ and its neighbours.\footnote{The weights are also colour coded to reflect the neighbour of $i$ that generated the weight.}
That is, since vertices $j_{1}$ and $j_{2}$ have more neighbours in $D$ other than $i$, there should be more weight on some of the edges.
We use the symbol $\Ocal$ to connote that there are be omitted terms;
the omitted weights may differ for each edge even though we use $\Ocal$ for each.

% \marginpar{\tiny{\PN{I did this quickly. Double check that the weights are correct.}}}
\begin{figure}[hbt!]
\centering
\tikzstyle{vertexX}=[circle,draw, top color=gray!10, bottom color=gray!70, minimum size=10pt, scale=0.9, inner sep=0.5pt]
\tikzstyle{vertexY}=[circle,draw, top color=black!10, bottom color=gray!70, minimum size=15pt, scale=0.8, inner sep=0.4pt]
\tikzstyle{vertexYb}=[circle,draw, top color=blue!40, bottom color=blue!40, minimum size=15pt, scale=0.8, inner sep=0.4pt]
\tikzstyle{vertexYr}=[circle,draw, top color=red!40, bottom color=red!40, minimum size=15pt, scale=0.8, inner sep=0.4pt]
\begin{tikzpicture}[scale=0.8]

%\draw (1.0,1.0) node {{\small $1$}};
%\draw (1.0,6.0) node {{\small $1$}};
%\draw (5.0,4.9) node {{\small $3$}};

\node (v1) at (6.0,3.0) [vertexYb] {$j_{1}$}; 
\node (v2) at (6.0,9.0) [vertexYr] {$j_{2}$};
\node (u) at (6.0,6.0) [vertexY] {$i$};

\node (s) at (0.0,6.0) [vertexY] {$x$}; 
\node (t) at (12.0,6.0) [vertexY] {$y$};

\draw [thick] (u) -- (v1); 
\draw [thick] (u) -- (v2);

\draw (6.8,7.8) node {{\tiny $\frac{\red{m^{2}_{\one\one}} + \red{m^{2}_{\two\two}}}{2}$}};
\draw (6.8,4.5) node {{\tiny $\frac{\blue{m^{1}_{\one\one}} + \blue{m^{1}_{\two\two}}}{2}$}};

\draw (9.7,8.0) node {{\tiny $\frac{-\red{m^{2}_{\one\one}}}{2} + \Ocal$}};
\draw (9.7,4.0) node {{\tiny $\frac{-\blue{m^{1}_{\one\one}}}{2} + \Ocal$}};

\draw (2.5,8.0) node {{\tiny $ -  \frac{\red{m^{2}_{\two\two}}}{2} + \Ocal$}};
\draw (2.5,4.0) node {{\tiny $ -  \frac{\blue{m^{1}_{\two\two}}}{2} + \Ocal$}};
%\draw (3.5,6.5) node {{\tiny $\left( \frac{-m^{1}_{\two\two}}{2}\right) + \left( \frac{-m^{2}_{\two\two}}{2}\right)$}};
\draw (3.5,6.4) node {{\tiny $ - \frac{\blue{m^{1}_{\two\two}}}{2} -  \frac{\red{m^{2}_{\two\two}}}{2}$}};
%\draw (8.5,6.3) node {{\tiny $\left( \frac{-m^{1}_{11}}{2}\right) + \left( \frac{-m^{2}_{\one\one}}{2}\right)$}};
\draw (8.5,6.4) node {{\tiny $\frac{-\blue{m^{1}_{\one\one}}}{2} +  \frac{-\red{m^{2}_{\one\one}}}{2}$}};

%\draw $\left( \frac{-m^{1}_{22}}{2}\right) + \left( \frac{-m^{2}_{22}}{2}\right)$

\draw [thick] (s) -- (u); 
\draw [thick] (s) -- (v1); 
\draw [thick] (s) -- (v2); 

\draw [thick] (t) -- (u); 
\draw [thick] (t) -- (v1); 
\draw [thick] (t) -- (v2); 

\draw [thick] (v2) -- (6.0,10.5);
\draw [thick] (v2) -- (5.7,10.3);
\draw [thick] (v2) -- (6.3,10.3);

\draw [thick] (v1) -- (6.0,1.5);
\draw [thick] (v1) -- (5.7,1.7);
\draw [thick] (v1) -- (6.3,1.7);

%\draw  (v2) -- (12,15.5);
%\draw  (v2) -- (11.8,15.2);

\end{tikzpicture}
\caption{The neighbourhood of $i$ in $H$. Symbol $\Ocal$ conveys omitted weights.}\label{fig:ust}
\end{figure}

For every edge to which either $x$ is adjacent or $y$ is adjacent, the weights may be negative.
In fact, in the example of Figure~\ref{fig:ust} all are negative.
Now, to each edge adjacent to either $x$ or $y$ we add a weight of magnitude equal to that of the largest negative weight in the graph such that the weight on every edge becomes non-negative.
It is shown in the proof that while doing this will change the value of the solution, doing this will not alter the partition that maximises the sum of the weights.

% The problem requires an assignment of every vertex other than $s$ and $t$ to either $X_1$ or to $X_2$ (by default $s$ is forcibly assigned to $X_1$ and $t$ is forcibly assigned to $X_2$).
% Given the three vertices $u, v_1$, and $v_2$, there are 8 possible assignments.
% Let us present one such assignment:\ the assignment in which $u$ and $v_{1}$ are assigned to $X_{1}$ and $v_{2}$ is assigned to $X_2$.
% %Consider a cut in which $u$ and $v_{1}$ are to be part of $X_{1}$ and $v_{2}$ is to be part of $X_2$.
% This assignment generates a cut in $H$ due to the deletion of edges  $tu, tv_1, uv_2$, and $sv_2$.
% The graph $H$ and cut are depicted in Figure~\ref{fig:cut} below where the edges that are part of the cut are represented by dashed lines.
% (We have omitted the weights on the remaining edges to economise on space - they are the same as given in Figure~\ref{fig:ust}).

% Finally, since some of the weights may be negative, we add a weight of magnitude equal to that of the lowest negative weight to every edge such that the weight on every edge becomes positive. (This will change the value of the solution but not the partition that maximizes the sum of the weights.)

With non-negative weight on each edge, it is straightforward to show that the partition that maximises the sum of the weights on every edge is equivalent to finding a minimum $x$-$y$ cut in the graph $H$.
One such cut is shown in Figure~\ref{fig:cut} below, where $i, j_1 \in X_\one$ and $j_2 \in X_\two$.
The edges remaining after the cut are those that contribute weight and are depicted in bold.

\begin{figure}[hbt!]
\centering
\tikzstyle{vertexX}=[circle,draw, top color=gray!10, bottom color=gray!70, minimum size=10pt, scale=0.9, inner sep=0.5pt]
\tikzstyle{vertexY}=[circle,draw, top color=black!10, bottom color=gray!70, minimum size=15pt, scale=0.8, inner sep=0.4pt]
%\tikzstyle{vertexI}=[scale=0.25,circle,draw,fill]
\tikzstyle{vertexW}=[circle,dotted, draw, top color=gray!1, bottom color=gray!1, minimum size=11pt, scale=0.01, inner sep=0.99pt]
%\tikzstyle{vertexI}=[circle,draw, top color=black!10, bottom color=gray!70, minimum size=15pt, scale=0.8, inner sep=0.4pt]
\begin{tikzpicture}[scale=0.4]

%\draw (1.0,1.0) node {{\small $1$}};
%\draw (1.0,6.0) node {{\small $1$}};
%\draw (5.0,4.9) node {{\small $3$}};

\node (v1) at (6.0,3.0) [vertexY] {$j_{1}$}; 
\node (v2) at (6.0,9.0) [vertexY] {$j_{2}$};
\node (u) at (6.0,6.0) [vertexY] {$i$};

\node (s) at (0.0,6.0) [vertexY] {$x$}; 
\node (t) at (12.0,6.0) [vertexY] {$y$};

\node (c1) at (1.0,8.5) [vertexW] {$$}; 
\node (c2) at (10.0,2.0) [vertexW] {$$};

\draw [line width=0.5mm, green] (c1) to [out=0, in=90] (c2); 

\draw [line width=1.0mm] (u) -- (v1); 
\draw [thick, dashed] (u) -- (v2);

%\draw  (r7) to [out=270, in=180] (2,3.3); 
% \draw  (2,3.3) -- (22.0,3.3); 

\draw (7.4,7.5) node {{\tiny \green{\textbf{cut}}}};
%\draw (6.1,4.5) node {{\tiny $\frac{m^{1}_{11} + m^{1}_{22} - m^{1}_{12} - m^{1}_{21}}{2}$}};
%
%\draw (9.7,8.0) node {{\tiny $\frac{m^{2}_{12}-m^{2}_{11}-m^{2}_{21}}{2} + \Ocal$}};
%\draw (9.7,4.0) node {{\tiny $\frac{m^{1}_{12}-m^{1}_{11}-m^{1}_{21}}{2} + \Ocal$}};
%
%\draw (2.5,8.0) node {{\tiny $ -  \frac{m^{2}_{22}}{2} + \Ocal$}};
%\draw (2.5,4.0) node {{\tiny $ -  \frac{m^{1}_{22}}{2} + \Ocal$}};
%%\draw (3.5,6.5) node {{\tiny $\left( \frac{-m^{1}_{22}}{2}\right) + \left( \frac{-m^{2}_{22}}{2}\right)$}};
%\draw (3.5,6.3) node {{\tiny $ - \frac{m^{1}_{22}}{2} -  \frac{m^{2}_{22}}{2}$}};
%%\draw (8.5,6.3) node {{\tiny $\left( \frac{m^{1}_{21}-m^{1}_{11}-m^{1}_{12}}{2}\right) + \left( \frac{m^{2}_{21}-m^{2}_{11}-m^{2}_{12}}{2}\right)$}};
%\draw (8.5,6.3) node {{\tiny $\frac{m^{1}_{21}-m^{1}_{11}-m^{1}_{12}}{2} +  \frac{m^{2}_{21}-m^{2}_{11}-m^{2}_{12}}{2}$}};

%\draw $\left( \frac{-m^{1}_{22}}{2}\right) + \left( \frac{-m^{2}_{22}}{2}\right)$

\draw [line width=1.0mm] (s) -- (u); 
\draw [line width=1.0mm] (s) -- (v1); 
\draw [thick, dashed] (s) -- (v2); 

\draw [thick, dashed] (t) -- (u); 
\draw [thick, dashed] (t) -- (v1); 
\draw [line width=1.0mm] (t) -- (v2); 

\draw [thick] (v2) -- (6.0,10.5);
\draw [thick] (v2) -- (5.7,10.3);
\draw [thick] (v2) -- (6.3,10.3);

\draw [thick] (v1) -- (6.0,1.5);
\draw [thick] (v1) -- (5.7,1.7);
\draw [thick] (v1) -- (6.3,1.7);

%\draw  (v2) -- (12,15.5);
%\draw  (v2) -- (11.8,15.2);

\end{tikzpicture}
\caption{Remaining edges after an $x$-$y$ cut in which $i, j_{1} \in X_{\one}$ and $j_2 \in X_\two$.}\label{fig:cut}
\end{figure}

We conclude the sketch of the proof of part (iii) by noting that a minimum $x$-$y$ cut in a graph can be performed in polynomial time (see, e.g., \cite{korte2011combinatorial}).

When it comes to applying this result to polymatrix games, we will interpret vertex $x$ as a dummy player who is forced to choose action $\one$ and  vertex $y$ as a dummy player who must choose action $\two$.
Since every player must choose either action $\one$ or action $\two$, a player must be connected to either $x$ or $y$ but not both.
Hence the need for a cut.

For a proof of the hardness of every other possible case, we refer the reader to \cite{DeligkasEiben:2023:arXiv}.

In the next two subsections we show how to utilise the theorem of \cite{DeligkasEiben:2023:arXiv} to derive as corollaries several results, both positive and negative, for various classes of binary-action polymatrix games.
First we focus on maximising welfare and then on maximising potential.
%First, we present a simple reduction from general binary-action polymatrix games to MWDP that is used to maximise social welfare.

\subsubsection{Maximising social welfare via MWOP}
Consider a binary action polymatrix game $\Gcal$ on graph $G$ and fix an arbitrary orientation of every edge to generate an oriented graph $D_G = \big(V(D_G), A(D_G)\big)$.
Now suppose that for every arc $ij \in A(D_G)$, that $i$ is the row player and that $j$ is the column player of the two-player game played over edge $\set{i,j} \in E(G)$.
We define the \emph{welfare matrix} on arc $ij$ as follows: 
$W^{ij}:=
% \left[\begin{array}{cc}
% u^{ij}(\one,\one)+u^{ji}(\one,\one) & u^{ij}(\one,\two)+u^{ji}(\two,\one) \\
% u^{ij}(\two,\one)+u^{ji}(\one,\two) & u^{ij}(\two,\two)+u^{ji}(\two,\two) \\
% \end{array}\right] =
\left[\begin{array}{cc}
w^{ij}_{\one\one} & w^{ij}_{\one\two} \\
w^{ij}_{\two\one} & w^{ij}_{\two\two} \\
\end{array}\right]$, where $w^{ij}_{s_i, s_j} = u^{ij}(s_i,s_j)+u^{ji}(s_j,s_i)$ for $s_i, s_j\in \set{\one,\two}$.
In words, given a strategy profile $\sbf{} = (s_1\ldots, s_n)$, the term $w^{ij}_{s_i, s_j}$, captures precisely the ``contribution'' of edge $\{i,j\}\in V(G)$ towards the utilitarian social welfare $\welf{}(\stratBold{})$.
%In words, given a strategy profile $\sbf{} = (s_1\ldots, s_n)$, we have $w^{ij}_{s_i, s_j} = u^{ij}(s_i,s_j)+u^{ji}(s_j,s_i)$, which is precisely the ``contribution'' of the edge $\{i,j\}\in V(G)$ towards the utilitarian social welfare $\welf{}(\stratBold{})$.
Let $(D_G,f)$ be an instance of MWOP such that for all arcs $ij\in A(D_G)$ we have $f(ij)= W^{ij}$.

%Then the welfare matrix on arc $ij$ is given by}
% $W^{ij} = u^{ij} + (u^{ji})^{T} := \left[\begin{array}{cc}
% w^{ij}_{\one\one} & w^{ij}_{\one\two} \\
% w^{ij}_{\two\one} & w^{ij}_{\two\two} \\
% \end{array}\right]$.
% We write $\Fcal_W$ to denote the family of $2 \times 2$ welfare matrices associated with polymatrix game $\Gcal$.}

For any strategy profile $\sbf{}$, we use $P(\sbf{})$ to denote the partition of players into those choosing action $\one$ and those choosing $\two$.
The next lemma follows immediately.
%\marginpar{\tiny{\PN{the notation $w$ for weight would conflict with entries in welf matrix. so changed to $\xi$}}}
\begin{lemma}
\label{lem:sw-to-mwdp}
For every possible strategy profile $\sbf{}$, it holds that $\welf(\sbf{}) = \xi^{P(\sbf{})}(D_G)$.
\end{lemma}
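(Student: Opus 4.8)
The plan is to unwind both sides of the claimed equation and observe they are literally the same sum, just grouped differently. The statement is essentially a bookkeeping identity: the utilitarian social welfare decomposes as a sum over edges of the joint payoff on that edge, and this is exactly what the MWOP weight function records once we fix the orientation $D_G$ and the assignment $f(ij) = W^{ij}$.

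\medskip

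\noindent\textbf{Proof sketch.}
Fix a strategy profile $\sbf{} = (\strat{1},\dots,\strat{n})$ and let $P = P(\sbf{})$ be the induced partition, so that $i \in X_{\one}$ iff $\strat{i} = \one$ and $i \in X_{\two}$ iff $\strat{i} = \two$. Starting from the definition of social welfare and expanding each player's utility via Definition~\ref{def:poly},
\begin{equation*}
\welf(\sbf{}) \;=\; \sum_{i=1}^{n} U_i(\sbf{}) \;=\; \sum_{i=1}^{n} \sum_{\set{i,j}\in E} u^{ij}(\strat{i},\strat{j}).
\end{equation*}
Each undirected edge $\set{i,j}\in E$ is counted exactly twice in this double sum — once as a term of $U_i$ and once as a term of $U_j$ — contributing $u^{ij}(\strat{i},\strat{j}) + u^{ji}(\strat{j},\strat{i})$. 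Hence
\begin{equation*}
\welf(\sbf{}) \;=\; \sum_{\set{i,j}\in E} \big( u^{ij}(\strat{i},\strat{j}) + u^{ji}(\strat{j},\strat{i}) \big).
\end{equation*}
Now fix the orientation: each edge $\set{i,j}\in E(G)$ corresponds to a unique arc of $D_G$, say $ij \in A(D_G)$ with $i$ the row player. By the definition of the welfare matrix, $w^{ij}_{\strat{i},\strat{j}} = u^{ij}(\strat{i},\strat{j}) + u^{ji}(\strat{j},\strat{i})$, so the edge-contribution above equals $w^{ij}_{\strat{i},\strat{j}}$.

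\medskip

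\noindent The remaining step is to check that $w^{ij}_{\strat{i},\strat{j}}$ is precisely $\xi^{P}(ij)$ as defined in \eqref{eq:weights} with $f(ij) = W^{ij}$. This is a case check on the four possibilities for $(\strat{i},\strat{j})$: if $\strat{i}=\strat{j}=\one$ then $i,j\in X_{\one}$ and $\xi^{P}(ij) = w^{ij}_{\one\one}$; if $\strat{i}=\strat{j}=\two$ then $i,j\in X_{\two}$ and $\xi^{P}(ij) = w^{ij}_{\two\two}$; and the two mixed cases are analogous, matching the off-diagonal entries. In each case $\xi^{P}(ij) = w^{ij}_{\strat{i},\strat{j}}$. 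Summing over all arcs,
\begin{equation*}
\xi^{P(\sbf{})}(D_G) \;=\; \sum_{ij\in A(D_G)} \xi^{P(\sbf{})}(ij) \;=\; \sum_{\set{i,j}\in E} w^{ij}_{\strat{i},\strat{j}} \;=\; \welf(\sbf{}),
\end{equation*}
which is the claim. $\qed$

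\medskip

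\noindent There is no real obstacle here; the only thing requiring a moment of care is the double-counting observation (each undirected edge appears in two players' utility sums) and the fact that the arbitrary orientation is immaterial because $w^{ij}_{s_i,s_j}$ is symmetric in the roles of $i$ and $j$ in the sense that it already sums both directed payoffs. I would state the case check on \eqref{eq:weights} briefly rather than in full, since it is immediate from the definitions.
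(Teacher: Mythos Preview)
Your proof is correct and is exactly the bookkeeping the paper has in mind; the paper itself states that the lemma ``follows immediately'' from the definitions and gives no separate argument. Your expansion of the double sum, the identification of each edge's contribution with $w^{ij}_{s_i,s_j}$, and the case check against \eqref{eq:weights} are precisely the implicit steps.
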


The above lemma gives us a straightforward correspondence between maximising the utilitarian social welfare of a polymatrix game and MWOP. In order to provide the complexity dichotomy for the problem of computing a strategy profile that maximises welfare, let us first formally define a parameterised version of the problem, which restricts the set of allowed welfare matrices in an instance of the problem.
More precisely, let $\Fcal$ be a family of $2\times 2$ matrices.
Then {\sc Welfare Maximisation}[$\Fcal$] is defined as follows.

\begin{definition}[\WelfareProblem]
Given a binary action polymatrix game $\Gcal$ on graph $G$ and an orientation $D_G$ of $G$ such that for every arc $ij\in A(D_G)$ the welfare matrix $W^{ij}$ is in $\Fcal$, find a strategy profile $\sbf{}$ that maximises the utilitarian social welfare $\welf(\sbf{})$.
\end{definition}
% , an instance of {\sc Welfare Maximisation}[$\Fcal$] is the oriented graph $D_G$ and an assignment of each matrix in $\Fcal_W$ to the arc in $D_G$ that corresponds to the edge in $G$.
% Given a partition of the vertices the weight of $D_G$ is $\xi^{P(\sbf{})}(D_G)$.

We highlight that the chosen orientation does not matter for the complexity of the problem.
This is if any matrix $M$ satisfies one of the Properties \ref{property_i}-\ref{property_iii}, then its transpose $M^T$ will satisfy the same property. 

% We would like to point out that taking an opposite orientation for an edge only transposes the corresponding welfare matrix and, as we will see, in all the polynomial time solvable cases in our dichotomy, transposing some of a matrices leads to a family that satisfies the same properties. 
% Hence, to check whether a particular instance belongs to one of the polynomial cases identified by our dichotomy, one can fix an arbitrary orientation of the graph and does not need to consider all possible orientations.

% \PN{We now define the problem that is the focus of this section.

% {\sc Welfare Maximisation}:
% Given a binary action polymatrix game $\Gcal$ on graph $G$, and the associated family of welfare matrices $\Fcal_W$, we define an instance of the Welfare Maximisation problem as the oriented graph $D_G$ and an assignment of each matrix in $\Fcal_W$ to the arc in $D_G$ that corresponds to the edge in $G$.
% Given a partition of the vertices the weight of $D_G$ is $\xi^{P(\sbf{})}(D_G)$.
% }

Taken together, Lemma~\ref{lem:sw-to-mwdp} and the theorem of \cite{DeligkasEiben:2023:arXiv} yield the following clean complexity dichotomy for the \WelfareProblem~ problem.
% welfare in binary-action polymatrix games.
% To the best of our knowledge, this is the first dichotomy of this kind.

% \begin{theorem}%[General polymatrix games]
% \label{thm:sw-general}
% Consider a binary-action polymatrix game on graph $G$. Let $W^{ij}$ be the welfare matrix on every edge $ij \in E(G)$.
% Finding a strategy profile that maximises the social welfare can be solved in polynomial time if one of the following holds.
% \begin{itemize}
%    \item $w^{ij}_{aa} \geq \max \{w^{ij}_{ab}, w^{ij}_{ba}, w^{ij}_{bb} \}$ for every $ij \in E(G)$.
%    \item $w^{ij}_{bb} \geq \max \{w^{ij}_{aa}, w^{ij}_{ab}, w^{ij}_{ba} \}$ for every $ij \in E(G)$.
%    \item $w^{ij}_{aa} + w^{ij}_{bb} \geq w^{ij}_{ab} + w^{ij}_{ba}$ for every $ij \in E(G)$.
% \end{itemize}
% In every other case, maximising social welfare is {\sf NP}-hard.
% \end{theorem}

\begin{theorem}%[General polymatrix games]
\label{thm:sw-general}
Given a collection of $2\times 2$ matrices $\Fcal$ that is closed under multiplication by a positive scalar, 
{\sc Welfare Maximisation}\emph{[$\Fcal$]} can be solved in polynomial time if either all matrices in ${\cal F}$ satisfy Property~\ref{property_i}, or all matrices in ${\cal F}$ satisfy Property~\ref{property_ii}, or all matrices in ${\cal F}$ satisfy Property~\ref{property_iii}.
Otherwise, {\sc Welfare Maximisation}\emph{[$\Fcal$]} is {\sf NP}-hard. 
\end{theorem}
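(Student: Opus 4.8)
The plan is to reduce \WelfareProblem~ directly to MWOP[$\Fcal$] and invoke Theorem~\ref{thm:main}. The setup is already in place: given an instance of \WelfareProblem, namely a binary action polymatrix game $\Gcal$ together with an orientation $D_G$ whose welfare matrices $W^{ij}$ all lie in $\Fcal$, form the MWOP instance $(D_G, f)$ with $f(ij) = W^{ij}$. By Lemma~\ref{lem:sw-to-mwdp}, for every strategy profile $\sbf{}$ we have $\welf(\sbf{}) = \xi^{P(\sbf{})}(D_G)$, and the correspondence $\sbf{} \mapsto P(\sbf{})$ between strategy profiles and partitions of $V(D_G)$ is a bijection. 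Hence a partition maximising $\xi^P(D_G)$ corresponds to a strategy profile maximising $\welf(\sbf{})$, and vice versa; solving one problem solves the other in the same time up to a polynomial factor for translating between the representations. This gives the polynomial-time-equivalence of the two problems on any family $\Fcal$ closed under multiplication by a positive scalar.

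The positive direction is then immediate: if all matrices in $\Fcal$ satisfy Property~\ref{property_i}, Property~\ref{property_ii}, or Property~\ref{property_iii}, Theorem~\ref{thm:main} solves MWOP[$\Fcal$] in polynomial time, so the above reduction solves \WelfareProblem~ in polynomial time. For the hardness direction, I would argue the other way: if $\Fcal$ falls into the ``everything else'' case, Theorem~\ref{thm:main} tells us MWOP[$\Fcal$] is {\sf NP}-hard, and I would give the reverse reduction, from MWOP[$\Fcal$] to \WelfareProblem[$\Fcal$]. Given an MWOP instance $(D, f)$, build a polymatrix game on the graph $G(D)$ obtained by forgetting orientations, keeping the orientation $D$ itself as the prescribed orientation $D_G$; for each arc $ij$ with assigned matrix $M = f(ij)$, define the two-player game $(u^{ij}, u^{ji})$ so that $u^{ij}(s_i, s_j) + u^{ji}(s_j, s_i) = m_{s_i s_j}$ — for instance by setting $u^{ij} = M$ and $u^{ji} \equiv 0$, or splitting symmetrically. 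Then the welfare matrix on arc $ij$ is exactly $M \in \Fcal$, so this is a valid \WelfareProblem[$\Fcal$] instance, and by Lemma~\ref{lem:sw-to-mwdp} its welfare optima coincide with the MWOP optima of $(D,f)$. Thus \WelfareProblem[$\Fcal$] is at least as hard as MWOP[$\Fcal$].

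One small technical point to check carefully is that the choice of orientation in the definition of \WelfareProblem~ does not affect which case of the dichotomy applies; this is handled by the observation already noted in the excerpt that a $2\times 2$ matrix $M$ satisfies Property~\ref{property_i}, \ref{property_ii}, or \ref{property_iii} if and only if its transpose $M^T$ does — reversing an arc replaces $W^{ij}$ by $W^{ji} = (W^{ij})^T$. Another routine point is that the bijection $\sbf{} \mapsto P(\sbf{})$ respects the convention that action $\one$ maps to the part $X_\one$ and action $\two$ to $X_\two$, so that the weight-selection rule~\eqref{eq:weights} matches the welfare-contribution rule; this is exactly the content of Lemma~\ref{lem:sw-to-mwdp} and needs no further work.

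I do not expect a genuine obstacle here — the lemma does the heavy lifting and Theorem~\ref{thm:main} supplies both halves of the dichotomy. The only thing demanding a little care is making the reverse (hardness) reduction airtight: one must confirm that every matrix in $\Fcal$ is realizable as a welfare matrix of some two-player game (which is trivial since any real $2\times2$ matrix $M$ arises from $u^{ij} = M$, $u^{ji} \equiv 0$), and that the instance produced genuinely respects the parameterisation by $\Fcal$ rather than some larger family. Once those are nailed down, the theorem follows by combining the two reductions with Theorem~\ref{thm:main}.
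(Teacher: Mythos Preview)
Your proposal is correct and follows the same approach as the paper, which simply states that Lemma~\ref{lem:sw-to-mwdp} together with the MWOP dichotomy of \cite{DeligkasEiben:2023:arXiv} yields Theorem~\ref{thm:sw-general}. You have spelled out both directions carefully---in particular the reverse reduction (realising an arbitrary $M\in\Fcal$ as a welfare matrix via $u^{ij}=M$, $u^{ji}\equiv 0$) needed for the hardness half, which the paper leaves implicit.
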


%\argy{add some stuff regarding the number of matrices that suffice to make the problem hard}

% The reduction from welfare maximisation to MWDP is immediate. The reduction in the opposite direction can be seen as follows. Consider an arc $ij$ of weight $c$ with matrix $M$. For entry $m_{\strat{i}\strat{j}}$ in $M$, let $\frac{c\cdot m_{\strat{i}\strat{j}}}{2}$ be the payoff to each player when player $i$ chooses strategy $\strat{i}$ and player $j$ chooses $\strat{j}$. Clearly the total welfare is given by $c\cdot m_{\strat{i}\strat{j}}$.
% The proof now follows by appealing to the Theorem of \cite{DeligkasEiben:2023:arXiv}.

{It can easily be verified that the welfare-matrix $W^{ij}$ of a $2\times2$ pure-coordination game between players $i$ and $j$ has $w^{ij}_{\one\one} + w^{ij}_{\two\two} \geq w^{ij}_{\one\two} + w^{ij}_{\two\one}$.
Similarly, for all $2\times2$ games of anti-coordination, such an inequality never holds.
Given this, Theorem~\ref{thm:sw-general} immediately yields the following two corollaries.
% as regards pure coordination and anti-coordination polymatrix games.

%it is not difficult to construct anti-coordination games where $w^{ij}_{11} + w^{uv}_{22} < w^{uv}_{12} + w^{ij}_{21}$.\footnote{The construction of Appendix C.2 from \citeauthor{CD11-Cai-Daskalakis}~\shortcite{CD11-Cai-Daskalakis} creates these types of welfare matrices.}

\begin{corollary}
\label{cor:pure-coord}
The problem of maximising social welfare in binary-action pure-coordination polymatrix games is solvable in polynomial time.
\end{corollary}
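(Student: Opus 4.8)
The plan is to derive Corollary~\ref{cor:pure-coord} directly from Theorem~\ref{thm:sw-general} by verifying that every welfare matrix arising from a binary-action pure-coordination polymatrix game satisfies Property~\ref{property_iii}. Concretely, fix an arbitrary orientation $D_G$ of the underlying graph $G$, and for each arc $ij \in A(D_G)$ consider the welfare matrix $W^{ij}$ with entries $w^{ij}_{s_i,s_j} = u^{ij}(s_i,s_j) + u^{ji}(s_j,s_i)$. The family $\Fcal$ of all such welfare matrices that can appear is closed under multiplication by a positive scalar (scaling all payoffs in a two-player pure-coordination game scales its welfare matrix and preserves the property that $(a,a)$ and $(b,b)$ are strict equilibria), so Theorem~\ref{thm:sw-general} applies once we show every $M \in \Fcal$ satisfies $m_{\one\one} + m_{\two\two} \geq m_{\one\two} + m_{\two\one}$.

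For the verification step I would argue as follows. By Definition~\ref{def:pure}, in each two-player game $(u^{ij},u^{ji})$ the profiles $(a,a)$ and $(b,b)$ are strict equilibria. Strictness of $(a,a)$ gives $u^{ij}(a,a) > u^{ij}(b,a)$ and $u^{ji}(a,a) > u^{ji}(b,a)$; strictness of $(b,b)$ gives $u^{ij}(b,b) > u^{ij}(a,b)$ and $u^{ji}(b,b) > u^{ji}(a,b)$. Adding the first inequality to the fourth, and the second to the third, yields
\[
u^{ij}(a,a) + u^{ji}(b,b) > u^{ij}(b,a) + u^{ji}(a,b), \qquad u^{ji}(a,a) + u^{ij}(b,b) > u^{ji}(b,a) + u^{ij}(a,b).
\]
Summing these two and regrouping gives $\big(u^{ij}(a,a)+u^{ji}(a,a)\big) + \big(u^{ij}(b,b)+u^{ji}(b,b)\big) > \big(u^{ij}(a,b)+u^{ji}(b,a)\big) + \big(u^{ij}(b,a)+u^{ji}(a,b)\big)$, which is precisely $w^{ij}_{\one\one} + w^{ij}_{\two\two} > w^{ij}_{\one\two} + w^{ij}_{\two\one}$, i.e.\ Property~\ref{property_iii} (in fact with strict inequality). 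Since this holds for every arc regardless of the chosen orientation --- as the paper already remarks, transposing a matrix preserves Property~\ref{property_iii} --- every instance of the welfare-maximisation problem for pure-coordination polymatrix games is an instance of {\sc Welfare Maximisation}$[\Fcal]$ with $\Fcal$ satisfying the Property~\ref{property_iii} case of Theorem~\ref{thm:sw-general}, hence solvable in polynomial time. This also matches the claim in the excerpt that the welfare matrices of $G^c_{ij}$ from Figure~\ref{fig:welfareMatrices} have diagonal sum $\pay{i}+\pay{j} + (2-\pay{i}-\pay{j}) = 2 \geq 0$ equal to the off-diagonal sum.

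I do not anticipate a serious obstacle here: the corollary is essentially a specialisation of Theorem~\ref{thm:sw-general} together with a short payoff-inequality computation, and Lemma~\ref{lem:sw-to-mwdp} already bridges welfare and MWOP weights. The only point requiring mild care is the bookkeeping around orientations and the closure-under-positive-scalar hypothesis: one must make sure that the relevant family $\Fcal$ is taken to be (the closure of) the set of all welfare matrices of two-player pure-coordination games, so that the hypothesis of Theorem~\ref{thm:sw-general} is literally met, and then observe that the particular game instance only uses matrices from this family. Everything else is immediate.
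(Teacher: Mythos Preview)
Your proposal is correct and follows exactly the approach the paper takes: the paper simply asserts that ``it can easily be verified'' that the welfare matrix of any $2\times 2$ pure-coordination game satisfies $w^{ij}_{\one\one}+w^{ij}_{\two\two}\geq w^{ij}_{\one\two}+w^{ij}_{\two\one}$ and then invokes Theorem~\ref{thm:sw-general}, and you supply precisely this verification via the four strict-equilibrium inequalities. Your treatment is in fact slightly more detailed than the paper's (you explicitly handle closure under positive scalars and the orientation remark), but the argument is the same.
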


\begin{corollary}
\label{cor:anti-coord}
The problem of maximising social welfare in binary-action anti-coordination polymatrix games is {\sf NP}-hard. 
\end{corollary}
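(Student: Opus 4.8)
The plan is to derive Corollary~\ref{cor:anti-coord} as the hardness half of Theorem~\ref{thm:sw-general}. That theorem tells us {\sc Welfare Maximisation}$[\Fcal]$ is {\sf NP}-hard the moment $\Fcal$ is closed under multiplication by a positive scalar and it is \emph{not} the case that all matrices of $\Fcal$ share one of Properties~\ref{property_i}, \ref{property_ii}, \ref{property_iii}. So it is enough to produce one such family $\Fcal$ in which every matrix is the welfare matrix of a genuine two-player anti-coordination game: then {\sc Welfare Maximisation}$[\Fcal]$ is a special case of welfare maximisation over anti-coordination polymatrix games, and {\sf NP}-hardness of the special case yields {\sf NP}-hardness of the general one.

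First I would take $\Fcal := \bigl\{\, \left[\begin{smallmatrix} 0 & c \\ c & 0 \end{smallmatrix}\right] : c \in \mathbb{R}^{+} \,\bigr\}$. This is visibly closed under multiplication by a positive scalar. Every member has its unique largest entry off the main diagonal, so no member satisfies Property~\ref{property_i} or Property~\ref{property_ii}; and $0 + 0 < c + c$, so no member satisfies Property~\ref{property_iii}. Hence $\Fcal$ falls into the ``everything else'' case and {\sc Welfare Maximisation}$[\Fcal]$ is {\sf NP}-hard.

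Next I would verify realisability. Starting from $G^{s}_{ij}$ of Figure~\ref{fig:binaryGames} with $\pay{i} = \pay{j} = \tfrac{1}{2}$ and scaling all four payoff cells by a constant $c > 0$, one obtains a two-player game in which $(a,b)$ and $(b,a)$ remain strict equilibria --- an anti-coordination game in the sense of Definition~\ref{def:anti} --- whose welfare matrix, computed via $w^{ij}_{s_i,s_j} = u^{ij}(s_i,s_j) + u^{ji}(s_j,s_i)$, is exactly $\left[\begin{smallmatrix} 0 & c \\ c & 0 \end{smallmatrix}\right]$. Thus any instance of {\sc Welfare Maximisation}$[\Fcal]$ is, after fixing an arbitrary orientation (which does not affect the complexity, since transposing a matrix preserves each of Properties~\ref{property_i}--\ref{property_iii}), a welfare-maximisation instance on an anti-coordination polymatrix game, and the corollary follows. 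As a sanity check, the sub-instance in which every arc carries $\left[\begin{smallmatrix} 0 & 1 \\ 1 & 0 \end{smallmatrix}\right]$ is literally {\sc Maximum Cut} on the underlying graph, recovering \cite{Bramoulle:2007:GEB}; appealing to Theorem~\ref{thm:sw-general} lets us skip re-running that reduction.

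I do not anticipate a real obstacle here: given Theorem~\ref{thm:sw-general}, the entire argument reduces to the two routine checks above --- that $\Fcal$ is scale-closed while simultaneously dodging all three properties, and that each of its matrices is realised by a bona fide two-player anti-coordination game. The one point that needs a moment's care is confirming that {\sc Welfare Maximisation}$[\Fcal]$ is genuinely a \emph{restriction} of the target problem rather than something tangential, which is exactly what the realisability check secures.
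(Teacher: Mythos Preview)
Your proposal is correct and follows essentially the same route as the paper: both derive the corollary directly from the hardness direction of Theorem~\ref{thm:sw-general}, observing that welfare matrices of anti-coordination games violate Property~\ref{property_iii} (and, in your explicit witness family, Properties~\ref{property_i} and~\ref{property_ii} as well). The paper's one-line argument simply notes that \emph{every} anti-coordination welfare matrix fails Property~\ref{property_iii} and appeals to Theorem~\ref{thm:sw-general}; your version makes the argument more watertight by exhibiting a concrete scale-closed family and checking realisability, but the underlying idea is identical.
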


%%%%%%%%%%%%%%%%%%%%%%%%%%%%%%%%%%%%%%%%%%
%%%%% Potential %%%%%%%%%%%%%%%%%%%%%%%%%%
%%%%%%%%%%%%%%%%%%%%%%%%%%%%%%%%%%%%%%%%%%

\subsubsection{Maximising potential via MWOP}

We now show how to use MWOP to maximise the potential in a binary action polymatrix game $\Gcal$ on graph $G$.
We assume that every two player game is a potential game thereby ensuring that the $\Gcal$ is too.
Again fix an orientation on every edge $\set{i,j} \in E(G)$ with $i$ indexing the row and $j$ the column.
The potential matrix on arc $ij$ is given by $\pot^{ij} := \left[\begin{array}{cc}
\phi^{ij}_{\one\one} & \phi^{ij}_{\one\two} \\
\phi^{ij}_{\two\one} & \phi^{ij}_{\two\two} \\
\end{array}\right]$.

% to the MWDP problem. 
% Observe that for this class of game, every pairwise-potential $\pot^{ij}$ can be written as a $2 \times 2$ potential-matrix, 
% $\pot^{ij} := \left[\begin{array}{cc}
% \phi^{ij}_{\one\one} & \phi^{ij}_{\one\two} \\
% \phi^{ij}_{\two\one} & \phi^{ij}_{\two\two} \\
% \end{array}\right]$. 
% To create an instance of MWDP we will follow an almost identical approach as that performed for the welfare in the previous subsection.
% This time though, we use potential-matrices.
% The matrix $\pot^{ij}$ will be associated with the oriented edge between $i$ and $j$ as follows.
%     \begin{itemize}
%         \item $f(ij)=\pot^{ij}$, if we have added the arc $ij$;
%         \item $f(ij)=(\pot^{ij})^T$, if we have added the arc $ji$.
%     \end{itemize}
% To reduce from MWDP to potential maximization, let the matrix $M$ on arc $ij$ with weight $c$ be given by the potential matrix above.
% Such a matrix is the potential of a $2\times2$ game given by $\left[\begin{array}{cc}
% \frac{\pot{11}}{2}, \frac{\phi_{11}}{2} & 0, \phi_{12} - \frac{\phi_{11}}{2} \\
% \phi_{21} - \frac{\phi_{11}}{2}, 0 & \phi_{22}-\phi_{21}, \phi_{22} - \phi_{12} \\
% \end{array}\right]$.
% {As before, we have reduced in both directions and so we have a one-to-one translation between partitions and strategy profiles.

Let $\pot(\sbf{})$ be the sum of the pairwise potentials along every arc.
If we denote by $P(\sbf{})$ the partition associated with strategy profile $\sbf{}$, we get the following lemma for potential that parallels Lemma~\ref{lem:sw-to-mwdp} for welfare.
% , where $\pot(\sbf{})$ is the sum of the pairwise potentials.
% \marginpar{\tiny{\PN{Again I changed $w$ to $\xi$.}}}
\begin{lemma}
\label{lem:pot-to-mwdp}
For every strategy profile $\sbf{}$ of a binary action polymatrix game, it holds that $\pot(\sbf{}) = \xi^{P(\sbf{})}(D)$.
\end{lemma}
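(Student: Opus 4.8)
The plan is to establish the identity by unfolding the two definitions and matching them arc by arc, in exactly the same way as Lemma~\ref{lem:sw-to-mwdp} was obtained for welfare. Recall the set-up: we have fixed an orientation $D$ of the underlying graph $G$ with the convention that, for each arc $ij \in A(D)$, the player $i$ at the tail indexes the rows and the player $j$ at the head indexes the columns of the two-player game on $\set{i,j}$; and the MWOP instance is $(D,f)$ with $f(ij) = \pot^{ij}$, the potential matrix of that two-player game.

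First I would fix an arbitrary strategy profile $\sbf{} = (s_1, \dots, s_n)$ and let $P(\sbf{}) = (X_\one, X_\two)$ be the associated partition, so that $i \in X_{s_i}$ for every player $i$. For a single arc $ij \in A(D)$, the weight rule~\eqref{eq:weights} selects the cell of $f(ij) = \pot^{ij}$ determined by the parts containing $i$ and $j$; since $i$ indexes the row and $j$ the column, this cell is precisely $\phi^{ij}_{s_i, s_j} = \pot^{ij}(s_i, s_j)$. Hence $\xi^{P(\sbf{})}(ij) = \pot^{ij}(s_i, s_j)$ for every arc. Summing over all arcs, and using that $A(D)$ is in bijection with $E(G)$ (the orientation merely orders the endpoints of each edge), gives
\[
\xi^{P(\sbf{})}(D) = \sum_{ij \in A(D)} \xi^{P(\sbf{})}(ij) = \sum_{\set{i,j} \in E} \pot^{ij}(s_i, s_j) = \pot(\sbf{}),
\]
where the final equality is the definition of the potential of a pairwise-potential polymatrix game. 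This would complete the argument, and together with Theorem~\ref{thm:main} it yields the complexity dichotomy for potential maximisation just as Lemma~\ref{lem:sw-to-mwdp} does for \WelfareProblem.

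There is no real obstacle here; the lemma is a bookkeeping identity. The one point worth a remark — and the closest thing to a subtlety — is that the choice of orientation must be immaterial. This holds because the pairwise-potential function of a two-player game is a single function on $S \times S$ attached to the unordered edge: reversing the orientation of $\set{i,j}$ replaces $\pot^{ij}$ by its transpose $\pot^{ji}$ (with $\phi^{ji}_{s_j, s_i} = \phi^{ij}_{s_i, s_j}$), and the weight rule~\eqref{eq:weights} then reads off the transposed cell, producing the same value $\pot^{ij}(s_i, s_j)$. Consequently the identity, and hence the reduction to MWOP, is independent of the orientation chosen, mirroring the observation made for welfare matrices before Theorem~\ref{thm:sw-general}.
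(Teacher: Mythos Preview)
Your proposal is correct and matches the paper's approach: the paper does not give an explicit proof of Lemma~\ref{lem:pot-to-mwdp} at all, stating only that it parallels Lemma~\ref{lem:sw-to-mwdp} (which itself ``follows immediately''), and your arc-by-arc unfolding of the definitions is precisely that immediate argument. Your remark on orientation independence is also in line with the paper's own observation preceding Theorem~\ref{thm:sw-general}.
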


Before we state the main result for maximising the potential, let us first formally define the parameterised version of the problem depending on a family $\Fcal$ of allowed potential matrices.

\begin{definition}[\PotentialProblem] Given a binary action polymatrix game $\Gcal$ on graph $G$ and an orientation $D$ of $G$ such that for every arc $ij\in A(D)$, the game between $i$ and $j$ is a potential game with its potential matrix $\Phi^{ij}$ in $\Fcal$, find a strategy $\sbf{}$ that maximises the overall potential $\Phi(\sbf{})$ of $\Gcal$.    
\end{definition}

Lemma~\ref{lem:pot-to-mwdp} combined with the theorem of \cite{DeligkasEiben:2023:arXiv} yields the following.
To the best of our knowledge, this is the first example of a potential game where the identifying the potential maximiser is non trivial and yet doing so is possible.

\begin{theorem}
\label{thm:pot-general}
Given a collection of $2\times 2$ matrices $\Fcal$ that is closed under multiplication by a positive scalar, 
{\sc Potential Maximisation}\emph{[$\Fcal$]} can be solved in polynomial time 
if either all matrices in ${\cal F}$ satisfy Property~\ref{property_i}, or all matrices in ${\cal F}$ satisfy Property~\ref{property_ii}, or all matrices in ${\cal F}$ satisfy Property~\ref{property_iii}.
Otherwise, {\sc Potential Maximisation}\emph{[$\Fcal$]} is {\sf NP}-hard. 
% if one of the following holds for every matrix $\Phi\in \Fcal$.
% \begin{itemize}
%    \item $\phi^{ij}_{\one\one} + \phi^{ij}_{\two\two} \geq \phi^{ij}_{\one\two} + \phi^{ij}_{\two\one}$ for every $ij \in E(G)$.
%    \item $\phi^{ij}_{\one\one} \geq \max \{\phi^{ij}_{\one\two}, \phi^{ij}_{\two\one}, \phi^{ij}_{\two\two} \}$ for every $ij \in E(G)$.
%    \item $\phi^{ij}_{\two\two} \geq \max \{\phi^{ij}_{\one\one}, \phi^{ij}_{\one\two}, \phi^{ij}_{\two\one} \}$ for every $ij \in E(G)$.
% \end{itemize}
%In every other case, the problem is {\sf NP}-hard.
\end{theorem}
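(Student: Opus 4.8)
The plan is to obtain Theorem~\ref{thm:pot-general} from Lemma~\ref{lem:pot-to-mwdp} together with the dichotomy of \cite{DeligkasEiben:2023:arXiv} (Theorem~\ref{thm:main}), in direct analogy with the derivation of Theorem~\ref{thm:sw-general}, by exhibiting polynomial-time reductions in both directions between \PotentialProblem{} and MWOP[$\Fcal$]. For the tractable side I would take an instance of \PotentialProblem{} --- a pairwise-potential polymatrix game $\Gcal$ on $G$ with an orientation $D$ all of whose potential matrices $\pot^{ij}$ lie in $\Fcal$ --- and form the MWOP[$\Fcal$] instance $(D,f)$ with $f(ij):=\pot^{ij}$. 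This is a legal MWOP[$\Fcal$] instance, and Lemma~\ref{lem:pot-to-mwdp} gives $\pot(\sbf{}) = \xi^{P(\sbf{})}(D)$ for every $\sbf{}\in\stratSpace$, where $\sbf{}\mapsto P(\sbf{})$ is a bijection onto the partitions of $V(D)$; hence a partition maximising $\xi^{P}(D)$ yields a potential-maximising profile (which is automatically a pure Nash equilibrium). When $\Fcal$ is closed under positive scaling and every matrix in it satisfies one of Properties~\ref{property_i}--\ref{property_iii}, Theorem~\ref{thm:main} computes such a partition in polynomial time, so \PotentialProblem{} is tractable. As for welfare, the additive freedom in each $\pot^{ij}$ is harmless (it shifts $\xi^{P}(D)$ by a constant independent of $P$ and preserves Properties~\ref{property_i}--\ref{property_iii}), and the choice of orientation is harmless since reversing an arc transposes its matrix, and transposition preserves all three properties.

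For the hard side, assume $\Fcal$ is not uniformly covered by any of Properties~\ref{property_i}--\ref{property_iii}, so MWOP[$\Fcal$] is {\sf NP}-hard by Theorem~\ref{thm:main}; I would reduce it to \PotentialProblem{}. Given an MWOP[$\Fcal$] instance $(D,f)$, build a polymatrix game on the undirected graph $G(D)$ underlying $D$ by placing, on the edge $\set{i,j}$ arising from an arc $ij\in A(D)$, the two-player game of common interest in which both players receive the $(s_i,s_j)$-entry of $M:=f(ij)$ at profile $(s_i,s_j)$, i.e.\ $u^{ij}(s_i,s_j)=u^{ji}(s_j,s_i)$ equal to that entry. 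A common-interest game is an exact potential game whose potential equals the common payoff function, so this game is a potential game with potential matrix exactly $M\in\Fcal$; thus $\Gcal$ is a pairwise-potential polymatrix game, $D$ witnesses membership in \PotentialProblem{}, and all its potential matrices lie in $\Fcal$. Applying Lemma~\ref{lem:pot-to-mwdp} once more gives $\pot(\sbf{})=\xi^{P(\sbf{})}(D)$, so a potential-maximising profile of $\Gcal$ is precisely an MWOP-optimal partition of $(D,f)$; since the construction is polynomial, \PotentialProblem{} is {\sf NP}-hard.

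The only ingredient not already available from the welfare argument is the realisability fact used on the hard side --- that every real $2\times2$ matrix is the potential matrix of a two-player potential game --- and I expect the short verification of this, via the defining identity $\pot(\strat{i}',\sbf{-i})-\pot(\strat{i}'',\sbf{-i})=U_i(\strat{i}',\sbf{-i})-U_i(\strat{i}'',\sbf{-i})$ applied to the common-interest game, together with the routine check that the constructed game is a legal \PotentialProblem{} instance, to be the only step requiring genuine (if minimal) care. For welfare the corresponding realisability was even simpler, since one could split $M$ arbitrarily between $u^{ij}$ and $u^{ji}$; here one instead exploits that common-interest games are automatically potential games.
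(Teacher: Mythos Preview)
Your proposal is correct and follows essentially the same approach as the paper, which simply asserts that Lemma~\ref{lem:pot-to-mwdp} combined with the MWOP dichotomy of \cite{DeligkasEiben:2023:arXiv} yields the result. You supply more detail than the paper does---in particular, the realisability step for the hardness direction via common-interest games is left implicit there---but the underlying argument is identical.
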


Consider a two player binary action pure coordination game between players, labeled $i$ and $j$, that admits a potential, $\pot^{ij}$.
In this case it is immediate that $\phi^{ij}_{\one\one}+\phi^{ij}_{\two\two} \geq \phi^{ij}_{\one\two}+\phi^{ij}_{\two\one}$ (since both profiles $(\one, \one)$ and $(\two, \two)$ are pure strategy Nash equilibria and hence are local maximisers of the potential function $\pot^{ij}$).
For a two player game of anti-coordination the inequality holds strictly in the other direction since profiles $(\one, \two)$ and $(\two, \one)$ are equilibria.
Given this, Theorem~\ref{thm:pot-general} reveals another difference between pure-coordination and anti-coordination polymatrix games, when every two player game in each is a pairwise-potential game. 
In the latter class of games it is {\sf NP}-hard to maximise the potential function (this was already shown in \cite{Bramoulle:2007:GEB} and \cite{CaiDaskalakis:2011:}), whereas for the former class of games the problem can be solved in polynomial time.

\begin{corollary}
\label{cor:pot-pure-coord}
The problem of maximising the potential in binary action pure-coordination polymatrix games is solvable in polynomial time.
\end{corollary}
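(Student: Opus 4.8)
The plan is to obtain the corollary as a direct instantiation of Theorem~\ref{thm:pot-general}. Concretely, I would let $\Fcal$ be the family of all $2\times2$ matrices $M$ that arise as the exact potential matrix $\pot^{ij}$ of some two-player binary-action game in which the profiles $(\one,\one)$ and $(\two,\two)$ are strict Nash equilibria --- i.e.\ exactly the pairwise games allowed in a pure-coordination polymatrix game. The three things to check are: (i) $\Fcal$ is closed under multiplication by a positive scalar; (ii) every matrix in $\Fcal$ satisfies Property~\ref{property_iii}; and (iii) maximising the potential of a pairwise-potential pure-coordination polymatrix game is, after fixing an arbitrary orientation, an instance of {\sc Potential Maximisation}$[\Fcal]$. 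Granting these, Theorem~\ref{thm:pot-general} immediately yields a polynomial-time algorithm.

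For (i): scaling all payoffs of a two-player game by $c>0$ scales the exact potential, hence the potential matrix, by $c$, and leaves the strict-equilibrium structure of $(\one,\one)$ and $(\two,\two)$ unchanged; dividing by $c$ is symmetric. So $M\in\Fcal$ iff $c\cdot M\in\Fcal$, as required by the hypothesis of Theorem~\ref{thm:pot-general}.

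For (ii): fix $M=\pot^{ij}\in\Fcal$. Because $(\one,\one)$ is a strict equilibrium, the row player strictly prefers $\one$ to $\two$ against $\one$, and the defining identity of the potential function gives $\phi^{ij}_{\one\one}-\phi^{ij}_{\two\one}=u^{ij}(\one,\one)-u^{ij}(\two,\one)>0$. Because $(\two,\two)$ is a strict equilibrium, $\phi^{ij}_{\two\two}-\phi^{ij}_{\one\two}=u^{ij}(\two,\two)-u^{ij}(\one,\two)>0$. Adding the two inequalities gives $\phi^{ij}_{\one\one}+\phi^{ij}_{\two\two}>\phi^{ij}_{\one\two}+\phi^{ij}_{\two\one}$, which is Property~\ref{property_iii} (indeed strictly). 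This is precisely the inequality flagged in the paragraph preceding the corollary.

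For (iii) and the conclusion: given a pairwise-potential pure-coordination polymatrix game $\Gcal$ on $G$, orient every edge arbitrarily to obtain $D$ and set $f(ij)=\pot^{ij}$; by the above this is a legitimate instance of {\sc Potential Maximisation}$[\Fcal]$ with all matrices in $\Fcal$ satisfying Property~\ref{property_iii}, and the orientation is immaterial since transposing a matrix preserves Property~\ref{property_iii}. By Lemma~\ref{lem:pot-to-mwdp}, a partition $P$ maximising $\xi^P(D)$ corresponds to a strategy profile $\sbf{}$ maximising $\pot(\sbf{})$, and Theorem~\ref{thm:pot-general} produces such a partition in polynomial time. The only step needing genuine care is the verification in (ii) --- reading the strict-equilibrium conditions through the exact-potential identity --- together with the bookkeeping in (i) that makes the matrix family closed under positive scaling so that Theorem~\ref{thm:pot-general} applies verbatim; all the algorithmic substance is already contained in Theorem~\ref{thm:main}.
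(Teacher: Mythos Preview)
Your proposal is correct and follows essentially the same route as the paper: the paper notes that in any two-player pure-coordination potential game the profiles $(\one,\one)$ and $(\two,\two)$ are local maximisers of $\pot^{ij}$, whence $\phi^{ij}_{\one\one}+\phi^{ij}_{\two\two}\ge\phi^{ij}_{\one\two}+\phi^{ij}_{\two\one}$, and then invokes Theorem~\ref{thm:pot-general}. Your derivation of Property~\ref{property_iii} via the exact-potential identity is the same argument made explicit, and your additional check that the matrix family is closed under positive scaling is a detail the paper leaves implicit.
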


\begin{corollary}
\label{cor:pot-anti-coord}
The problem of maximising the potential in binary-action anti-coordination polymatrix games is {\sf NP}-hard. 
\end{corollary}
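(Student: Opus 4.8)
The plan is to obtain the corollary directly from Theorem~\ref{thm:pot-general}, by identifying a family of potential matrices that captures anti-coordination two-player games and then checking that this family lands in the {\sf NP}-hard branch of the dichotomy. This also re-derives, through the MWOP route, the hardness established in \cite{Bramoulle:2007:GEB}.

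First I would pin down the shape of the pairwise potential matrix $\pot^{ij}$ of an anti-coordination two-player game. Because the game on edge $\{i,j\}$ is a potential game, its strict pure Nash equilibria coincide with the strict local maxima of $\pot^{ij}$ over the cube $\{a,b\}^2$, and by definition of anti-coordination these are exactly $(a,b)$ and $(b,a)$. Comparing $(a,b)$ with its two cube-neighbours $(a,a)$ and $(b,b)$ gives $\phi^{ij}_{ab}>\phi^{ij}_{aa}$ and $\phi^{ij}_{ab}>\phi^{ij}_{bb}$, and the same comparison at $(b,a)$ gives $\phi^{ij}_{ba}>\phi^{ij}_{aa}$ and $\phi^{ij}_{ba}>\phi^{ij}_{bb}$. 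Hence every anti-coordination potential matrix lies in the family $\Fcal^{\mathrm{anti}}$ of $2\times 2$ real matrices $M$ with $\min\{m_{ab},m_{ba}\}>\max\{m_{aa},m_{bb}\}$, and $\Fcal^{\mathrm{anti}}$ is plainly nonempty and closed under multiplication by a positive scalar.

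Next I would observe the converse at the instance level: given an instance of {\sc Potential Maximisation}$[\Fcal^{\mathrm{anti}}]$, since each potential matrix $\pot^{ij}$ belongs to $\Fcal^{\mathrm{anti}}$, running the local-maximum argument in reverse shows that $(a,b)$ and $(b,a)$ are strict local maxima of $\pot^{ij}$ while $(a,a)$ and $(b,b)$ are not even local maxima; thus the only strict equilibria of the game on $\{i,j\}$ are $(a,b)$ and $(b,a)$, i.e.\ the instance is a binary-action anti-coordination polymatrix game. Consequently {\sc Potential Maximisation}$[\Fcal^{\mathrm{anti}}]$ is a sub-problem of maximising the potential in anti-coordination polymatrix games, so it suffices to prove the former is {\sf NP}-hard.

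Finally I would invoke Theorem~\ref{thm:pot-general} with $\Fcal=\Fcal^{\mathrm{anti}}$. The inequalities above say that for every $M\in\Fcal^{\mathrm{anti}}$ the entry $m_{aa}$ is not the maximum (so Property~\ref{property_i} fails), $m_{bb}$ is not the maximum (so Property~\ref{property_ii} fails), and $m_{aa}+m_{bb}<m_{ab}+m_{ba}$ (so Property~\ref{property_iii} fails); in particular $\Fcal^{\mathrm{anti}}$ is not contained in any of the three tractable classes. The theorem then puts {\sc Potential Maximisation}$[\Fcal^{\mathrm{anti}}]$ in the {\sf NP}-hard case, and the corollary follows. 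I do not expect a real obstacle here: the argument is essentially bookkeeping over which of Properties~\ref{property_i}--\ref{property_iii} a generic anti-coordination potential matrix can satisfy (none of them), together with the standard equilibrium/local-maximum correspondence for potential games; the only step deserving a line of care is the instance-level converse that keeps us inside the anti-coordination class, rather than in some larger family of polymatrix games, once we restrict the matrices to $\Fcal^{\mathrm{anti}}$.
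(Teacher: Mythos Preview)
Your proposal is correct and follows essentially the same route as the paper: both invoke Theorem~\ref{thm:pot-general} after observing that the potential matrix of any two-player anti-coordination game has its strict local maxima on the off-diagonal, so none of Properties~\ref{property_i}--\ref{property_iii} can hold. Your write-up is a bit more explicit than the paper's one-line justification---in particular about the instance-level converse ensuring that the hard instances of {\sc Potential Maximisation}$[\Fcal^{\mathrm{anti}}]$ really are anti-coordination polymatrix games---but the underlying argument is the same.
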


\subsection{Computing a welfare-optimal equilibrium}\label{ssec:bestNash}

%In this section we address the issue of computing the optimal equilibrium from the perspective of a Utilitarian social planner.
%In this section we focus on the computation of welfare-optimal Nash equilibria in binary-action polymatrix games from the perspective of a Utilitarian social planner.\footnote{We focus only on pure-coordination polymatrix games since the result of~\cite{CaiDaskalakis:2011:} implies that in anti-coordination games, welfare-optimal Nash equilibria are {\sf NP}-hard to compute.}
In the previous section we showed that computing certain outcomes in pure-coordination polymatrix games is tractable but are not in anti-coordination games.
% tractable for  more well-behaved, since they admit polynomial-time algorithms for welfare and potential maximization. 
However, as we now show, when it comes to computing a ``best'' Nash equilibrium, things become almost immediately intractable.
We demonstrate this using arguably simplest pure-coordination polymatrix game with heterogeneous preferences that can played on arbitrary graphs: the language game of \cite{Neary:2011:MGG,Neary:2012:GEB}.

To introduce the language game we first introduce a special class of binary-action pure-coordination polymatrix games, termed {\em threshold games} \citep{NearyNewton:2017:JMID}.
% \footnote{A threshold game is the polymatrix game version of the classic threshold model of \cite{Granovetter:1978:AJS}.}
A threshold game is a pure-coordination polymatrix game where every player $i$ is associated with a parameter $\gamma_i \in [0,1]$, and with each neighbour $j$, player $i$ plays the following game
\begin{figure}[htp!]%\hspace*{\fill}%
\hspace*{\fill}\begin{game}{2}{2}[${i}$][${j}$]
& $a$ & $b$\\
$a$ &$\pay{i},\pay{j}$ & $0,0$\\
$b$ &$0,0$ &1$-\pay{i}$, 1$-\pay{j}$
\end{game}\hspace*{\fill}
%\begin{game}{2}{2}[${i}$][${j}$]
%& $a$ & $b$\\
%$a$ &0,0 & $\pay{i}$, $1-\pay{j}$\\
%$b$ & $1-\pay{i}$, $\pay{j}$ &$0,0$
%\end{game}\hspace*{\fill}
%\caption[]{$G_{ij}^c$ and $G_{ij}^{s}$.}
%\label{fig:binaryGames}
\end{figure}
%$\Pi^{u} := \left[\begin{array}{cc}
%\gamma_u & 0 \\
%0 & 1-\gamma_u \\
%\end{array}\right]$.
%\argy{add some intuition why we call it threshold}

% Note that the threshold model is {\it locally anonymous} in that each player cares only about the number of neighbours choosing each action and not the identity of those neighbours.
Note that each player in the threshold model cares only about the number of neighbours choosing each action and not the identity of those neighbours.
That is, when player $i$ chooses action $a$ (respectively $b$) she earns $\pay{i}$ (respectively $1-\pay{i}$) from every neighbour who also choose $a$ (respectively $b$).

The 1-type threshold model, i.e., one wherein $\pay{i} = \pay{}$ is the same for all players $i$, is one of the most well-studied large population models.
When, for example, $\pay{} > \frac{1}{2}$, it has been shown that everyone playing strategy $\one$ is stochastically stable \citep{FosterYoung:1990:TPB, Young:1993:E} in a host of network structures \citep{KandoriMailath:1993:E,Ellison:1993:E,Peski:2010:JET,KreindlerYoung:2013:GEB}.
% all examine stochastically stable outcomes \citep{FosterYoung:1990:TPB, Young:1993:E} in varying network structures.
In related work, \cite{Morris:2000:RES} considers the issue of contagion in an infinite variant of the 1-type threshold model.
\citeauthor{Morris:2000:RES} supposes that the current state of play has all players choosing strategy $\two$.
He proposes {\it flipping} the behaviour of a finite of subset agents from $\two$ to $\one$ and then ``letting the system go'' via best-response dynamics.
The network property of {\it cohesiveness} is key for the superior outcome of everyone choose action $\one$ to eventually ``replace'' the inferior one of everyone playing $\two$.
% -- when the behaviour of only a finite group of individuals is {\it flipped} from $\two$ to $\one$.
Note that in all variations of the 1-type threshold model, what is viewed as the optimal outcome is not only agreed upon by one and all but is also trivial to identify. 
As such, questions of study include whether decentralised behaviour will ultimately lead there, if so how long will it take, and so on.

%Such a model is the most commonly studied model in stochastic evolutionary game theory.

%\PN{Without being specifically described as such, polymatrix games appear all over. For example, the stage game of many classic models in stochastic evolutionary game theory are polymatrix games \citep{KandoriMailath:1993:E,Ellison:1993:E,Morris:2000:RES,Peski:2010:JET,KreindlerYoung:2013:GEB}. In each of these models, it is trivial to identify the unique optimal outcome; the question of interest is whether a society can get there. The conclusion is by and large negative:\ even if there is a universally agreed upon optimal equilibrium, successfully achieving that outcome is far from assured. The conclusion of this literature is that even if there is a universally agreed upon optimal equilibrium, successfully achieving that outcome is far from assured. We note however that in this set up it is trivial to identify the optimal outcome, the question of interest is whether a society can get there.}

The language game of \cite{Neary:2011:MGG,Neary:2012:GEB} is a two-type threshold model.
The population is comprised of two groups, Group $A$ and Group $B$.
Every player is a member of one of two groups, and all players in the same group have the same threshold, either $\pay{A}$ or $\pay{B}$.
With two types of player, there are three types of pairwise interaction, $A$ with $A$, $B$ with $B$, and $A$ with $B$.
These games are shown below in Figure~\ref{fig:2typeModel}.
\begin{figure}[htb!]
\begin{center}
\[
\hspace{.6in} (u^{AA}, u^{AA}) \hspace{2.5in} (u^{BB}, u^{BB})
\]
\begin{game}{2}{2}[$A_{1}$][$A_{2}$]\label{game:GAA}
 	& $a$	& $b$\\
$a$	& $\pay{A}$, $\pay{A}$ & $0, 0$ \\
$b$ 	& $0, 0$ & $1 - \pay{A}$, $1 - \pay{A}$
\end{game}
\hspace{.3in}
\begin{game}{2}{2}[$B_{1}$][$B_{2}$]\label{game:GBB}
 	& $a$	& $b$\\
$a$	& $\pay{B}$, $\pay{B}$ & $0, 0$ \\
$b$ 	& $0, 0$ & $1 - \pay{B}$, $1 - \pay{B}$
\end{game}
\vspace{.1in}
\[
\hspace{.6in} (u^{AB}, u^{BA})
\]
\begin{game}{2}{2}[$A$][$B$]\label{game:GAB}
 	& $a$	& $b$\\
$a$	& $\pay{A}$, $\pay{B}$ & $0, 0$ \\
$b$ 	& $0, 0$ & $1 - \pay{A}$, $1 - \pay{B}$
\end{game}
\end{center}
\caption{Three pairwise interactions: $(u^{AA}, u^{AA})$, $(u^{BB}, u^{BB})$, and $(u^{AB}, u^{BA})$.}
\label{fig:2typeModel}
\end{figure}

%Both within group interactions are symmetric.
%The across group game is asymmetric whenever $\pay{A} \neq \pay{B}$.
%The set up is {\it locally anonymous} in that each player cares only about the number of neighbours choosing each action and not the identity (the group affiliation) of those neighbours.

% such that all players of type $A$ have threshold $\pay{A}$ and all players in Group $B$ have threshold $\pay{B}$, with $0\le \pay{B} \le \pay{A} \le 1$.
%There are three types of two-player games:\ two within group interactions given by $A$ with $A$, $B$ with $B$, and a third across group interaction involving plauyers of different group.
% A special case of the threshold model is that with two types:\ the Language Game \cite{Neary:2012:GEB}.
% In this model, the population is split into two groups, $A$ and $B$, such that all players in Group $A$ have threshold $\pay{A}$ and all players in Group $B$ have threshold $\pay{B}$, with $0\le \pay{B} \le \pay{A} \le 1$.

Both within group interactions are symmetric.
The across group game is an asymmetric interaction whenever $\pay{A} \neq \pay{B}$.
When $\pay{B} < \frac{1}{2} < \pay{A}$ the across group game is a battle of the sexes and so a tension emerges:\ Group-$A$ players prefer the equilibrium where everyone coordinates on action $\one$ while Group-$B$ players prefer the opposite.
Even for this simple two-type threshold model, the following theorem shows that most cases are intractable.
%, as the following shows.

% As a warm-up, observe that if $\gamma_u \geq \frac{1}{2}$ for every player $u$, or $\gamma_u \leq \frac{1}{2}$ for every player $u$, then finding a welfare-maximizing Nash equilibrium is a trivial task; in the former case all players choose action \one, while in the latter case all players choose action \two. 
% However, as we show next, the problem becomes intractable almost for every other case.

% \marginpar{\tiny{\PN{Perhaps this theorem should be written in the same style as Theorems \ref{thm:sw-general} and \ref{thm:pot-general}? With three parts and an "everything else".}}}

\begin{theorem}
\label{thm:max_NE-hard}
The complexity of finding a welfare-optimal Nash equilibrium in the language game, a threshold-game with two thresholds, $\pay{A}, \pay{B}$, such that $0 \le \pay{B} \le \pay{A} \le 1$, is as follows.
\begin{description}
\item[1.] If $\pay{A} \leq 1/2$ the problem can be solved in polynomial time.
\item[2.] If $\pay{B} \geq 1/2$ the problem can be solved in polynomial time.
\item[3.] If $\pay{B} = 0$ and $\pay{A} = 1$ the problem can be solved in polynomial time.
\item[4.] In all other cases the problem is {\sf NP}-hard.
\end{description} 
\end{theorem}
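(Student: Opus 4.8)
The plan is to handle the four cases separately, exploiting the structure of threshold games. Recall that in a pure-coordination polymatrix game, the two ``uniform'' profiles $\sbf{}=(a,\dots,a)$ and $\sbf{}=(b,\dots,b)$ are always Nash equilibria (every pairwise interaction has $(a,a)$ and $(b,b)$ as strict equilibria). So the set $\stratSpace^*$ is always non-empty, and the question is whether one can find the welfare-optimal member of it efficiently. The key observation tying the tractable cases together is that when \emph{all} thresholds lie weakly on one side of $1/2$, the game becomes ``aligned'': either every player weakly prefers to match on $a$ or every player weakly prefers to match on $b$, and this will let us show that welfare and equilibrium can be optimised simultaneously via an MWOP-style cut argument.

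For \textbf{Case 1} ($\pay{A}\le 1/2$, hence $\pay{B}\le 1/2$): First I would argue that every welfare-maximising profile is automatically a Nash equilibrium, so that \eqref{eq:bestNash} collapses to \eqref{eq:welfMax}, which is tractable by Corollary~\ref{cor:pure-coord}. To see the collapse, note that when a player $i$ has threshold $\pay{i}\le 1/2$, her contribution to welfare and to her own utility both favour her matching whichever action is more prevalent among neighbours weighted appropriately; more carefully, I would show that at a welfare maximum no player has a profitable deviation, because any such deviation would be a strict improvement for $i$'s own edges, and with $\pay{i}\le 1/2$ the ``mirror'' gain to $i$ is at least half of the edge-welfare change, so the move cannot decrease total welfare — contradicting maximality unless it's already an equilibrium. (One must be slightly careful with ties; picking a welfare maximiser that is also ``stable under indifferent moves'', i.e. a local-search fixed point among welfare maxima, suffices and such a profile still maximises welfare.) \textbf{Case 2} ($\pay{B}\ge 1/2$) is the mirror image under swapping the roles of $a$ and $b$ (equivalently, relabelling each player's strategy set), so the same argument applies.

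For \textbf{Case 3} ($\pay{B}=0$, $\pay{A}=1$): here Group-$B$ players have a \emph{dominant} action $a$ (threshold $0$ means they always (weakly) want $a$), and Group-$A$ players have dominant action $b$ (threshold $1$ means they always want $b$) — wait, this needs care: threshold $\pay{i}$ means play $b$ iff at least fraction $\pay{i}$ of neighbours play $b$, so $\pay{B}=0$ forces $B$-players to $b$ and $\pay{A}=1$ forces $A$-players to $a$ in any equilibrium whenever they have at least one neighbour. So the equilibrium is essentially forced on every non-isolated vertex, isolated vertices are free, and one reads off the unique (up to isolated-vertex choices) candidate and checks it — polynomial time. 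For \textbf{Case 4} (all remaining cases: $\pay{A}>1/2$ and $\pay{B}<1/2$, excluding the corner $\pay{B}=0,\pay{A}=1$), I would give a reduction from an {\sf NP}-hard problem. The natural target is something like {\sc Max Cut} or a constraint-satisfaction/independent-set variant: the battle-of-the-sexes structure of the across-group game ($\pay{B}<1/2<\pay{A}$) means $A$–$B$ edges create genuinely conflicting incentives, and by choosing the graph (a bipartite-ish gadget with $A$ on one side, $B$ on the other, plus degree-padding to tune effective thresholds) one forces the set of Nash equilibria to encode, say, the feasible assignments of a SAT instance or cuts of a graph, with welfare encoding the objective. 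The main work is designing gadgets whose only equilibria correspond to the combinatorial objects we want and verifying the welfare values separate yes- from no-instances; degree-padding (adding pendant vertices of the appropriate type) is the standard trick to simulate arbitrary rational thresholds on a vertex, which is needed since $\pay{A},\pay{B}$ are fixed constants but we need to create ``tight'' choice points.

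The main obstacle I expect is \textbf{Case 4}: getting a hardness reduction that works for \emph{every} pair $(\pay{A},\pay{B})$ with $1/2<\pay{A}\le 1$ and $0\le\pay{B}<1/2$ simultaneously, including the boundary behaviours $\pay{A}=1$ (with $\pay{B}>0$) and $\pay{B}=0$ (with $\pay{A}<1$), which may each need a slightly different gadget since the ``forcing'' seen in Case 3 partially kicks in. A secondary subtlety is the tie-handling in Cases 1–2 — ensuring that a welfare-maximising profile can always be perturbed at zero welfare cost into an actual equilibrium — but this is a local-search argument and should go through cleanly.
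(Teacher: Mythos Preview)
Your Case~3 argument contains a genuine error. You claim that with $\pay{B}=0$ and $\pay{A}=1$ the equilibrium is ``essentially forced on every non-isolated vertex'', but this is false: the relevant actions are only \emph{weakly} dominant. A $B$-player with $\pay{B}=0$ earns $0$ from action $a$ regardless, and earns (the number of $b$-neighbours) from action $b$; if all her neighbours play $a$ she is indifferent and may play $a$ in equilibrium. Concretely, both all-$a$ and all-$b$ are Nash equilibria on any graph, and there are typically many more. The paper shows that in any equilibrium each connected component of $G[A]$ and of $G[B]$ is monochromatic, subject to a one-sided constraint on $A$--$B$ edges, and that finding the welfare-optimal equilibrium reduces to a minimum $(x,y)$-cut in an auxiliary weighted digraph built on these components. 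So Case~3 is a real algorithmic step, not a trivial read-off.

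Your Cases~1 and~2 reach the right conclusion but are overcomplicated: when all thresholds are at most $1/2$, every edge's welfare matrix has its maximum entry at $(b,b)$, so ``all play $b$'' is simultaneously the global welfare maximum and a Nash equilibrium --- one just outputs it (and symmetrically for Case~2). No appeal to Corollary~\ref{cor:pure-coord} or to a local-search/tie-breaking argument is needed. For Case~4 your plan is the right shape; the paper reduces from {\sc Minimum Transversal} in $3$-uniform hypergraphs, using large cliques $C_A$ (of $A$-vertices) and $C_B$ (of $B$-vertices) to anchor the two actions, a $B$-vertex $r_e$ per hyperedge wired to both cliques so that its equilibrium condition forces at least one of its three ``vertex'' neighbours in $V'$ to play $a$, and padding sets $Z_u$ so that the transversal size dominates the welfare calculation. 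The constants are tuned to the fixed pair $(\pay{A},\pay{B})$, and a single construction works uniformly for all $0<\pay{B}<1/2<\pay{A}\le 1$.
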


% We now sketch the proof of Theorem~\ref{thm:max_NE-hard}.
% The  proof of Theorem~\ref{thm:max_NE-hard} is found in Appendix~\ref{app:proofs}.

% \marginpar{\tiny{\PN{Transversal or traversal? We did not know. Please do a search and amend everywhere.}}}

The proof, that can be found in Appendix~\ref{app:proof_thm3}, proceeds by reducing the {\sc Minimum Transversal} problem in 3-uniform hypergraphs, a problem known to be {\sf NP}-hard \citep{GareyJohnson:1979:book}, to that of computing the welfare-optimal equilibrium in the language game.
Given Theorem~\ref{thm:max_NE-hard}, we can conclude the following:
\begin{corollary}
\label{cor:max_NE-hard}
It is {\sf NP}-hard to determine the welfare optimal Nash equilibrium in binary-action pure coordination polymatrix games.
\end{corollary}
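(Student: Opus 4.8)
The plan is to obtain Corollary~\ref{cor:max_NE-hard} as an immediate specialisation of Theorem~\ref{thm:max_NE-hard}, using the fact that the language game is already a member of the class of binary-action pure-coordination polymatrix games. First I would record the membership explicitly: a threshold game is, by definition, a pure-coordination polymatrix game, and the language game is simply a threshold game with two thresholds $\pay{A}, \pay{B}$, assembled from the three pairwise interactions in Figure~\ref{fig:2typeModel}. As long as the thresholds lie strictly inside $(0,1)$, each of those $2\times2$ games has $(a,a)$ and $(b,b)$ as strict equilibria (since $\pay{A}, \pay{B} > 0$ make $(a,a)$ strict and $\pay{A}, \pay{B} < 1$ make $(b,b)$ strict), so Definition~\ref{def:pure} is satisfied. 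Hence the family of language games is a subclass of the family of pure-coordination polymatrix games.

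Next I would exhibit a concrete choice of thresholds that lands in the intractable case of Theorem~\ref{thm:max_NE-hard}, for instance $\pay{B} = 1/4$ and $\pay{A} = 3/4$. These satisfy $0 \le \pay{B} \le \pay{A} \le 1$; they violate $\pay{A} \le 1/2$ (item~1), violate $\pay{B} \ge 1/2$ (item~2), and are distinct from the pair $(0,1)$ (item~3), so they fall under item~4, for which Theorem~\ref{thm:max_NE-hard} asserts that computing a welfare-optimal Nash equilibrium is \NP-hard. Thus there is a fixed family of pure-coordination polymatrix games on which the welfare-optimal-equilibrium problem is already \NP-hard.

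Finally I would invoke the standard subclass argument: any polynomial-time algorithm for finding a welfare-optimal Nash equilibrium in arbitrary binary-action pure-coordination polymatrix games would in particular solve these language-game instances in polynomial time, contradicting the hardness just established (under $\mathsf{P}\neq\NP$). Therefore the general problem is \NP-hard, which is exactly the statement of the corollary. There is essentially no obstacle beyond these verifications, since all the difficulty resides in Theorem~\ref{thm:max_NE-hard}, which I am free to assume; the only points that genuinely need checking are that the language game meets Definition~\ref{def:pure} and that item~4 is non-empty, both of which are handled by the witness $\pay{B}=1/4,\ \pay{A}=3/4$.
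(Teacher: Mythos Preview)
Your proposal is correct and matches the paper's own argument: the corollary is stated immediately after Theorem~\ref{thm:max_NE-hard} with no further proof, precisely because the language game is already a pure-coordination polymatrix game and case~4 of that theorem supplies the hardness. Your explicit checks (that thresholds in $(0,1)$ make both diagonal profiles strict, and that e.g.\ $\pay{B}=1/4,\ \pay{A}=3/4$ lands in case~4) are the right verifications and are exactly what the paper leaves implicit.
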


\newpage

\newpage

\clearpage

\appendix

\section*{APPENDIX}\label{APP}

\section{Proofs}\label{app:proofs}

\subsection[Part (iii) of Theorem in]{Part (iii) of Theorem in \cite{DeligkasEiben:2023:arXiv}}\label{app:proofMain}

If $m_{11} + m_{22} \geq m_{12} + m_{21}$ for all matrices $M \in \Fcal$ then MWOP(${\Fcal}$) can be solved in polynomial time.

%{\bf Lemma} {\bf \ref{lem:case-1}}
%{\em If $m_{11} + m_{22} \geq m_{12} + m_{21}$ for all matrices $M \in {\cal F}$ then {\it \mwdp(${\cal F}$)} can be solved in polynomial time.}
\begin{proof}
Let $(D,f)$ be an instance of MWOP(${\Fcal}$).
We will construct a new (undirected) multigraph $H$ with vertex set $V(D) \cup \{x,y\}$ as follows. Let $G(D)$ denote the undirected multigraph obtained from $D$ by removing
orientations of all arcs. 
Initially let $E(H) = E(G(D)) \cup \{xi,yj \; | \; i \in V(D) \}$ 
and let all edges in $H$ have weight zero. Now for each arc $ij \in A(D)$ we modify the weight function $w$ of $H$ as follows, where $M=f(uv)$ is the matrix associated
with the arc $ij$.

\begin{itemize}
\item Let $w(ij)=\frac{m_{11} + m_{22} - m_{12} - m_{21}}{2}$. Note $w(ij)$ is the weight of the undirected edge $ij$ in $H$ associated with the arc $ij$ in $D$ and that this weight is non-negative.
\item Add $\frac{-m_{22}}{2}$ to $w(xi)$.
\item Add $\frac{-m_{22}}{2}$ to $w(xj)$.
\item Add $\frac{m_{21}-m_{11}-m_{12}}{2}$ to $w(yi)$.
\item Add $\frac{m_{12}-m_{11}-m_{21}}{2}$ to $w(yj)$.
\end{itemize}

Let $\theta$ be the smallest possible weight of all edges in $H$ after we completed the above process ($\theta$ may be negative).
Now consider the weight function $w^*$ obtained from $w$ by 
subtracting $\theta$ from all edges incident with $s$ or $t$. That is, $w^*(ij)=w(ij)$ if $\{i,j\} \cap \{x,y\}=\emptyset$ and 
$w^*(ij)=w(ij) - \theta$ otherwise.  
Now we note that all $w^*$-weights in $H$ are non-negative.  We will show that for any $(x,y)$-cut, $(X_1,X_2)$ in $H$ (i.e., $(X_1,X_2)$ partitions 
$V(H)$ and $x \in X_1$ and $y \in X_2$) the $w^*$-weight of the cut is equal to $-w^P(D) - |V(D)|\theta$, where $P$ is the partition 
$(X_1 \setminus \{s\},X_2 \setminus \{y\})$ in $D$.  Therefore a minimum weight cut in $H$ maximizes $w^P(D)$ in $D$.

Let $(X_1,X_2)$ be any $(x,y)$-cut in $H$.  For every $u \in V(D)$ we note that exactly one of the edges $xi$ and $iy$ will belong to the cut.
Therefore, we note that the $w^*$-weight of the cut is $|V(D)|\theta$ less than the $w$-weight of the cut.
 It therefore suffices to show that the $w$-weight of the cut is $-w^P(D)$ (where  $P$ is the partition
$(X_1 \setminus \{x\},X_2 \setminus \{y\})$ of $V(D)$).

Let $ij \in A(D)$ be arbitrary and consider the following four possibilities.

\begin{itemize}
\item $i,j \in X_1$. In this case when we considered the arc $ij$ above (when defining the $w$-weights) we added  
$\frac{m_{21}-m_{11}-m_{12}}{2}$ to $w(yi)$ and $\frac{m_{12}-m_{11}-m_{21}}{2}$ to $w(yj)$.
So, we added the following amount to the $w$-weight of the  $(s,t)$-cut, $(X_1,X_2)$.
\[
\left( \frac{m_{21}-m_{11}-m_{12}}{2} + \frac{m_{12}-m_{11}-m_{21}}{2} \right) = - m_{11}
\]
\item $u,v \in X_2$. In this case when we considered the arc $ij$ above (when defining the $w$-weights) we added
$\left(\frac{-m_{22}}{2}\right)$ to $w(xi)$ and $\left(\frac{-m_{22}}{2}\right)$ to $w(xj)$.
So, we added the following amount to the $w$-weight of the  $(x,y)$-cut, $(X_1,X_2)$.
\[
\left( \frac{-m_{22}}{2} + \frac{-m_{22}}{2} \right) = -m_{22}
\]
\item $i \in X_1$ and $j \in X_2$. In this case when we considered the arc $ij$ above (when defining the $w$-weights) we 
let $w(ij)= \frac{m_{11} + m_{22} - m_{12} - m_{21}}{2}$ and added
$\left( \frac{-m_{22}}{2}\right)$ to $w(xj)$ and
$\frac{m_{21}-m_{11}-m_{12}}{2}$ to $w(yi)$.
So, we added the following amount to the $w$-weight of the  $(x,y)$-cut, $(X_1,X_2)$.
\begin{eqnarray*}
\left( \frac{m_{11} + m_{22} - m_{12} - m_{21}}{2} + \frac{-m_{22}}{2} + \frac{m_{21}-m_{11}-m_{12}}{2} \right) = 
m_{12} 
\end{eqnarray*}
%\[
%\left( \frac{m_{11} + m_{22} - m_{12} - m_{21}}{2} + \frac{-m_{22}}{2} + \frac{m_{21}-m_{11}-m_{12}}{2} \right) = -m_{12}
%\]
\item $j \in X_1$ and $i \in X_2$. In this case when we considered the arc $ij$ above (when defining the $w$-weights) we
let $w(ij)= \frac{m_{11} + m_{22} - m_{12} - m_{21}}{2}$ and added
$ \left(\frac{-m_{22}}{2}\right)$ to $w(xi)$ and
$ \frac{m_{12}-m_{11}-m_{21}}{2}$ to $w(yj)$.
So, we added the following amount to the $w$-weight of the  $(x,y)$-cut, $(X_1,X_2)$.
\begin{eqnarray*}
\left( \frac{m_{11} + m_{22} - m_{12} - m_{21}}{2} + \frac{-m_{22}}{2} + \frac{m_{12}-m_{11}-m_{21}}{2} \right)  = 
m_{21}
\end{eqnarray*}
%\[
%\left( \frac{m_{11} + m_{22} - m_{12} - m_{21}}{2} + \frac{-m_{22}}{2} + \frac{m_{12}-m_{11}-m_{21}}{2} \right) = -m_{21}
%\]
\end{itemize}

Therefore we note that in all cases we have added $-w^P(ij)$ to the $w$-weight of the  $(x,y)$-cut, $(X_1,X_2)$.
So the total $w$-weight of the  $(x,y)$-cut is $-w^P(D)$ as desired.

Analogously, if we have a partition $P=(X_1,X_2)$ of $V(D)$, then adding $x$ to $X_1$ and $y$ to $X_2$ we obtain a $(x,y)$-cut
with $w$-weight $-w^P(D)$ of $H$. 
As we can find a minimum $w^*$-weight cut in $H$ in polynomial time\footnote{Often the Min Cut problem is formulated for graphs rather than multigraphs (and $H$ is a multigraph), but we can easily reduce 
the Min Cut problem from multigraphs to graphs by merging all edges with the same end-points to one edge with the same end-points and with weight equal the sum of the weights of the merged edges.}
 we can find a  partition $P=(X_1,X_2)$ of $V(D)$ with 
minimum value of $-w^P(D)$, which corresponds to the maximum value of $w^P(G)$. Therefore, 
MWOP($\Fcal$) can be solved in polynomial time in this case.
\end{proof}

\subsection{Proof of Theorem~\ref{thm:max_NE-hard}}%\label{app:proofMain}
\label{app:proof_thm3}

The proof of parts (i) and (ii) are straightforward. Both symmetric profiles are always Nash equilibria.
For part (i) all players choose $\two$ is the optimal equilibrium and for part (ii) all players choose $\one$. So we begin with the proof of part (iii).

\paragraph{Proof of part (iii):}

Let $\pay{A}=1$ and let $\pay{B}=0$.
Let $G$ be a graph and let $(A,B)$ be a partition of $V(G)$, so that if $i \in Y \Leftrightarrow \pay{i} = \pay{Y}$, where $Y \in \{ A,B\}$. 
We want to find a Nash equilibrium, $(X_{\one},X_{\two})$, of maximum welfare, where $X_s$ denotes the set of players that choose action $\strat{} \in \{\one, \two\}$. 

Let  $(X_{\one},X_{\two})$ be a Nash equilibrium of $G$. This implies that all vertices in $X_{\two} \cap A$ only have edges to vertices in $X_{\two}$ (as otherwise it is not a Nash equilibrium).
Note that this means that any connected component $C$ in $G[A]$ is either subset of $X_{\one}$ or a subset of $X_{\two}$, since otherwise one can find an edge from a vertex in $X_{\two} \cap A$ to a vertex in $X_{\one}$. 
%all vertices in $X_{\one} \cap B$ only have edges to vertices in $X_0$ and 
Analogously, every connected component of $G[B]$ is either a subset of $X_{\one}$ or a subset of $X_{\two}$.
Finally, given a connected component $C_A$ of $G[A]$ and a connected component $C_B$ of $G[B]$, if there is an edge in $G$ between a vertex $i\in C_A$ and a vertex $j\in C_B$, then it is not possible that $C_B\subseteq X_{\one}$ and $C_A\subseteq X_{\two}$.

We now construct a digraph $D$ as follows.
Let $C_1^A, C_2^A, \ldots, C_l^A$ be the components of $G[A]$ and 
$C_1^B, C_2^B, \ldots, C_m^B$ the components of $G[B]$. Let $V(D)=\{x,y\} \cup \{a_1,a_2,\ldots,a_l\} \cup \{b_1,b_2,\ldots,b_m\}$.
For each $C_k^A$ add an arc from $x$ to $a_k$ of weight $2|E(C_k^A)|$ and for each $C_k^B$ add an arc from $b_k$ to $y$ of weight $2|E(C_k^B)|$.
Now for each $k \in \{1,2,\ldots,l\}$ and $\ell \in \{1,2, \ldots, m\}$ we add an arc from $a_k$ to $b_\ell$ of weight $|E(C_k^A,C_\ell^B)|$ and
if $|E(C_k^A,C_\ell^B)|>0$ then add an arc from $b_\ell$ to $a_k$ of infinite weight. This completes the construction of $D$.

Let $(X,Y)$ be a $(x,y)$-cut (i.e., $x \in X$ and $y \in Y$) in $D$ containing no arc of infinite weight.
Let the weight of $(X,Y)$ be $\xi(X,Y)$.
We will now construct a Nash equilibrium $(X_{\one},X_{\two})$ of $G$.
For every $a_k \in X$ add all vertices in $C_k^A$ to $X_{\one}$ and for every $a_k \in Y$ add all vertices in $C_k^A$ to $X_{\two}$.
Analogously, for every $b_\ell \in S$ add all vertices in $C_\ell^B$ to $X_{\one}$ and for every $b_\ell \in T$ add all vertices in $C_\ell^B$ to $X_{\two}$.
This defines a partition $(X_{\one},X_{\two})$.

As no arc of infinite weight exists in the cut, we note that that $(X_{\one},X_{\two})$ is a Nash equilibrium.
We will now show that the welfare of $(X_{\one},X_{\two})$ is $\welf{}(X_{\one},X_{\two})=2|E(G)|-|E(A,B)| - \welf{}(X,Y)$.
If $xa_k$ belongs to the cut then all arcs in $C_k^A$ contribute zero to welfare and zero to the formula for $\welf(X_{\one},X_{\two})$.
If $b_{\ell}y$ belongs to the cut then all arcs in $C_\ell^B$ contribute zero to the welfare and zero to the formula for $\welf{}(X_{\one},X_{\two})$.
If $a_kb_\ell$ belongs to the cut then all arcs in $E(C_k^A,C_\ell^B)$ contribute zero to the welfare and zero to the formula for $\welf{}(X_{\one},X_{\two})$.
All other arcs contribute two to the welfare if both end-points lie in $A$ or both end-points lie in $B$ and if exactly one
end-point lies in $A$ and the other in $B$ then the edge contributes one to the welfare.
So the welfare of $(X_{\one},X_{\two})$ is indeed $\welf{}(X_{\one},X_{\two})=2|E(G)|-|E(A,B)| - \welf{}(X,Y)$.

Conversely let $(Y_{\one},Y_{\two})$ be a Nash equilibrium of maximum weight. We saw above that all vertices in $C_i^A$ belong to $Y_{\one}$ or to $Y_{\two}$.
If they belong to $Y_{\one}$, then let $a_k$ belong to $X$ and otherwise let $a_k$ belong to $Y$. Analogously if all
vertices of $C_\ell^B$ belong to $Y_{\one}$ then let $b_\ell$ belong to $X$ and otherwise let $b_\ell$ belong to $Y$.
Furthermore let $x$ belong to $X$ and $y$ belong to $Y$, which gives us a $(x,y)$-cut $(X,Y)$.
We will show that the weight of the cut, $\xi(S,T)$, satisfies $\xi(Y_{\one},Y_{\two})=2|E(G)|-|E(A,B)| - \xi(X,Y)$, where $\xi(Y_{\one},Y_{\two})$ is the welfare of $(Y_{\one},Y_{\two})$.
As $(Y_{\one},Y_{\two})$ is exactly the Nash equilibrium we obtained above if we had started with $(X,Y)$, we note that
$\xi(Y_{\one},Y_{\two})=2|E(G)|-|E(A,B)| - \xi(X,Y)$ indeed holds.

Therefore, if the maximum welfare Nash equilibrium has welfare ${\rm wel}^{opt}$ and the minimum weight $(x,y)$-cut has weight ${\rm cut}^{opt}$ then ${\rm wel}^{opt}=2|E(G)|-|E(A,B)| - {\rm cut}^{opt}$.
As we can find a minimum $(x,y)$-cut in polynomial time (see, e.g., \cite{korte2011combinatorial}), this gives us a polynomial time algorithm for finding a welfare optimal Nash equilibrium when  $\pay{A}=1$ and $\pay{B}=0$.
So we are done.

% \marginpar{\tiny{\PN{Maybe should be changed to ``proof of everything'' else if we change theorem statement to align with structure of previous theorems.}}}
\paragraph{Proof of part (iv):}

Recall that it suffices to prove the theorem for $0 < \pay{B} <1/2 < \pay{A} \le 1$.
Thus it is enough to prove Theorem \ref{mainNPwelfareNash} below.

We will start with two useful lemmas.

\begin{lemma}\label{lemmaAclique}
Let  $\pay{A}$ be any real such that $ 1/2 < \pay{A} \leq 1$.
Let $G$ be the complete graph with $\nA \ge 2$ vertices of type $A$ (and no vertices of type $B$).
The maximum welfare is obtained when all players choose action $\one$. Furthermore, if $(X_{\one},X_{\two}) \not= (V(G),\emptyset)$ then the welfare is at least 
$\min\{2\pay{A}(\nA-1), \nA(\nA-1)(2 \pay{A}-1)\}$ less than the optimum.
\end{lemma}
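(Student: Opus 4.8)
The plan is to exploit the fact that, because $G$ is the complete graph on $\nA$ vertices all of type $A$, the welfare of a strategy profile depends only on the number $k = |X_\two|$ of players who choose action $\two$. First I would record that
each of the $\binom{\nA-k}{2}$ edges inside the $\one$-block contributes $2\pay{A}$ to welfare, each of the $\binom{k}{2}$ edges inside the $\two$-block contributes $2(1-\pay{A})$, and each mixed edge contributes $0$; hence the welfare is $W(k) = 2\pay{A}\binom{\nA-k}{2} + 2(1-\pay{A})\binom{k}{2}$, and in particular the all-$\one$ profile $(V(G),\emptyset)$ has welfare $W(0) = \pay{A}\,\nA(\nA-1)$.

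Next I would compute the deficit $W(0)-W(k)$ for $1\le k\le \nA$. Using the identities $\binom{\nA}{2}-\binom{\nA-k}{2} = \tfrac12 k(2\nA-k-1)$ and $\binom{k}{2} = \tfrac12 k(k-1)$, the expression simplifies to
$W(0)-W(k) = \pay{A}\,k(2\nA-k-1) - (1-\pay{A})\,k(k-1) = k\bigl(2\pay{A}(\nA-1)-(k-1)\bigr) =: f(k)$.
This is the one genuine computation in the argument; everything afterwards is reading off properties of the quadratic $f$.

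I would then establish optimality of the all-$\one$ profile: $f$ is a concave quadratic in $k$ (leading coefficient $-1$) with roots $0$ and $2\pay{A}(\nA-1)+1$, and since $\pay{A}>1/2$ and $\nA\ge 2$ the larger root strictly exceeds $\nA$. Hence $f(k)>0$ for every $k\in\{1,\dots,\nA\}$, which shows that $(X_\one,X_\two)=(V(G),\emptyset)$ is the unique welfare maximiser. Finally, for the quantitative gap, I would use that a concave function attains its minimum over the integer interval $\{1,\dots,\nA\}$ at one of the two endpoints; evaluating gives $f(1)=2\pay{A}(\nA-1)$ and $f(\nA)=\nA(\nA-1)(2\pay{A}-1)$, so any profile with $(X_\one,X_\two)\neq(V(G),\emptyset)$ has $k\ge 1$ and therefore welfare at least $\min\{2\pay{A}(\nA-1),\ \nA(\nA-1)(2\pay{A}-1)\}$ below the optimum, as claimed (and the bound is tight, being attained at $k=1$ or $k=\nA$). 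The only step requiring a sentence of care is the passage from concavity to ``minimum at an endpoint of the discrete interval''; the remainder is elementary arithmetic, so I do not expect any real obstacle.
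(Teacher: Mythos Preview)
Your proposal is correct and essentially identical to the paper's proof: both reduce to the single-variable quadratic $f(k)=k\bigl(2\pay{A}(\nA-1)-(k-1)\bigr)$ for the welfare deficit, use its concavity to argue that the minimum over $1\le k\le \nA$ is attained at an endpoint, and then read off $f(1)$ and $f(\nA)$. The only cosmetic difference is that the paper invokes $f'(k)$ to justify the endpoint minimum, whereas you identify the roots explicitly to get positivity; the substance is the same.
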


\begin{proof}
Define $G$ as in the statement.  Assume that $(X_{\one},X_{\two})$ is a partition of $V(G)$. 
The welfare of this partition is denoted by $\welf{}(X_{\one},X_{\two})$. First assume that  $(X_{\one},X_{\two}) \not= (V(G),\emptyset)$.
Then the following holds, where $q=|X_{\two}|$,

\[
\begin{array}{rcl}
\welf{}\big(V(G),\emptyset\big) - \welf{}(X_{\one},X_{\two}) & = & 
\binom{|X_{\two}|}{2}(2 \pay{A}) + |X_{\one}|\cdot |X_{\two}|(2\pay{A}) - \binom{|X_{\two}|}{2}\big((2 (1-\pay{A})\big) \\
% {|X_{\two}| \choose 2}(2 \pay{A}) + |X_{\one}|\cdot |X_{\two}|(2\pay{A}) - {|X_{\two}| \choose 2}\big((2 (1-\pay{A})\big) \\
& = & q(q-1)\big(\pay{A}-(1-\pay{A})\big) + 2(\nA-q)q\pay{A} \\
& = & q\left( (q-1)(2\pay{A}-1) + 2\nA\pay{A} - 2q\pay{A}  \right)   \\
& = & q\left( 2q\pay{A} -q  -2 \pay{A} + 1 + 2\nA\pay{A} - 2q\pay{A}  \right)   \\
& = & q\left( 2\pay{A}(\nA-1) -q + 1  \right)   \\
\end{array}
\]

Consider the function $f(q)=q\left( 2\pay{A}(\nA-1) -q + 1  \right)$ and note that $f'(q)=-2q + 2\pay{A}(a-1)+1$, which implies that 
the minimum value of $f(q)$ in any interval is found in one of its endpoints. As $1 \leq q \leq \nA$ the minimum value of $f(q)$ is therefore $f(1)$ or $f(\nA)$, which implies the following.
\[
\welf{}(V(G),\emptyset) - \welf{}(X_{\one},X_{\two})  \geq  \min\{f(1),f(\nA)\} = \min\{2\pay{A}(\nA-1), \nA(\nA-1)(2 \pay{A}-1)\}
\]
As the above minimum is positive, we note that if $(X_{\one},X_{\two}) \not= \big(V(G),\emptyset\big)$ then $(X_{\one},X_{\two})$ does not obtain the maximum welfare, so the maximum welfare is obtained when all players choose action $\one$.
Furthermore, if  $(X_{\one},X_{\two}) \not= \big(V(G),\emptyset\big)$ then the welfare is at least
$\min\{2\pay{A}(\nA-1), \nA(\nA-1)(2 \pay{A}-1)\}$ less than the optimum as seen above.
\end{proof}

\begin{corollary}\label{corBclique}
Let  $\pay{B}$ be any real such that $ 0 \leq \pay{B} < 1/2$.
Let $G$ be the complete graph with $\nB\ge 2$ vertices of type B (and no vertices of type A).
The maximum welfare is obtained when all players choose action $\two$ and if $(X_{\one},X_{\two}) \not= \big(\emptyset,V(G)\big)$ then the welfare is at least
$\min\{2(1-\pay{B})(\nB-1), \nB(\nB-1)(1-2\pay{B})\}$ less than the optimum.
\end{corollary}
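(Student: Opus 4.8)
The plan is to obtain Corollary~\ref{corBclique} as a direct consequence of Lemma~\ref{lemmaAclique} via a relabelling of strategies, so that essentially no new computation is required. Observe that the two-player game $(u^{BB},u^{BB})$ with threshold $\pay{B}$ becomes, after swapping the roles of the two actions $\one$ and $\two$ for both players simultaneously, exactly the game $(u^{AA},u^{AA})$ with threshold $1-\pay{B}$: the diagonal payoff pair $(\pay{B},\pay{B})$ moves to the $(\two,\two)$ cell, the pair $(1-\pay{B},1-\pay{B})$ moves to the $(\one,\one)$ cell, and the off-diagonal cells remain $(0,0)$, which is symmetric under the swap. Since $0 \le \pay{B} < 1/2$ gives $1/2 < 1-\pay{B} \le 1$, Lemma~\ref{lemmaAclique} applies to the relabelled game with $\pay{A}$ replaced by $1-\pay{B}$ and $\nA$ replaced by $\nB$.

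Next I would note that the utilitarian welfare $\welf{}(\cdot)$ is invariant under this simultaneous relabelling: each edge contributes the same total payoff, only indexed by the swapped action names. Consequently, a partition $(X_{\one},X_{\two})$ of $V(G)$ in the original game has the same welfare as the partition $(X_{\two},X_{\one})$ in the relabelled game, and $(X_{\one},X_{\two})=\big(\emptyset,V(G)\big)$ corresponds to the all-$\one$ profile of the relabelled game. Lemma~\ref{lemmaAclique} then states that this relabelled all-$\one$ profile is the unique welfare maximiser, i.e.\ in the original game the all-$\two$ profile $\big(\emptyset,V(G)\big)$ maximises welfare; and that any other profile has welfare at least
\[
\min\{2(1-\pay{B})(\nB-1),\ \nB(\nB-1)(2(1-\pay{B})-1)\} = \min\{2(1-\pay{B})(\nB-1),\ \nB(\nB-1)(1-2\pay{B})\}
\]
below the optimum, which is exactly the claimed bound.

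For readers who prefer a self-contained argument, one can instead repeat the computation in the proof of Lemma~\ref{lemmaAclique} verbatim with $q=|X_{\one}|$ in place of $|X_{\two}|$ and $\pay{B}$ in place of $\pay{A}$; the algebra is identical up to the substitution $\pay{A}\mapsto 1-\pay{B}$, and the derivative argument again locates the minimum of the gap function at an endpoint $q\in\{1,\nB\}$. There is no real obstacle here: the only point requiring a trivial amount of care is to check that the off-diagonal payoffs of $(u^{BB},u^{BB})$ are genuinely symmetric under the action swap, so that the relabelled game is precisely of the form covered by Lemma~\ref{lemmaAclique}.
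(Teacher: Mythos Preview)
Your proposal is correct and follows essentially the same approach as the paper: swap the two actions, set the new threshold to $1-\pay{B}$ so that Lemma~\ref{lemmaAclique} applies directly, and then substitute back to obtain the stated bound. Your additional remarks about welfare invariance under the relabelling and the optional self-contained recomputation are accurate elaborations of what the paper leaves implicit.
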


\begin{proof}
    By replacing action $\two$ with $\one$ and vice versa and letting $\pay{B}^{new}=1-\pay{B}$ then we obtain an equivalent problem.
The following now completes the proof, by Lemma~\ref{lemmaAclique},
\[
 \min\{2\pay{B}^{new}(\nB-1), \nB\nB(2 \pay{B}^{new}-1)\} =  \min\{2(1-\pay{B})(\nB-1), \nB(\nB-1)(1-2\pay{B})\}. \qedhere
\]
\end{proof}

\begin{lemma} \label{NashEq}
Let  $\pay{A}$ and $\pay{B}$ be reals such that $0<\pay{B} \leq 1/2 \leq \pay{A} \leq 1$.
Let $G$ be any graph and let $(A,B)$ be a partition of $V(G)$.
The partition $(X_{\one},X_{\two})$ is a Nash equilibrium if and only if the following holds for all $i \in V(G)$.

\begin{itemize}
\item For every player $i \in X_{\one} \cap A$ we have $N(i) \cap X_{\two}=\emptyset$ or $\frac{|N(i) \cap X_{\one}|}{|N(i) \cap X_{\two}|} \geq \frac{1-\pay{A}}{\pay{A}}$.
\item For every player $i \in X_{\two} \cap A$ we have $N(i) \cap X_{\one}=\emptyset$ or $\frac{|N(i) \cap X_{\two}|}{|N(i) \cap X_{\one}|} \geq \frac{\pay{A}}{1-\pay{A}}$.  
\item For every player $i \in X_{\one} \cap B$ we have $N(i) \cap X_{\two}=\emptyset$ or $\frac{|N(i) \cap X_{\one}|}{|N(i) \cap X_{\two}|} \geq \frac{1-\pay{B}}{\pay{B}}$.
\item For every player $i \in X_{\two} \cap B$ we have $N(i) \cap X_{\one}=\emptyset$ or $\frac{|N(i) \cap X_{\two}|}{|N(i) \cap X_{\one}|} \geq \frac{\pay{B}}{1-\pay{B}}$.
\end{itemize}
\end{lemma}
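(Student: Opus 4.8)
The plan is to unwind the definition of a pure strategy Nash equilibrium for the threshold game and observe that, because each two-player interaction is of the form in Figure~\ref{fig:2typeModel}, the utility of player $i$ depends only on how many of $i$'s neighbours choose $\one$ and how many choose $\two$, together with which group $i$ belongs to. First I would fix a partition $(X_\one,X_\two)$ and a player $i\in V(G)$, write $n_\one = |N(i)\cap X_\one|$ and $n_\two = |N(i)\cap X_\two|$, and compute $U_i$ when $i$ plays $\one$ versus when $i$ plays $\two$. If $i\in A$, playing $\one$ yields $\pay{A}\cdot n_\one$ (only the $(a,a)$ cells contribute) while playing $\two$ yields $(1-\pay{A})\cdot n_\two$; if $i\in B$ the analogous quantities are $\pay{B}\cdot n_\one$ and $(1-\pay{B})\cdot n_\two$. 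This is the only genuine computation and it is routine from the payoff matrices.

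Next I would translate the equilibrium inequality $U_i(\strat{i}^*,\sbf{-i}^*)\ge U_i(\strat{i},\sbf{-i}^*)$ into the stated ratio conditions, being careful about the degenerate case. For $i\in X_\one\cap A$ the condition ``$i$ does not want to deviate to $\two$'' reads $\pay{A} n_\one \ge (1-\pay{A}) n_\two$; if $n_\two=0$ this is automatic (giving the clause $N(i)\cap X_\two = \emptyset$), and otherwise it rearranges to $\frac{n_\one}{n_\two}\ge \frac{1-\pay{A}}{\pay{A}}$ (dividing by $\pay{A}>0$, which holds since $\pay{A}\ge 1/2>0$). Symmetrically, for $i\in X_\two\cap A$ the non-deviation condition $(1-\pay{A}) n_\two \ge \pay{A} n_\one$ gives either $n_\one=0$ or $\frac{n_\two}{n_\one}\ge \frac{\pay{A}}{1-\pay{A}}$; here one must note $1-\pay{A}$ could be $0$ when $\pay{A}=1$, in which case the inequality $0\ge \pay{A} n_\one$ forces $n_\one=0$, so the clause $N(i)\cap X_\one=\emptyset$ correctly captures it and the ratio $\frac{\pay{A}}{1-\pay{A}}$ is interpreted as $+\infty$. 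The two clauses for $i\in B$ are obtained by the identical argument with $\pay{B}$ in place of $\pay{A}$, using $\pay{B}>0$ and $1-\pay{B}>0$ (both strict since $0<\pay{B}\le 1/2$), so no infinite-ratio subtlety arises on the $B$ side.

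Finally I would observe that $(X_\one,X_\two)$ is a Nash equilibrium if and only if \emph{no} player wants to deviate, i.e.\ if and only if all four families of conditions hold simultaneously (each player falls into exactly one of the four categories $X_\one\cap A$, $X_\two\cap A$, $X_\one\cap B$, $X_\two\cap B$), which is precisely the statement of the lemma. The main obstacle, such as it is, is purely bookkeeping: making sure the boundary cases $\pay{A}=1$ (where $1-\pay{A}=0$) and $\pay{B}=1/2$ or $\pay{A}=1/2$ are handled correctly by the ``$N(i)\cap X_\two=\emptyset$ or $\dots$'' phrasing, and checking that dividing by the relevant threshold (or its complement) is legitimate. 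There is no deep idea here; the lemma is a direct consequence of the threshold structure of the payoffs.
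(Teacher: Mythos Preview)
Your proposal is correct and essentially identical to the paper's own proof: both compute, for a player $i$ in each of the four categories, the utility from the two actions (equivalently, the gain from deviating) in terms of $|N(i)\cap X_{\one}|$ and $|N(i)\cap X_{\two}|$, and rearrange the resulting linear inequality into the stated ratio condition, handling the vanishing-denominator case via the ``$N(i)\cap X_s=\emptyset$'' clause. If anything, your treatment of the boundary case $\pay{A}=1$ is slightly more explicit than the paper's.
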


\begin{proof}
  Let $(X_{\one},X_{\two})$ be a partition of $V(G)$.
  Let $i \in X_{\one} \cap A$ be arbitrary.
  If we move $i$ from $X_{\one}$ to $X_{\two}$ then $i$'s gain is $|N(i) \cap X_{\two}|(1-\pay{A}) - |N(i) \cap X_{\one}|\pay{A}$.
This gain is greater than zero if $\frac{|N(i) \cap X_{\one}|}{|N(i) \cap X_{\two}|} < \frac{1-\pay{A}}{\pay{A}}$, so $i$ satisfies the Nash equilibrium property 
if and only if $N(i) \cap X_{\two}=\emptyset$ or $\frac{|N(i) \cap X_{\one}|}{|N(i) \cap X_{\two}|} \geq \frac{1-\pay{A}}{\pay{A}}$.

  Now let $i \in X_{\two} \cap A$ be arbitrary.
  If we move $i$ from $X_{\two}$ to $X_{\one}$ then $u$'s gain is $|N(i) \cap X_{\one}|\pay{A} - |N(i) \cap X_{\two}|(1-\pay{A})$.
This gain is greater than zero if $\frac{|N(i) \cap X_{\two}|}{|N(i) \cap X_{\one}|} < \frac{\pay{A}}{1-\pay{A}}$, so $i$ satisfies the Nash equilibrium property 
if and only if $N(i) \cap X_{\one} = \emptyset$ or $\frac{|N(i) \cap X_{\two}|}{|N(i) \cap X_{\one}|} \geq \frac{\pay{A}}{1-\pay{A}}$.

%  Now let $u \in X_{\one} \cap B$ be arbitrary. If we move $u$ from $X_{\one}$ to $X_{\two}$ then $u$'s gain is $|N(u) \cap X_{\two}|(1-\pay{B}) - |N(u) \cap X_{\one}|\pay{B}$.
%This gain is greater than zero if $\frac{|N(u) \cap X_{\one}|}{|N(u) \cap X_{\two}|} < \frac{1-\pay{B}}{\pay{B}}$, so $u$ satisfies the Nash equilibrium property
%if and only if $N(u) \cap X_{\two}=\emptyset$ or $\frac{|N(u) \cap X_{\one}|}{|N(u) \cap X_{\two}|} \geq \frac{1-\pay{B}}{\pay{B}}$.
%
%  Now let $u \in X_{\two} \cap B$ be arbitrary.  If we move $u$ from $X_{\two}$ to $X_{\one}$ then $u$'s gain is $|N(u) \cap X_{\one}|\pay{B} - |N(u) \cap X_{\two}|(1-\pay{B})$.
%This gain is greater than zero if $\frac{|N(u) \cap X_{\two}|}{|N(u) \cap X_{\one}|} < \frac{\pay{B}}{1-\pay{B}}$, so $u$ satisfies the Nash equilibrium property
%if and only if $N(u) \cap X_{\one} = \emptyset$ or $\frac{|N(u) \cap X_{\two}|}{|N(u) \cap X_{\one}|} \geq \frac{\pay{B}}{1-\pay{B}}$.~\qed
 
 The statements concerning players in Group $B$ follow in an identical manner.
\end{proof}
 
% \PN{NOTE: both transversal and traversal are used. Which is it? Please amend.}

\begin{theorem}\label{mainNPwelfareNash} Let  $\pay{A}$ and $\pay{B}$ be reals such that $0<\pay{B} < 1/2 < \pay{A} \leq 1$.
Then it is \NP-hard to find a maximum welfare Nash equilibrium in a threshold-game with two thresholds, $\pay{A}$~and~$\pay{B}$.
\end{theorem}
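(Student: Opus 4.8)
The plan is to reduce from \textsc{Minimum Transversal} in $3$-uniform hypergraphs, which is {\sf NP}-hard \citep{GareyJohnson:1979:book}: given a $3$-uniform hypergraph $\mathcal H=(U,\mathcal E)$ and an integer $k$, decide whether some $S\subseteq U$ with $|S|\le k$ meets every hyperedge. Given such an instance I will build, in polynomial time, a threshold game with the given thresholds $\gamma_A,\gamma_B$ (so $0<\gamma_B<1/2<\gamma_A\le1$) and a number $T$ so that the game has a Nash equilibrium of welfare at least $T$ if and only if $\mathcal H$ has a transversal of size at most $k$. Throughout I use Lemma~\ref{NashEq} to decide which profiles are equilibria, and Lemma~\ref{lemmaAclique} together with Corollary~\ref{corBclique} to build ``anchors''.

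\textbf{Anchors and vertex gadgets.} Hanging a large clique of type-$A$ (resp.\ type-$B$) players off a vertex forces that clique to the all-$\one$ (resp.\ all-$\two$) profile in any welfare-maximising equilibrium, by Lemma~\ref{lemmaAclique} (resp.\ Corollary~\ref{corBclique}); I take all anchor cliques polynomially large so that disturbing any one of them costs more welfare than the entire rest of the instance can ever supply, which then forces every anchor to be pinned in a welfare-maximising equilibrium. Moreover, joining a type-$A$ player to a $b$-anchor by many edges pins that player to $\two$ in \emph{every} equilibrium (and symmetrically), so I can freely produce players of either type pinned to either action. Next, for each $u\in U$ I install a \emph{vertex gadget}: a small clique $C_u$ of type-$B$ players, each joined by $p_0$ edges to an $a$-anchor and $q_0$ edges to a $b$-anchor. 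Choosing $p_0,q_0$ (and the clique size) so that a $C_u$-player's neighbourhood ratio sits, with a small margin, at the threshold $\tfrac{1-\gamma_B}{\gamma_B}$ appearing in Lemma~\ref{NashEq} makes exactly ``$C_u$ all-$\one$'' and ``$C_u$ all-$\two$'' equilibria of the gadget (mixed states fail the ratio test), and a direct computation shows that the all-$\two$ state---which I read as ``$u$ lies in the transversal''---yields exactly a fixed constant amount $W>0$ less welfare than the all-$\one$ state.

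\textbf{Hyperedge gadgets and the target.} For each $e=\{u_1,u_2,u_3\}\in\mathcal E$ I attach a \emph{detector}: a player $z_e$ with its own private anchors, joined to one representative of each of $C_{u_1},C_{u_2},C_{u_3}$, with all edge multiplicities tuned via Lemma~\ref{NashEq} so that (a) if some $C_{u_i}$ is all-$\two$ then $z_e$ has an equilibrium action attaining the detector's maximum welfare $F_e$, while (b) if all three $C_{u_i}$ are all-$\one$ then $z_e$ is forced into a strictly worse state; moreover the construction is monotone in the set of ``in'' vertices. To make the loss from a single unhit hyperedge dominate every possible vertex-gadget saving, I replicate the detector $N$ times per hyperedge with $N$ polynomially large, so that leaving any hyperedge unhit costs more than $W\cdot|U|$. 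Setting $T=C-Wk$, where $C$ is the (constant) welfare of the configuration in which all anchors are pinned, every $C_u$ is all-$\one$, and all detectors are satisfied, one checks both directions: a transversal $S$ with $|S|\le k$ gives an equilibrium of welfare $C-W|S|\ge T$; conversely, in any welfare-maximising equilibrium all anchors are pinned (by the size choice), so the profile is described by the set $S_\sigma=\{u:C_u\text{ all-}\two\}$, $S_\sigma$ must be a transversal (else an unhit hyperedge's detectors drop the welfare below that of any transversal configuration), and the welfare is exactly $C-W|S_\sigma|$; hence welfare $\ge T$ forces $|S_\sigma|\le k$.

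\textbf{Main obstacle.} The delicate step is the detector gadget: one must simultaneously (i) force $z_e$ off its good action precisely when $e$ is unhit, (ii) preserve every equilibrium inequality of Lemma~\ref{NashEq}---for $z_e$, for the $C_{u_i}$-players, whose neighbourhoods now also include detector vertices, and for all anchors---in every configuration that arises, and (iii) keep the gadget monotone so the penalty argument runs. This forces a somewhat case-heavy, margin-respecting choice of all edge multiplicities as functions of $\gamma_A,\gamma_B$, together with a robustness check that the bounded perturbations introduced by detector edges never flip an intended equilibrium of a vertex gadget; the remaining welfare bookkeeping is routine arithmetic. Combined with parts (i)--(iii) of Theorem~\ref{thm:max_NE-hard}, which are already established, this proves Theorem~\ref{mainNPwelfareNash}.
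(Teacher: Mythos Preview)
Your proposal follows essentially the paper's route: a reduction from \textsc{Minimum Transversal} in $3$-uniform hypergraphs, anchored by large cliques via Lemma~\ref{lemmaAclique} and Corollary~\ref{corBclique}, with a per-vertex gadget whose two stable states encode ``in/out of the transversal'' at a fixed welfare cost, and a per-hyperedge gadget whose Nash constraint forces at least one incident vertex to be ``in''. The paper's concrete realisation differs only cosmetically from yours: it encodes ``$u$ in transversal'' by a single type-$B$ vertex $u'$ (with a pendant set $Z_u$ supplying the welfare penalty) playing $\one$, rather than by a type-$B$ clique $C_u$ playing $\two$; and its detector is a single type-$B$ vertex $r_e$ wired to both anchor cliques so that $r_e$ is welfare-forced into $X_\one$ but Nash-forced out of $X_\one$ unless some neighbouring $u'$ is also in $X_\one$.

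The one genuine gap in your write-up is exactly the step you flag as the ``main obstacle'': you specify what the detector must do but never exhibit parameters achieving it. In the paper this is where all the work lies---the constants $x_A,x_B,c_A,c_B,z,\theta$ are chosen to satisfy a chain of tightly coupled inequalities (Claims~A--E) simultaneously for \emph{every} pair $(\gamma_A,\gamma_B)$ with $0<\gamma_B<1/2<\gamma_A\le 1$, and in particular the window $(x_B+3)\tfrac{1-\gamma_B}{\gamma_B}-\tfrac{1}{\gamma_B}<x_A<(x_B+3)\tfrac{1-\gamma_B}{\gamma_B}$ is what makes the detector flip precisely when the hyperedge is unhit while keeping all other Nash constraints intact. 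Your sketch would be complete once you write down analogous explicit multiplicities and verify the four Lemma~\ref{NashEq} inequalities for $z_e$, for the $C_u$-players (whose ratios are perturbed by the detector edges), and for the anchors, uniformly in $(\gamma_A,\gamma_B)$. (Also, the closing appeal to parts (i)--(iii) of Theorem~\ref{thm:max_NE-hard} is superfluous: Theorem~\ref{mainNPwelfareNash} concerns only the range $0<\gamma_B<1/2<\gamma_A\le1$.)
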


\begin{proof}
 We will show the \NP-hardness result via a reduction from \textsc{Minimum Transversal} problem in $3$-uniform hypergraphs, which is known to be \NP-hard~\citep{GareyJohnson:1979:book}. 
 A $3$-uniform hypergraph $H$ has a set of vertices $V(H)$ and a set of hyperedges $E(H)$, where every edge $e\in E(H)$ is a subset of vertices of size exactly $3$. 
 A transversal of $H$ is a set of vertices $X$ such that $X\cap e\neq \emptyset$ for every edge $e\in E(H)$.
 The problem \textsc{Minimum Transversal} asks to find a transversal of minimum size.

Let  $H$ be a $3$-uniform hypergraph where we want to find a minimum transversal.
Let  $\pay{A}$ and $\pay{B}$ be given reals, such that $0<\pay{B} < 1/2 < \pay{A} \leq 1$.
We first define the following constants.

\begin{enumerate}[leftmargin=\parindent,align=left,label=\textbf{ (\alph*):},ref=(\alph*)]
\item\label{constant_a} Let $\theta=6|E(H)| + 2|V(H)|\lceil 3|E(H)|\frac{1-\pay{B}}{\pay{B}(1-2\pay{B})}\rceil$. 
\item\label{constant_b} Let $x_B$ be the smallest possible positive integer such that $x_B(1-\pay{B})(\pay{A}-\pay{B})/\pay{B} > \theta$.
\item\label{constant_c}  Let $x_A$ be any integer such that the following holds (which is possible as $1/\pay{B} > 1$ (in fact, $1/\pay{B} > 2$).
\[
(x_B+3)\frac{1-\pay{B}}{\pay{B}} - \frac{1}{\pay{B}} < x_A < (x_B+3)\frac{1-\pay{B}}{\pay{B}}
\]
\item\label{constant_d}  Let $c_A$ be the smallest possible integer such that $c_A \geq x_A$ and $(c_A-1)(2\pay{A}-1) > \theta + 2|E(H)|\cdot(x_A+x_B)$.
\item\label{constant_e}  Let $c_B$ be the smallest possible integer such that $c_B \geq x_B$ and $(c_B-1)(1-2\pay{B}) > \theta + 2|E(H)|\cdot(x_A+x_B)$ and
$\frac{c_B-1}{|E(H)|} \geq  \frac{\pay{B}}{1-\pay{B}}$.
\item\label{constant_f} Let $z= \lceil 3|E(H)| \frac{1-\pay{B}}{\pay{B}(1-2\pay{B})} \rceil$. Note that $\theta=6|E(H)| + 2z|V(H)|$.
\end{enumerate}

We will now construct a graph $G$ and a partition $(A,B)$ of $V(G)$ such that a solution to our problem for $G$ will give us a minimum transversal in $H$. For an illustration, see Fig. \ref{fig:G}.

Let $C_A$ and $C_B$ be cliques, such that $|C_A|=c_A$ and $|C_B|=c_B$ (see \ref{constant_d} and \ref{constant_e}). 
For each edge $e \in E(H)$, let $r_e$ be a vertex and let $R=\cup_{e \in E(H)} \{r_e\}$.
For each vertex $u \in V(H)$, let $u'$ be a vertex and let $V'=\cup_{u \in V(H)} \{u'\}$.
Furthermore let $Z_u$ denote a set of $z$ vertices and let $Z=\cup_{u \in V(H)} Z_u$.
We will now let $V(G)=V(C_A) \cup V(C_B) \cup R \cup V' \cup Z$.

Let $E_1$ denote the edges in $C_A \cup C_B$. For every vertex $r_e \in R$ add $x_A$ edges to $C_A$ and add $x_B$ vertices to $C_B$.
Let $E_2$ denote these edges.
For every $e \in E(H)$ add an edge from $r_e$ to $u'$ if and only if $u \in V(e)$ in $H$. Let $E_3$ denote these edges.
For every $u \in V(H)$ add all edges between $u'$ and $Z_u$ to $G$. Let $E_4$ denote these edges.
Note that $|E_4|=|V(H)|\cdot z = |V(H)|\lceil 3|E(H)|\frac{1-\pay{B}}{\pay{B}(1-2\pay{B})}\rceil$ (by \ref{constant_f}).
This completes the description of $G$ and note that $E(G)=E_1 \cup E_2 \cup E_3 \cup E_4$.

Note that by \ref{constant_a}-\ref{constant_f} above all values of $\theta$, $x_B$, $x_A$, $c_A$, $c_B$ and $z$ are polynomial in $|V(H)|+|E(H)|$ (as $\pay{A}$ and $\pay{B}$ are considered 
constant). Therefore, $|V(G)|$ is also a polynomial in $|V(H)|+|E(H)|$. 
Let all vertices in $C_A$ belong to $A$ and all other vertices belong to $B$. That is $B = V(C_B) \cup R \cup V' \cup Z$.
This completes the definition of the partition $(A,B)$ of $V(G)$.

\begin{figure}[th]
  \begin{center}
\tikzstyle{vertexB}=[circle,draw, minimum size=10pt, scale=0.7, inner sep=0.9pt]
\tikzstyle{vertexA}=[rectangle,draw, minimum size=8pt, scale=0.9, inner sep=0.9pt]
\tikzstyle{vertexBs}=[circle,draw, minimum size=6pt, scale=0.5, inner sep=0.9pt]
\begin{tikzpicture}[scale=0.5]
%\node at (1,0) {\mbox{ }};
%\node at (13,7) {\mbox{ }};

%\draw[rounded corners] (0, 0) rectangle (4, 1) {};

  \draw (0,12) rectangle (8,14); \node at (4,13) {$C_A$};
  \node (a1) at (1,13.2) [vertexA]{};
  \node (a2) at (2,12.8) [vertexA]{};
  \node (a3) at (6,13.2) [vertexA]{};
  \node (a4) at (7,12.8) [vertexA]{};

  \draw (10,12) rectangle (18,14); \node at (14,13) {$C_B$};
  \node (a1) at (11,13.2) [vertexB]{};
  \node (a2) at (12,12.8) [vertexB]{};
  \node (a3) at (16,13.2) [vertexB]{};
  \node (a4) at (17,12.8) [vertexB]{};

  \node (r1) at (1,9) [vertexB]{$r_{e_1}$};
  \node (r2) at (5,9) [vertexB]{$r_{e_2}$};
  \node (r3) at (9,9) [vertexB]{$r_{e_3}$};
  \node at (13,9) {$\cdots$};
  \node (r4) at (17,9) [vertexB]{$r_{e_m}$};

  \draw[line width=0.03cm] (r1) to (2,12);
  \draw[line width=0.03cm] (r1) to (10.5,12);

  \draw[line width=0.03cm] (r2) to (3,12);
  \draw[line width=0.03cm] (r2) to (11.5,12);

  \draw[line width=0.03cm] (r3) to (4,12);
  \draw[line width=0.03cm] (r3) to (12.5,12);

  \draw[line width=0.03cm] (r4) to (6,12);
  \draw[line width=0.03cm] (r4) to (14.5,12);

  \node (u1) at (1,4) [vertexB]{$u_1'$};
  \node (u2) at (4,4) [vertexB]{$u_2'$};
  \node (u3) at (7,4) [vertexB]{$u_3'$};
  \node (u4) at (10,4) [vertexB]{$u_4'$};
  \node at (13.5,4) {$\cdots$};
  \node (u5) at (17,4) [vertexB]{$u_n'$};

  \draw[line width=0.03cm] (r1) to (u1);
  \draw[line width=0.03cm] (r1) to (u2);
  \draw[line width=0.03cm] (r1) to (u3);

  \draw[line width=0.03cm] (r2) to (u1);
  \draw[line width=0.03cm] (r2) to (u3);
  \draw[line width=0.03cm] (r2) to (u4);

  \draw[line width=0.03cm] (r3) to (8.6,8);
  \draw[line width=0.03cm] (r3) to (9,8);
  \draw[line width=0.03cm] (r3) to (9.4,8);

  \draw[line width=0.03cm] (r4) to (16.6,8);
  \draw[line width=0.03cm] (r4) to (17,8);
  \draw[line width=0.03cm] (r4) to (17.4,8);

  \draw (0,-1.5) rectangle (2,1); \node at (1,-0.7) {$Z_{u_1'}$};
  \node (z11) at (0.4,0.5) [vertexBs]{};
  \node (z12) at (1,0.5) [vertexBs]{};
  \node (z13) at (1.6,0.5) [vertexBs]{};
  \draw[line width=0.08cm] (u1) to (1,1);

  \draw (3,-1.5) rectangle (5,1); \node at (4,-0.7) {$Z_{u_2'}$};
  \node (z11) at (3.4,0.5) [vertexBs]{};
  \node (z12) at (4,0.5) [vertexBs]{};
  \node (z13) at (4.6,0.5) [vertexBs]{};
  \draw[line width=0.08cm] (u2) to (4,1);

  \draw (6,-1.5) rectangle (8,1); \node at (7,-0.7) {$Z_{u_3'}$};
  \node (z11) at (6.4,0.5) [vertexBs]{};
  \node (z12) at (7,0.5) [vertexBs]{};
  \node (z13) at (7.6,0.5) [vertexBs]{};
  \draw[line width=0.08cm] (u3) to (7,1);

  \draw (9,-1.5) rectangle (11,1); \node at (10,-0.7) {$Z_{u_4'}$};
  \node (z11) at (9.4,0.5) [vertexBs]{};
  \node (z12) at (10,0.5) [vertexBs]{};
  \node (z13) at (10.6,0.5) [vertexBs]{};
  \draw[line width=0.08cm] (u4) to (10,1);

  \draw (16,-1.5) rectangle (18,1); \node at (17,-0.7) {$Z_{u_n'}$};
  \node (z11) at (16.4,0.5) [vertexBs]{};
  \node (z12) at (17,0.5) [vertexBs]{};
  \node (z13) at (17.6,0.5) [vertexBs]{};
  \draw[line width=0.08cm] (u5) to (17,1);

%  \draw[dotted, line width=0.02cm] (10,-2.6) to (11,-2.6);
%  \draw[line width=0.02cm] (w1) to (14.2,5);

%  \draw[->, line width=0.02cm] (17.5,4.7) to (14.2,4.7);
%  \node at (19.6,5.3) {{\footnotesize Each shown edge}};
%  \node at (19.6,4.7) {{\footnotesize denotes $q_B^*$}};
%  \node at (19.6,4.1) {{\footnotesize actual edges}};
% \draw[rounded corners] (7,3) rectangle (8.5,4); \node at (7.75,3.5) {{\tiny $V_{vw}$}};
% \draw[line width=0.08cm] (v1) to (8,3);

 \end{tikzpicture}
\caption{The graph $G$ when $H$ is a $3$-uniform hypergraph with $m$ edges, including the edges $e_1=\{u_1,u_2,u_3\}$ and $e_2=\{u_1,u_3,u_4\}$.
The square vertices in $C_A$ denotes A-vertices and round vertices (everywhere else) denote B-vertices. Furthermore, the thick edges between
$u_i$ and $Z_{u_i}$ denotes that all edges are present.} \label{fig:G}
\end{center} \end{figure}

We will show that a welfare optimizing Nash equilibrium for $G$ gives us a minimum transversal for $H$, thereby proving the desired NP-hardness result. 
Assume we have a partition $(X_{\one},X_{\two})$ of $V(G)$ (where
all players in $X_{\one}$ choose action $\one$ and all players in $X_{\two}$ choose action~$\two$).

For every edge $ij \in G$ let the weight $\xi(ij)$ of $ij$ be the increase in welfare achieved because of that edge. That is, 
the following holds.
\[
\xi(ij) = \left\{ 
\begin{array}{rclcl}
2\pay{A} & & \mbox{if $i,j \in X_{\one}$ and $i,j \in A$} \\
2\pay{B} & & \mbox{if $i,j \in X_{\one}$ and $i,j \in B$} \\
\pay{A} + \pay{B}  & & \mbox{if $i,j \in X_{\one}$ and $|\{i,j\} \cap A|=1$} \\
2(1-\pay{A}) & & \mbox{if $i,j \in X_{\two}$ and $i,j \in A$} \\
2(1-\pay{B}) & & \mbox{if $i,j \in X_{\two}$ and $i,j \in B$} \\
2-\pay{A} - \pay{B}  & & \mbox{if $i,j \in X_{\two}$ and $|\{i,j\} \cap A|=1$} \\
0 & \hspace{0.3cm} & otherwise \\
\end{array}
\right.
\]
Define $W^*$ be the the value of the optimal welfare of $G-E_3-E_4$.
We will now prove the following claims.

% \marginpar{\tiny{\PN{Claim ``A'' is not good notation since we use $A$ for group... Same with B...}}}

\setcounter{equation}{0}
\renewcommand{\theequation}{\roman{equation}}

\begin{claim}\label{claim:A}
The optimal welfare of $G-E_3-E_4$ (with value $W^*$) is obtained for the partition $(X_{\one},X_{\two})$ if, and only if, the
following holds.
\begin{equation}
    V(C_A) \cup R \subseteq X_{\one} \text{ and } V(C_B) \subseteq X_{\two} \label{eq:A}
\end{equation}
\end{claim}

Furthermore any partition that does not satisfy \eqref{eq:A} above will have welfare at most $W^* - \theta$ (see \ref{constant_a}).

\renewcommand\qedsymbol{$\blacksquare$}

\begin{proof}[Proof of Claim~\ref{claim:A}]

Let $C^+$ be the component of $G-E_3-E_4$ containing the vertices $V(C_A) \cup V(C_B) \cup R$.
Let $W^+$ be the value of the welfare for $C^+$ for the partition $(V(C_A) \cup R,V(C_B))$ and let $(X_{\one},X_{\two})$
be any partition  $G-E_3-E_4$.
Let $W_A^+$ be the optimal welfare for $C_A$ and let  $W_B^+$ be the optimal welfare for $C_B$. 

Note that $\min\{2\pay{A}(c_A-1), c_A(c_A-1)(2 \pay{A}-1)\} \geq (c_A-1)(2 \pay{A}-1)$ as $2\pay{A}>1 \geq 2 \pay{A}-1$.
By Lemma~\ref{lemmaAclique} we note that if $V(C_A) \not\subseteq X_{\one}$ then the welfare of $C_A$ is at most
the following (by \ref{constant_d}):
\[
\begin{array}{rcl}
W_A^+ - \min\{2\pay{A}(c_A-1), c_A(c_A-1)(2 \pay{A}-1)\} & \leq &  W_A^+ - (c_A-1)(2 \pay{A}-1) \\
& < & W_A^+ - (\theta + 2|E(H)|\cdot(x_A+x_B)) \\
\end{array}
\]

Analogously, by Corollary~\ref{corBclique} and \ref{constant_e} we note that if  $V(C_B) \not\subseteq X_{\two}$ then the welfare of $C_B$ is at most
the following:
\[
\begin{array}{rcl}
W_B^+ - \min\{2(1-\pay{B})(c_B-1), c_B(c_B-1)(1-2\pay{B})\} & \leq &  W_B^+ - (c_B-1)(1-2 \pay{B}) \\
& < &  W_B^+ - (\theta + 2|E(H)|\cdot(x_A+x_B)) \\
\end{array}
\]

As the maximum welfare we can obtain from the edges in $E_2$ is not more than $2|E(H)|\cdot(x_A+x_B)$ we note that if $V(C_A) \not\subseteq X_{\one}$
or $V(C_B) \not\subseteq X_{\two}$ then the welfare of $(X_{\one},X_{\two})$ is at least $\theta$ less than $W_A^+ + W_B^+$ and therefore also 
at least $\theta$ less than $W^*$. So we may assume that $V(C_A) \subseteq X_{\one}$ and $V(C_B) \subseteq X_{\two}$.

Let $r_e \in R$ be arbitrary and assume that $r_e \in X_{\two}$. By moving $r_e$ to $X_{\one}$ we would increase the welfare of $G-E_3-E_4$ by 
$x_A (\pay{A} + \pay{B}) - x_B (2(1-\pay{B}))$. By \ref{constant_c} we note that $(x_B+3)\frac{1-\pay{B}}{\pay{B}} - \frac{1}{\pay{B}} < x_A$,
which implies the following (by \ref{constant_b}):

\[
\begin{array}{rcl}
x_A (\pay{A} + \pay{B}) - x_B (2(1-\pay{B}))  &\geq& \left( (x_B+3)\frac{1-\pay{B}}{\pay{B}} - \frac{1}{\pay{B}} \right) (\pay{A} + \pay{B}) - 2x_B (1-\pay{B})) \\
& = & x_B \left( \frac{1-\pay{B}}{\pay{B}}(\pay{A} + \pay{B}) - 2(1-\pay{B})) \right)    \\ 
& & + \left( 3 \frac{1-\pay{B}}{\pay{B}}  - \frac{1}{\pay{B}} \right) (\pay{A} + \pay{B}) \\
& = & x_B (1- \pay{B})\left( \frac{\pay{A}+\pay{B}}{\pay{B}} - 2 \right)    + \left( \frac{2-3\pay{B}}{\pay{B}} \right) (\pay{A} + \pay{B}) \\
& \geq & x_B (1- \pay{B})\left( \frac{\pay{A}-\pay{B}}{\pay{B}} \right)  \\
& > & \theta \hfill \qedhere \\
\end{array}
\]
\end{proof}

\begin{claim}\label{claim:B}
Let $(X_{\one},X_{\two})$ be any Nash equilibrium of $G$ that satisfies \eqref{eq:A} in Claim~\ref{claim:A}, then \eqref{eq:B} below holds.

\begin{equation}
\text{For every }e \in E(H)\text{ the vertex }r_e\text{ has a neighbour in }V' \cap X_{\one}. \label{eq:B}
\end{equation}
\end{claim}

\begin{proof}[Proof of Claim~\ref{claim:B}] Recall that by (A) in Claim~A we have  $V(C_A) \cup R \subseteq X_{\one}$ and $V(C_B) \subseteq X_{\two}$.
For the sake of contradiction assume that there exists a $r_e \in R$ such that all three neighbours of $r_e$ in $V'$ belong to $X_{\two}$. 
We will now show that $r_e$ does not satisfy the Nash equilibrium property.

The vertex $r_e$ currently has $x_A$ neighbours in $X_{\one}$ and $x_B+3$ neighbours in $X_{\two}$. So its current welfare is
$x_A \pay{B}$, but if we move $r_e$ to $X_{\two}$ then its welfare will become $(x_B+3)(1-\pay{B})$. By \ref{constant_c}, the following holds.

\[
(x_B+3)(1-\pay{B}) - x_A \pay{B}  >  (x_B+3)(1-\pay{B}) - \left((x_B+3)\frac{1-\pay{B}}{\pay{B}}\right) \pay{B} = 0 
\]

So, $r_e$ does not satisfy the Nash equilibrium property, a contradiction.
\end{proof}

\begin{claim}\label{claim:C}
Let $(X_{\one},X_{\two})$ be any partition of $V(G)$ that satisfies \eqref{eq:A} in Claim~\ref{claim:A}, \eqref{eq:B} in Claim~\ref{claim:B}, and \eqref{eq:C} below.
Then $(X_{\one},X_{\two})$ is a Nash equilibrium.
\begin{equation}
\text{For every } u \in V(H), \text{ either } \{u'\} \cup Z_u \subseteq X_{\one} \text{ or } \{u'\} \cup Z_u \subseteq X_{\two}. \label{eq:C}
\end{equation}
\end{claim}

\begin{proof}[Proof of Claim~\ref{claim:C}]
As vertices in $Z$ are only adjacent to vertices in the same partite set in $(X_{\one},X_{\two})$ as themselves they
do satisfy the Nash equilibrium condition. By Lemma~\ref{NashEq} we therefore need to show that the following holds for all $v_A \in V(C_A)$ and all $v_B \in V(C_B)$ and
all $r_e \in R$ and all $u' \in V'$ and all $z \in Z$.

\begin{itemize}
\item $\frac{|N(v_A) \cap X_{\one}|}{|N(v_A) \cap X_{\two}|} \geq \frac{1-\pay{A}}{\pay{A}}$ or $N(v_A) \cap X_{\two}=\emptyset$ (as $v_A \in A \cap X_{\one}$).
\item $\frac{|N(v_B) \cap X_{\two}|}{|N(v_B) \cap X_{\one}|} \geq \frac{\pay{B}}{1-\pay{B}}$ or $N(v_B) \cap X_{\one}=\emptyset$ (as $v_B \in B \cap X_{\two}$).
\item $\frac{|N(r_e) \cap X_{\one}|}{|N(r_e) \cap X_{\two}|} \geq \frac{1-\pay{B}}{\pay{B}}$ or $N(r_e) \cap X_{\two}=\emptyset$ (as $r_e \in B \cap X_{\one}$).
\item If $u' \in X_{\one}$, then $\frac{|N(u') \cap X_{\one}|}{|N(u') \cap X_{\two}|} \geq \frac{1-\pay{B}}{\pay{B}}$ or $N(u') \cap X_{\two}=\emptyset$ (as $u' \in B \cap X_{\one}$).
\item If $u' \in X_{\two}$, then $\frac{|N(u') \cap X_{\two}|}{|N(u') \cap X_{\one}|} \geq \frac{\pay{B}}{1-\pay{B}}$ or $N(u') \cap X_{\one}=\emptyset$ (as $u' \in B \cap X_{\two}$).
\end{itemize}

The first inquality holds as $N(v_A) \cap X_{\two}=\emptyset$. The second inequality holds due to the following (see \ref{constant_e}):
\[
\frac{|N(v_B) \cap X_{\two}|}{|N(v_B) \cap X_{\one}|} \geq  \frac{|V(C_B)|-1}{|E(H)|} \geq  \frac{\pay{B}}{1-\pay{B}}
\]
The third inequality holds because of the following (see \ref{constant_c}):
\[
\begin{array}{rcl}
\frac{|N(r_e) \cap X_{\one}|}{|N(r_e) \cap X_{\two}|} 
& \geq & \frac{x_A+1}{x_B+2} \\
& > & \frac{(x_B+3)\frac{1-\pay{B}}{\pay{B}} - \frac{1}{\pay{B}} +1}{x_B+2} \\
& = & \frac{(x_B+3)(1-\pay{B}) - 1 + \pay{B}}{\pay{B}(x_B+2)} \\
& = & \frac{1-\pay{B}}{\pay{B}} \\
\end{array}
\]

If $u' \in V' \cap X_{\one}$ then all neighbours of $u'$ belong to $X_{\one}$, so $N(u') \cap X_{\two}=\emptyset$, which satisfies the fourth equation.
Let $u' \in V' \cap X_{\two}$ be arbitrary.
By \ref{constant_f} we note that the following holds (as $\pay{B}<1/2$).
\[
z = \lceil 3|E(H)| \frac{1-\pay{B}}{\pay{B}(1-2\pay{B})} \rceil > |E(H)| \frac{\pay{B}}{1-\pay{B}}+1 
\]
 The fifth inequality now holds because of the following:
\[
\frac{|N(u') \cap X_{\two}|}{|N(u') \cap X_{\one}|} \geq \frac{z-1}{|E(H)|} > \frac{\pay{B}}{1-\pay{B}}. \hfill \qedhere
\]
\end{proof}

Let $T$ be any transversal of $H$.
Define a partition $(X_{\one},X_{\two})$ of $V(G)$ as follows. 
For every $u \in T$ let $V(Z_u) \cup \{u'\}$ belong to $X_{\one}$ and for 
every $u \in V(H) \setminus T$ let $V(Z_u) \cup \{u'\}$ belong to $X_{\two}$.
Then add $V(C_A) \cup R$ to $X_{\one}$ and add $V(C_B)$ to $X_{\two}$. 
We call this the {\em $G$-extension} of $T$.

\begin{claim}\label{claim:D}
The $G$-extension of $T$ is a Nash equilibrium with the following welfare, where $\epsilon_T$ is some value satisfying $0 \leq \epsilon_T  < 2z (1 - 2\pay{B})$.
\[
W^* + \epsilon_T +|V(H)|\cdot z (2(1-\pay{B})) - |T| \cdot  2z (1 - 2\pay{B})
\]
\end{claim}

\begin{proof}[Proof of Claim~\ref{claim:D}]
Let $(X_{\one},X_{\two})$ be the $G$-extension of a transversal $T$ in $H$.
As conditions (A), (B) and (C) are all satisfied in Claims~A, B and C, respectively we note that by Claim~C 
$(X_{\one},X_{\two})$ is a Nash equilibrium.

By Claim~A we note that $E_1 \cup E_2$ contribute $W^*$ towards the welfare of $(X_{\one},X_{\two})$.
Let $\epsilon_T$ be the welfare that the edges $E_3$ contribute and note that $0 \leq \epsilon_T \leq 2|E_3| = 6|E(H)|$. 
By \ref{constant_f}, $z= \lceil 3|E(H)| \frac{1-\pay{B}}{\pay{B}(1-2\pay{B})} \rceil > \frac{3|E(H)|}{1-2\pay{B}} $, which implies that
$0 \leq \epsilon_T \leq 6|E(H)| < 2z (1 - 2\pay{B})$. 

We will now compute the contribution to the welfare from the edges in $E_4$. Let $u' \in V'$ be arbitrary.
If $u' \in X_{\one}$ then the edge between $u'$ and $Z_u$ contribute $z(2\pay{B})$ to the welfare.
If $u' \in X_{\two}$ then the edge between $u'$ and $Z_u$ contribute $z(2(1-\pay{B}))$ to the welfare.
As there are $|T|$ vertices in $X_{\one} \cap V'$ we therefore get the following welfare for $(X_{\one},X_{\two})$.
\begin{align*}
\welf{(X_{\one},X_{\two})} &= W^* + \epsilon_T +(|V(H)|-|T|)\cdot z (2(1-\pay{B})) + |T| \cdot  2z\pay{B}\\
&= W^* + \epsilon_T +|V(H)|\cdot z (2(1-\pay{B})) - |T| \cdot  2z(1-2\pay{B})\hfill\qedhere
\end{align*}
\end{proof}

%HERE TO

\begin{claim}\label{claim:E}
If we can find the welfare optimum Nash equilibrium of $G$ then we can find the minimum transversal in $H$. This implies that 
finding the welfare optimum Nash equilibrium for $\pay{A}$ and $\pay{B}$ is \NP-hard.
\end{claim}

\begin{proof}[Proof of Claim~\ref{claim:E}] Assume that $(X_{\one}^{opt},X_{\two}^{opt})$ is a welfare optimal Nash equilibrium of $G$ and let
$w(X_{\one}^{opt},X_{\two}^{opt})$ be the welfare of $(X_{\one}^{opt},X_{\two}^{opt})$. Let $T^{opt}$ be a minimum transversal of $H$.
We will show that the following holds, which completes the proof of Claim~\ref{claim:E}.
\[
|T^{opt}| = \left\lceil \frac{ W^* + |V(H)|\cdot z (2-2\pay{B})) - w(X_{\one}^{opt},X_{\two}^{opt}) }{2z(1-2\pay{B})} \right\rceil
\]
Let $(X_{\one},X_{\two})$ be the $G$-extension of $T^{opt}$, and note that, by Claim~\ref{claim:D}, $(X_{\one},X_{\two})$ is a Nash equilibrium satisfying the following,
where $0 \leq \epsilon_{T^{opt}} < 2z(1-2\pay{B})$.
\[
W(X_{\one},X_{\two}) = W^* + \epsilon_{T^{opt}} +|V(H)|\cdot z (2-2\pay{B})) - |T^{opt}| \cdot  2z(1-2\pay{B})
\]
This implies the following as $\epsilon_{T^{opt}} < 2z(1-2\pay{B})$ and $w(X_{\one}^{opt},X_{\two}^{opt}) \geq w(X_{\one},X_{\two})$.
\begin{align}
|T^{opt}| & =  \frac{ W^* + \epsilon_{T^{opt}} + |V(H)|\cdot z (2-2\pay{B})) - w(X_{\one},X_{\two}) }{2z(1-2\pay{B})} \nonumber \\
& =   \left\lceil \frac{ W^* + |V(H)|\cdot z (2-2\pay{B})) - w(X_{\one},X_{\two}) }{2z(1-2\pay{B})} \right\rceil \label{eq:asterix}\\
& \geq   \left\lceil \frac{ W^* + |V(H)|\cdot z (2-2\pay{B})) - w(X_{\one}^{opt},X_{\two}^{opt}) }{2z(1-2\pay{B})} \right\rceil \nonumber
\end{align}

By Claim~\ref{claim:D}, $w(X_{\one}^{opt},X_{\two}^{opt}) \geq w(X_{\one},X_{\two}) > W^*$. 
If conditions (A) in Claim~A is not satisfied for $(X_{\one}^{opt},X_{\two}^{opt})$ then by Claim~A the welfare of the edges $E_1 \cup E_2$ is at most $W^* - \theta$. 
This implies the following (by \ref{constant_f}).
\[
\begin{array}{rcl}
w(X_{\one}^{opt},X_{\two}^{opt}) & \leq & W^* - \theta + 2(|E_3|+|E_4|) \\
 &  =  & W^* - (6|E(H)| + 2z|V(H)|) +  2(3|E(H)| + z \cdot |V(H)|) \\
 &  = &  W^* \\
\end{array}
\]
This contradiction implies that condition (A) in Claim~A must be satisfied for $(X_{\one}^{opt},X_{\two}^{opt})$.
As $(X_{\one}^{opt},X_{\two}^{opt})$ is a Nash equilibrium, Claim~B implies that condition (B) in Claim~B holds.

Let $Y$ contain all vertices $u \in V(H)$ that satisfy $u' \in V' \cap X_{\one}$. By condition~(B) in Claim~B 
we note that $Y$ is a transversal in $H$. 
Let $(Y_{\one},Y_{\two})$ be the $G$-extension of $Y$, and note that, by Claim~D, $(Y_{\one},Y_{\two})$ is a Nash equilibrium satisfying the following,
where $0 \leq \epsilon_{Y} < 2z(1-2\pay{B})$:
\[
w(X_{\one}^{opt},X_{\two}^{opt}) \geq W(Y_{\one},Y_{\two}) = W^* + \epsilon_{Y} +|V(H)|\cdot z (2-2\pay{B})) - |Y| \cdot  2z(1-2\pay{B})
\]
This implies the following (as $\epsilon_{Y} < 2z(1-2\pay{B})$),
\[
\begin{array}{rcl} \vspace{0.2cm}
|T^{opt}| \; \leq \; |Y| & = & \frac{ W^* + \epsilon_{Y} + |V(H)|\cdot z (2-2\pay{B})) - w(X_{\one}^{opt},X_{\two}^{opt}) }{2z(1-2\pay{B})}  \\
& = & \left\lceil \frac{ W^* + |V(H)|\cdot z (2-2\pay{B})) - w(X_{\one}^{opt},X_{\two}^{opt}) }{2z(1-2\pay{B})} \right\rceil \\
\end{array}
\]

By \eqref{eq:asterix} this implies that the following holds, which completes the proof of Claim~E.
\[
|T^{opt}| = \left\lceil \frac{ W^* + |V(H)|\cdot z (2-2\pay{B})) - w(X_{\one}^{opt},X_{\two}^{opt}) }{2z(1-2\pay{B})} \right\rceil. \hfill\qedhere
\]
\end{proof}
This finishes the proof of the theorem.
\renewcommand\qedsymbol{$\square$}
\end{proof}
%%% ----------------------------------------------------------------------
%%% ----------------------------------------------------------------------
%%% ----------------------------------------------------------------------

\newpage
\bibliographystyle{plainnat}
\bibliography{polyEcon.bib}

\begin{thebibliography}{64}
\providecommand{\natexlab}[1]{#1}
\providecommand{\url}[1]{\texttt{#1}}
\expandafter\ifx\csname urlstyle\endcsname\relax
  \providecommand{\doi}[1]{doi: #1}\else
  \providecommand{\doi}{doi: \begingroup \urlstyle{rm}\Url}\fi

\bibitem[Aloisio et~al.(2021)Aloisio, Flammini, Kodric, and Vinci]{aloisio2021distance}
Alessandro Aloisio, Michele Flammini, Bojana Kodric, and Cosimo Vinci.
\newblock Distance polymatrix coordination games.
\newblock In \emph{IJCAI}, pages 3--9, 2021.

\bibitem[Armstrong and Vickers(2001)]{ArmstrongVickers:2001:RAND}
Mark Armstrong and John Vickers.
\newblock Competitive price discrimination.
\newblock \emph{The RAND Journal of Economics}, 32\penalty0 (4):\penalty0 579--605, 2001.
\newblock ISSN 07416261.
\newblock URL \url{http://www.jstor.org/stable/2696383}.

\bibitem[Arthur(1989)]{Arthur:1989:EJ}
W~Brian Arthur.
\newblock Competing technologies, increasing returns, and lock-in by historical events.
\newblock \emph{Economic Journal}, 99\penalty0 (394):\penalty0 116--131, March 1989.
\newblock URL \url{http://ideas.repec.org/a/ecj/econjl/v99y1989i394p116-31.html}.

\bibitem[Bang{-}Jensen and Gutin(2009)]{Bang-JansenGutinBook}
J{\o}rgen Bang{-}Jensen and Gregory~Z. Gutin.
\newblock \emph{Digraphs - Theory, Algorithms and Applications, Second Edition}.
\newblock Springer Monographs in Mathematics. Springer, 2009.
\newblock ISBN 978-1-84800-997-4.

\bibitem[Barman et~al.(2015)Barman, Ligett, and Piliouras]{barman2015approximating}
Siddharth Barman, Katrina Ligett, and Georgios Piliouras.
\newblock Approximating nash equilibria in tree polymatrix games.
\newblock In \emph{SAGT}, pages 285--296, 2015.

\bibitem[Blonski(1999)]{Blonski:1999:GEB}
Matthias Blonski.
\newblock Anonymous games with binary actions.
\newblock \emph{Games and Economic Behavior}, 28\penalty0 (2):\penalty0 171--180, 1999.
\newblock ISSN 0899-8256.
\newblock \doi{https://doi.org/10.1006/game.1998.0699}.
\newblock URL \url{https://www.sciencedirect.com/science/article/pii/S0899825698906996}.

\bibitem[Blonski(2000)]{Blonski:2000:JME}
Matthias Blonski.
\newblock Characterization of pure strategy equilibria in finite anonymous games.
\newblock \emph{Journal of Mathematical Economics}, 34\penalty0 (2):\penalty0 225--233, 2000.
\newblock ISSN 0304-4068.
\newblock \doi{https://doi.org/10.1016/S0304-4068(99)00008-7}.
\newblock URL \url{https://www.sciencedirect.com/science/article/pii/S0304406899000087}.

\bibitem[Blume(1993)]{Blume:1993:GEB}
Lawrence~E. Blume.
\newblock The statistical mechanics of strategic interaction.
\newblock \emph{Games and Economic Behavior}, 5\penalty0 (3):\penalty0 387 -- 424, 1993.
\newblock ISSN 0899-8256.
\newblock \doi{DOI: 10.1006/game.1993.1023}.
\newblock URL \url{https://www.sciencedirect.com/science/article/pii/S0899825683710237}.

\bibitem[Boodaghians et~al.(2020)Boodaghians, Kulkarni, and Mehta]{DBLP:conf/innovations/BoodaghiansKM20}
Shant Boodaghians, Rucha Kulkarni, and Ruta Mehta.
\newblock Smoothed efficient algorithms and reductions for network coordination games.
\newblock In \emph{{ITCS}}, volume 151, pages 73:1--73:15, 2020.
\newblock \doi{10.4230/LIPIcs.ITCS.2020.73}.
\newblock URL \url{https://doi.org/10.4230/LIPIcs.ITCS.2020.73}.

\bibitem[Bramoull{\'e}(2007)]{Bramoulle:2007:GEB}
Yann Bramoull{\'e}.
\newblock Anti-coordination and social interactions.
\newblock \emph{Games and Economic Behavior}, 58\penalty0 (1):\penalty0 30--49, 2007.
\newblock URL \url{http://www.sciencedirect.com/science/article/B6WFW-4J4HKBK-3/2/9a927ef100e59dd9d7e8962e1df47a06}.

\bibitem[Cai and Daskalakis(2011)]{CaiDaskalakis:2011:}
Yang Cai and Constantinos Daskalakis.
\newblock On minmax theorems for multiplayer games.
\newblock In \emph{Proc.\ of SODA}, pages 217--234, 2011.

\bibitem[Carbonell-Nicolau and McLean(2014)]{Carbonell-NicolauMcLean:2014:TE}
Oriol Carbonell-Nicolau and Richard~P. McLean.
\newblock Refinements of nash equilibrium in potential games.
\newblock \emph{Theoretical Economics}, 9\penalty0 (3):\penalty0 555--582, 2014.
\newblock \doi{https://doi.org/10.3982/TE1178}.
\newblock URL \url{https://onlinelibrary.wiley.com/doi/abs/10.3982/TE1178}.

\bibitem[Chen et~al.(2009)Chen, Deng, and Teng]{CDT}
Xi~Chen, Xiaotie Deng, and Shang-Hua Teng.
\newblock Settling the complexity of computing two-player {N}ash equilibria.
\newblock \emph{Journal of the ACM}, 56\penalty0 (3):\penalty0 14:1--14:57, 2009.

\bibitem[Conitzer and Sandholm(2008)]{ConitzerSandholm:2008:GEB}
Vincent Conitzer and Tuomas Sandholm.
\newblock New complexity results about nash equilibria.
\newblock \emph{Games and Economic Behavior}, 63\penalty0 (2):\penalty0 621--641, 2008.
\newblock ISSN 0899-8256.
\newblock \doi{https://doi.org/10.1016/j.geb.2008.02.015}.
\newblock URL \url{https://www.sciencedirect.com/science/article/pii/S0899825608000936}.
\newblock Second World Congress of the Game Theory Society.

\bibitem[Cooper and John(1988)]{CooperJohn:1988:QJE}
Russell Cooper and Andrew John.
\newblock {Coordinating Coordination Failures in Keynesian Models*}.
\newblock \emph{The Quarterly Journal of Economics}, 103\penalty0 (3):\penalty0 441--463, 08 1988.
\newblock ISSN 0033-5533.
\newblock \doi{10.2307/1885539}.
\newblock URL \url{https://doi.org/10.2307/1885539}.

\bibitem[Crawford(1991)]{Crawford:1991:GEB}
Vincent~P. Crawford.
\newblock An ``evolutionary'' interpretation of van huyck, battalio, and beil's experimental results on coordination.
\newblock \emph{Games and Economic Behavior}, 3\penalty0 (1):\penalty0 25--59, 1991.
\newblock URL \url{http://www.sciencedirect.com/science/article/B6WFW-4CYGSTV-30/2/376135e70e4dc1de7b4f476f60f7a859}.

\bibitem[Daskalakis et~al.(2009)Daskalakis, Goldberg, and Papadimitriou]{DGP}
Constantinos Daskalakis, Paul~W. Goldberg, and Christos~H. Papadimitriou.
\newblock The complexity of computing a {N}ash equilibrium.
\newblock \emph{{SIAM} J. Comput.}, 39\penalty0 (1):\penalty0 195--259, 2009.

\bibitem[Deligkas et~al.(2016)Deligkas, Fearnley, Igwe, and Savani]{DeligkasFIS16-experimental}
Argyrios Deligkas, John Fearnley, Tobenna~Peter Igwe, and Rahul Savani.
\newblock An empirical study on computing equilibria in polymatrix games.
\newblock In \emph{Proc. of {AAMAS}}, pages 186--195, 2016.
\newblock URL \url{http://dl.acm.org/citation.cfm?id=2936955}.

\bibitem[Deligkas et~al.(2017)Deligkas, Fearnley, and Savani]{DeligkasFS17-constrained}
Argyrios Deligkas, John Fearnley, and Rahul Savani.
\newblock Computing constrained approximate equilibria in polymatrix games.
\newblock In \emph{Proc. of {SAGT}}, volume 10504, pages 93--105, 2017.
\newblock \doi{10.1007/978-3-319-66700-3\_8}.
\newblock URL \url{https://doi.org/10.1007/978-3-319-66700-3\_8}.

\bibitem[Deligkas et~al.(2020)Deligkas, Fearnley, and Savani]{DeligkasFS20-tree-polymatrix}
Argyrios Deligkas, John Fearnley, and Rahul Savani.
\newblock Tree polymatrix games are ppad-hard.
\newblock In \emph{Proc. of {ICALP}}, volume 168, pages 38:1--38:14, 2020.
\newblock \doi{10.4230/LIPIcs.ICALP.2020.38}.
\newblock URL \url{https://doi.org/10.4230/LIPIcs.ICALP.2020.38}.

\bibitem[Deligkas et~al.(2022)Deligkas, Fearnley, Hollender, and Melissourgos]{DFHM22-focs}
Argyrios Deligkas, John Fearnley, Alexandros Hollender, and Themistoklis Melissourgos.
\newblock Pure-circuit: Strong inapproximability for {PPAD}.
\newblock In \emph{Proc.\ of {FOCS}}, pages 159--170. {IEEE}, 2022.
\newblock \doi{10.1109/FOCS54457.2022.00022}.
\newblock URL \url{https://doi.org/10.1109/FOCS54457.2022.00022}.

\bibitem[Deligkas et~al.(2023)Deligkas, Eiben, Gutin, Neary, and Yeo]{DeligkasEiben:2023:arXiv}
Argyrios Deligkas, Eduard Eiben, Gregory~Z. Gutin, Philip~R. Neary, and Anders Yeo.
\newblock Complexity dichotomies for the maximum weighted digraph partition problem.
\newblock \emph{CoRR}, abs/2307.01109, 2023.
\newblock \doi{10.48550/arXiv.2307.01109}.
\newblock URL \url{https://doi.org/10.48550/arXiv.2307.01109}.

\bibitem[Diamond and Dybvig(1983)]{DiamondDybvig:1983:JPE}
Douglas~W. Diamond and Philip~H. Dybvig.
\newblock Bank runs, deposit insurance, and liquidity.
\newblock \emph{Journal of Political Economy}, 91\penalty0 (3):\penalty0 pp. 401--419, 1983.
\newblock ISSN 00223808.
\newblock URL \url{http://www.jstor.org/stable/1837095}.

\bibitem[Eaves(1973)]{eaves1973polymatrix}
B~Curtis Eaves.
\newblock Polymatrix games with joint constraints.
\newblock \emph{SIAM Journal on Applied Mathematics}, 24\penalty0 (3):\penalty0 418--423, 1973.

\bibitem[Elkind et~al.(2006)Elkind, Goldberg, and Goldberg]{EGG-elkind2006nash}
Edith Elkind, Leslie~Ann Goldberg, and Paul Goldberg.
\newblock Nash equilibria in graphical games on trees revisited.
\newblock In \emph{Proc. of {EC}}, pages 100--109, 2006.

\bibitem[Ellison(1993)]{Ellison:1993:E}
Glenn Ellison.
\newblock Learning, local interaction, and coordination.
\newblock \emph{Econometrica}, 61\penalty0 (5):\penalty0 1047--1071, 1993.
\newblock ISSN 00129682.
\newblock URL \url{http://www.jstor.org/stable/2951493}.

\bibitem[Farrell and Saloner(1985)]{FarrellSaloner:1985:RJE}
Joseph Farrell and Garth Saloner.
\newblock Standardization, compatibility, and innovation.
\newblock \emph{RAND Journal of Economics}, 16\penalty0 (1):\penalty0 70--83, Spring 1985.
\newblock URL \url{http://ideas.repec.org/a/rje/randje/v16y1985ispringp70-83.html}.

\bibitem[Foster and Young(1990)]{FosterYoung:1990:TPB}
D.~Foster and P.~Young.
\newblock Stochastic evolutionary game dynamics.
\newblock \emph{Theoretical Population Biology}, 38:\penalty0 219--232, 1990.
\newblock URL \url{citeseer.ist.psu.edu/foster90stochastic.html}.

\bibitem[Foster and Young(1998)]{FosterYoung:1998:GEB}
Dean~P Foster and H.Peyton Young.
\newblock On the nonconvergence of fictitious play in coordination games.
\newblock \emph{Games and Economic Behavior}, 25\penalty0 (1):\penalty0 79--96, 1998.
\newblock ISSN 0899-8256.
\newblock \doi{https://doi.org/10.1006/game.1997.0626}.
\newblock URL \url{https://www.sciencedirect.com/science/article/pii/S0899825697906266}.

\bibitem[Garey and Johnson(1979)]{GareyJohnson:1979:book}
Michael~R Garey and David~S Johnson.
\newblock \emph{Computers and intractability}.
\newblock A Series of Books in the Mathematical Sciences. Freeman and Co, 1979.
\newblock ISBN 0-7167-1045-5.

\bibitem[Gilboa and Zemel(1989)]{GilboaZemel:1989:GEB}
Itzhak Gilboa and Eitan Zemel.
\newblock Nash and correlated equilibria: Some complexity considerations.
\newblock \emph{Games and Economic Behavior}, 1\penalty0 (1):\penalty0 80--93, 1989.
\newblock ISSN 0899-8256.
\newblock \doi{https://doi.org/10.1016/0899-8256(89)90006-7}.
\newblock URL \url{https://www.sciencedirect.com/science/article/pii/0899825689900067}.

\bibitem[Granovetter(1978)]{Granovetter:1978:AJS}
Mark Granovetter.
\newblock Threshold models of collective behavior.
\newblock \emph{American Journal of Sociology}, 83\penalty0 (6):\penalty0 1420--1443, 1978.
\newblock ISSN 00029602, 15375390.
\newblock URL \url{http://www.jstor.org/stable/2778111}.

\bibitem[Hellwig(2003)]{Hellwig:2003:RES}
Martin~F. Hellwig.
\newblock {Public-Good Provision with Many Participants}.
\newblock \emph{The Review of Economic Studies}, 70\penalty0 (3):\penalty0 589--614, 07 2003.
\newblock ISSN 0034-6527.
\newblock \doi{10.1111/1467-937X.00257}.
\newblock URL \url{https://doi.org/10.1111/1467-937X.00257}.

\bibitem[Hofbauer and Sorger(1999)]{HofbauerSorger:1999:JET}
Josef Hofbauer and Gerhard Sorger.
\newblock Perfect foresight and equilibrium selection in symmetric potential games.
\newblock \emph{Journal of Economic Theory}, 85\penalty0 (1):\penalty0 1--23, 1999.
\newblock ISSN 0022-0531.
\newblock \doi{https://doi.org/10.1006/jeth.1998.2485}.
\newblock URL \url{https://www.sciencedirect.com/science/article/pii/S0022053198924851}.

\bibitem[Howson~Jr(1972)]{howson1972equilibria}
Joseph~T Howson~Jr.
\newblock Equilibria of polymatrix games.
\newblock \emph{Management Science}, 18\penalty0 (5-part-1):\penalty0 312--318, 1972.

\bibitem[Howson~Jr and Rosenthal(1974)]{howson1974bayesian}
Joseph~T Howson~Jr and Robert~W Rosenthal.
\newblock Bayesian equilibria of finite two-person games with incomplete information.
\newblock \emph{Management Science}, 21\penalty0 (3):\penalty0 313--315, 1974.

\bibitem[Janovskaya(1968)]{Janovskaya:1968:}
E.~B. Janovskaya.
\newblock Equilibrium points in polymatrix games.
\newblock \emph{(in Russian), Latvian Mathematical Journal}, 1968.

\bibitem[Kandori et~al.(1993)Kandori, Mailath, and Rob]{KandoriMailath:1993:E}
Michihiro Kandori, George~J. Mailath, and Rafael Rob.
\newblock Learning, mutation, and long run equilibria in games.
\newblock \emph{Econometrica}, 61\penalty0 (1):\penalty0 29--56, 1993.
\newblock ISSN 00129682.
\newblock URL \url{http://www.jstor.org/stable/2951777}.

\bibitem[Katz and Shapiro(1985)]{KatzShapiro:1985:AER}
Michael~L Katz and Carl Shapiro.
\newblock Network externalities, competition, and compatibility.
\newblock \emph{American Economic Review}, 75\penalty0 (3):\penalty0 424--440, June 1985.
\newblock URL \url{http://ideas.repec.org/a/aea/aecrev/v75y1985i3p424-40.html}.

\bibitem[Kearns et~al.(2006)Kearns, Suri, and Montfort]{KearnsSuri:2006:S}
Michael Kearns, Siddharth Suri, and Nick Montfort.
\newblock {An Experimental Study of the Coloring Problem on Human Subject Networks}.
\newblock \emph{Science}, 313\penalty0 (5788):\penalty0 824--827, 2006.
\newblock \doi{10.1126/science.1127207}.
\newblock URL \url{http://www.sciencemag.org/cgi/content/abstract/313/5788/824}.

\bibitem[Korte and Vygen(2011)]{korte2011combinatorial}
Bernhard~H Korte and Jens Vygen.
\newblock \emph{Combinatorial optimization}, volume~1.
\newblock Springer, 2011.

\bibitem[Kreindler and Young(2013)]{KreindlerYoung:2013:GEB}
Gabriel~E. Kreindler and H.~Peyton Young.
\newblock Fast convergence in evolutionary equilibrium selection.
\newblock \emph{Games and Economic Behavior}, 80\penalty0 (0):\penalty0 39 -- 67, 2013.
\newblock ISSN 0899-8256.
\newblock \doi{http://dx.doi.org/10.1016/j.geb.2013.02.004}.
\newblock URL \url{http://www.sciencedirect.com/science/article/pii/S0899825613000262}.

\bibitem[Lim and Neary(2016)]{LimNeary:2016:GEB}
Wooyoung Lim and Philip~R. Neary.
\newblock An experimental investigation of stochastic adjustment dynamics.
\newblock \emph{Games and Economic Behavior}, 100:\penalty0 208--219, 2016.
\newblock ISSN 0899-8256.
\newblock \doi{https://doi.org/10.1016/j.geb.2016.09.010}.
\newblock URL \url{https://www.sciencedirect.com/science/article/pii/S0899825616301099}.

\bibitem[McCubbins et~al.(2009)McCubbins, Paturi, and Weller]{McCubbinsPaturi:2009:}
Mathew~D. McCubbins, Ramamohan Paturi, and Nicholas Weller.
\newblock Connected coordination: Network structure and group coordination.
\newblock \emph{American Politics Research}, 37\penalty0 (5):\penalty0 899--920, 2009.
\newblock \doi{10.1177/1532673X09337184}.
\newblock URL \url{https://doi.org/10.1177/1532673X09337184}.

\bibitem[Miller and Zucker(1991)]{miller1991copositive}
Douglas~A Miller and Steven~W Zucker.
\newblock Copositive-plus lemke algorithm solves polymatrix games.
\newblock \emph{Operations research letters}, 10\penalty0 (5):\penalty0 285--290, 1991.

\bibitem[Morris(2000)]{Morris:2000:RES}
Stephen Morris.
\newblock Contagion.
\newblock \emph{Review of Economic Studies}, 67\penalty0 (1):\penalty0 57--78, January 2000.
\newblock URL \url{http://www.jstor.org/stable/2567028}.

\bibitem[Myatt and Wallace(2009)]{MyattWallace:2009:EJ}
David~P. Myatt and Chris Wallace.
\newblock Evolution, teamwork and collective action: Production targets in the private provision of public goods.
\newblock \emph{The Economic Journal}, 119\penalty0 (534):\penalty0 61--90, 2009.
\newblock ISSN 00130133, 14680297.
\newblock URL \url{http://www.jstor.org/stable/20485295}.

\bibitem[Nash(1951)]{Nash:1951:AM}
John Nash.
\newblock Non-cooperative games.
\newblock \emph{The Annals of Mathematics}, 54\penalty0 (2):\penalty0 286--295, September 1951.
\newblock URL \url{http://www.jstor.org/stable/1969529}.

\bibitem[Nash(1950)]{Nash:1950:PNASUSA}
John~F. Nash.
\newblock Equilibrium points in n-person games.
\newblock \emph{Proceedings of the National Academy of Sciences of the United States of America}, 36\penalty0 (1):\penalty0 48--49, 1950.
\newblock ISSN 00278424.
\newblock URL \url{http://www.jstor.org/stable/88031}.

\bibitem[Neary(2012)]{Neary:2012:GEB}
Philip~R. Neary.
\newblock Competing conventions.
\newblock \emph{Games and Economic Behavior}, 76\penalty0 (1):\penalty0 301 -- 328, 2012.
\newblock ISSN 0899-8256.
\newblock \doi{10.1016/j.geb.2012.06.003}.
\newblock URL \url{http://www.sciencedirect.com/science/article/pii/S0899825612000875}.

\bibitem[Neary and Newton(2017)]{NearyNewton:2017:JMID}
Philip~R Neary and Jonathan Newton.
\newblock {Heterogeneity in preferences and behavior in threshold models}.
\newblock \emph{The Journal of Mechanism and Institution Design}, 2\penalty0 (1):\penalty0 141--159, December 2017.
\newblock \doi{10.22574/jmid.2017.12.005}.
\newblock URL \url{https://ideas.repec.org/a/jmi/articl/jmi-v2i1a5.html}.

\bibitem[Neary(2011)]{Neary:2011:MGG}
Philip~Ruane Neary.
\newblock \emph{Multiple-Group Games}.
\newblock PhD thesis, University of California, San Diego, 2011.

\bibitem[Newton and Sercombe(2020)]{NewtonSercombe:2020:GEB}
Jonathan Newton and Damian Sercombe.
\newblock Agency, potential and contagion.
\newblock \emph{Games and Economic Behavior}, 119:\penalty0 79--97, 2020.
\newblock ISSN 0899-8256.
\newblock \doi{https://doi.org/10.1016/j.geb.2019.10.007}.
\newblock URL \url{https://www.sciencedirect.com/science/article/pii/S0899825619301538}.

\bibitem[Ortiz and Irfan(2017)]{ortiz2017tractable}
Luis Ortiz and Mohammad Irfan.
\newblock Tractable algorithms for approximate nash equilibria in generalized graphical games with tree structure.
\newblock In \emph{Proc. of {AAAI}}, volume~31, 2017.

\bibitem[Ostrovsky and Schwarz(2005)]{OstrovskySchwarz:2005:RAND}
Michael Ostrovsky and Michael Schwarz.
\newblock Adoption of standards under uncertainty.
\newblock \emph{The RAND Journal of Economics}, 36\penalty0 (4):\penalty0 816--832, 2005.
\newblock ISSN 07416261, 17562171.
\newblock URL \url{http://www.jstor.org/stable/4135258}.

\bibitem[Oyama and Takahashi(2020)]{OyamaTakahashi:2020:E}
Daisuke Oyama and Satoru Takahashi.
\newblock Generalized belief operator and robustness in binary-action supermodular games.
\newblock \emph{Econometrica}, 88\penalty0 (2):\penalty0 693--726, 2020.
\newblock \doi{https://doi.org/10.3982/ECTA17237}.
\newblock URL \url{https://onlinelibrary.wiley.com/doi/abs/10.3982/ECTA17237}.

\bibitem[Peski(2010)]{Peski:2010:JET}
Marcin Peski.
\newblock Generalized risk-dominance and asymmetric dynamics.
\newblock \emph{Journal of Economic Theory}, 145\penalty0 (1):\penalty0 216--248, January 2010.
\newblock URL \url{http://ideas.repec.org/a/eee/jetheo/v145y2010i1p216-248.html}.

\bibitem[Peski(2021)]{Peski:2021:arXiv}
Marcin Peski.
\newblock Fuzzy conventions, 2021.
\newblock URL \url{https://arxiv.org/abs/2108.13474}.

\bibitem[Schelling(1960)]{Schelling:1960:}
Thomas~C. Schelling.
\newblock \emph{The Strategy of Conflict}.
\newblock Harvard University Press, 1960.

\bibitem[Shapley and Monderer(1996)]{ShapleyMonderer:1996:GEB}
Lloyd~S. Shapley and Dov Monderer.
\newblock Potential games.
\newblock \emph{Games and Economic Behavior}, 14:\penalty0 124--143, 1996.

\bibitem[Ui(2001)]{Ui:2001:E}
Takashi Ui.
\newblock Robust equilibria of potential games.
\newblock \emph{Econometrica}, 69\penalty0 (5):\penalty0 1373--1380, 2001.
\newblock ISSN 00129682.
\newblock URL \url{http://www.jstor.org/stable/2692226}.

\bibitem[Van~Huyck et~al.(1990)Van~Huyck, Battalio, and Beil]{Van-HuyckBattalio:1990:AER}
John~B Van~Huyck, Raymond~C Battalio, and Richard~O Beil.
\newblock Tacit coordination games, strategic uncertainty, and coordination failure.
\newblock \emph{American Economic Review}, 80\penalty0 (1):\penalty0 234--48, March 1990.
\newblock URL \url{http://ideas.repec.org/a/aea/aecrev/v80y1990i1p234-48.html}.

\bibitem[Young(1993)]{Young:1993:E}
H.~Peyton Young.
\newblock The evolution of conventions.
\newblock \emph{Econometrica}, 61\penalty0 (1):\penalty0 57--84, 1993.
\newblock ISSN 00129682.
\newblock URL \url{http://www.jstor.org/stable/2951778}.

\bibitem[Young(2001)]{Young:2001:}
H.~Peyton Young.
\newblock \emph{Individual Strategy and Social Structure: An Evolutionary Theory of Institutions}.
\newblock Princeton, NJ: Princeton University Press, January 2001.
\newblock ISBN 0691086877.
\newblock URL \url{http://www.worldcat.org/isbn/0691086877}.

\end{thebibliography}

%%% ----------------------------------------------------------------------
%%% ----------------------------------------------------------------------
%%% ----------------------------------------------------------------------

\end{document}